\definecolor{webgreen}{rgb}{0,0.4,0}
\definecolor{webbrown}{rgb}{0.6,0,0}
\definecolor{purple}{rgb}{0.5,0,0.25}
\definecolor{darkblue}{rgb}{0,0,0.7}
\definecolor{darkred}{rgb}{0.7,0,0}
\definecolor{darkgreen}{rgb}{0,0.7,0}
\newcommand{\ignore}[1]{}
\newtheorem{lemma}{{\sc Lemma}}
\newtheorem{prop}{{\sc Proposition}}
\newtheorem{cor}{{\sc Corollary}}
\newtheorem{theorem}{{\sc Theorem}}
\newtheorem{defn}{{\sc Definition}}
\newtheorem{claim}{{\sc Claim}}
\newtheorem{example}{{\sc Example}}
\newtheorem{fact}{{\sc Fact}}
\newenvironment{proof}{\noindent {\bf \sl Proof\/}:\enspace}
{\hfill $\blacksquare{}$ \vspace{12pt}}
\newenvironment{exam}{\noindent \enspace}
{\hfill $\lozenge$ \vspace{12pt}}
\title{{\Large {\bf Pareto efficient combinatorial auctions:} \\
{\bf dichotomous preferences without quasilinearity}}\thanks{We thank two anonymous referees and the associate editor for their detailed comments. We are grateful to
Krishnendu Ghosh Dastidar, Tomoya Kazumura, Dilip Mookherjee, Kolagani Paramahamsa, Arunava Sen, Shigehiro Serizawa and seminar participants
at ISI Delhi for useful comments.}}
\author{{\small
Komal Malik and Debasis Mishra}\thanks{Komal Malik, Indian Statistical Institute, New Delhi, Email: \texttt{komal.malik1815r@isid.ac.in};
Debasis Mishra, Indian Statistical Institute, New Delhi, Email: \texttt{dmishra@isid.ac.in}}}
\date{\small{\today}}
\begin{document}
\maketitle

\begin{abstract}

We consider a combinatorial auction model where preferences of agents over bundles of objects and payments need not be quasilinear.
However, we restrict the preferences of agents to be dichotomous.
An agent with dichotomous preference partitions the set of bundles of objects as {\em acceptable} and {\em unacceptable}, and at the same
payment level, she is indifferent between bundles in each class but strictly prefers acceptable to unacceptable bundles.
We show that there is no Pareto efficient,
dominant strategy incentive compatible (DSIC), individually rational (IR) mechanism satisfying no subsidy if the domain of preferences
includes {\em all} dichotomous preferences.
However, a generalization of the VCG mechanism is Pareto efficient, DSIC, IR and satisfies no subsidy
if the domain of preferences contains only {\em positive income effect} dichotomous preferences.
We show tightness of this result: adding any non-dichotomous
preference (satisfying some natural properties) to the domain of quasilinear dichotomous preferences brings back the impossibility result. \\

\noindent {\sc JEL Codes: } D82, D90 \\

\noindent {\sc Keywords}: combinatorial auctions; non-quasilinear preferences; dichotomous preferences; single-minded bidders

\end{abstract}

\section{Introduction}

The Vickrey-Clarke-Groves (VCG) mechanism~\citep{Vickrey61,Clarke71,Groves73} occupies a central role in mechanism design theory (specially, with
private values).
It satisfies two fundamental desiderata: it is dominant strategy incentive compatible (DSIC) and Pareto efficient.
We study a model of combinatorial auctions, where multiple objects are sold to agents simultaneously, who may
buy any bundle of objects. For such combinatorial auction models, the VCG mechanism and its indirect implementations (like ascending
price auctions)
have been popular.
The VCG mechanism is also individually rational (IR) and satisfies no subsidy (i.e., does not subsidize any agent) in these models.

Unfortunately, these desirable properties of the
VCG mechanism critically rely on the fact that agents have quasilinear preferences. While analytically convenient and a good approximation of actual
preferences when payments involved are low, quasilinearity is a debatable assumption in practice. For instance, consider an agent
participating in a combinatorial auction for spectrum licenses, where agents often borrow from various investors at non-negligible interest
rates. Such borrowing naturally leads to a preference which is not quasilinear. Further, income effects are ubiquitous in settings with non-negligible payments. For instance, a bidder in a spectrum auction often needs to invest in telecom infrastrastructure to
realize the full value of spectrum. Higher payment in the auction will lead to lower investments in infrastructure, and hence,
a lower value for the spectrum.

This has initiated a small literature in mechanism design theory (discussed later in this section and again in Section \ref{sec:lit}), where
the quasilinearity assumption is relaxed to allow any {\em classical} preference of the agent over consumption bundles: (bundle of objects, payment) pairs.\footnote{
Classical preferences assume mild continuity and monotonicity (in money and bundles of objects) properties of preferences.}
The main research question addressed in this literature is
the following:
\begin{quote}
{\sl In combinatorial auction models, if agents have classical
preferences, is it possible to construct a ``desirable" mechanism: a mechanism which inherits
the DSIC, Pareto efficiency, IR, and no subsidy properties of the VCG mechanism?}
\end{quote}

\subsection{Dichotomous preferences}

This paper contributes to this literature, focusing on the particular case in which agents' preferences
belong to a class of preferences, which we call {\em dichotomous}.
If an agent has a dichotomous preference, she partitions the set of bundles of objects
into {\em acceptable} and {\em unacceptable}. If the payments for all the bundles of objects are the same, then an agent is indifferent between her acceptable bundles of objects; she
is also indifferent between unacceptable bundles of objects; but she prefers every acceptable bundle to every unacceptable bundle.

Such preferences, though restrictive, are found in many settings of interest. For instance, consider the recent ``incentive auction"
done by the US Government~\citep{Leyton17}. It involved a ``reverse auction" phase where the broadcast licenses from existing broadcasters
were bought; a ``forward auction" phase where buyers bought broadcast licenses; and a clearing phase. The auction resulted
in billions of dollars in revenue for US treasury \citep{Leyton17}.
The theoretical analysis
of the reverse auction phase was done by \citet{Milgrom19}, where they assume quasilinear preferences with ``single-minded"
bidders, a specific kind of dichotomous preference
where the bidder has a {\em unique} acceptable bundle (a broadcast band in this case).
In these auctions, a broadcaster had some feasible frequency bands in which it can operate.
Any of those feasible frequency bands were ``acceptable'' and it was indifferent between them (since
any of these frequencies allowed the broadcaster to realize its full value of broadcast).
This resulted in dichotomous preferences of agents.\footnote{Quoting \citet{Milgrom17b},
``Milgrom and Segal (2015) (hereafter MS) offer a theoretical analysis which assumes that all
bidders are single-station owners who know their station values and are ``single-minded", that is, willing
to bid only for a single option. This assumption is reasonable for commercial UHF broadcasters that view VHF
bands as ill-suited for their operations and for non-profit broadcasters that are willing to move for
compensation to a particular VHF band but that view going off-air as incompatible with their mission."}
\citet{Milgrom19} argue that the VCG mechanism
is computationally challenging in this setting and propose a simpler mechanism.

The assumption of dichotomous preferences seems natural in settings where a bidder is acquiring some resources, and finds any bundle
acceptable if it satisfies some requirements.
For instance, consider the following examples.

\begin{itemize}

\item Consider a scheduling problem, where a certain set of jobs (say, flights at the take-off slots
of an airport) need to be scheduled on
a server. There are certain intervals where each job is available and can be processed and other intervals
are not acceptable. For instance, a supplier bidding to supply to a firm's production schedule can do so
only on some fixed interval of dates. So, certain dates are acceptable to it and others are not acceptable.
A traveller is buying tickets between a pair of cities but find certain dates acceptable for travel and realize
value only on those dates.

\item Consider a seller who is selling land to different buyers. The lands differ in size but are homogeneous otherwise.
Each buyer only demands a land of a fixed size. For instance, suppose the Government is allocating land to firms to set up
factories in a region, and each firm needs a land of a fixed size to set up its factory. This means all the bundles
of land exceeding the size requirement are acceptable to a firm.

\item Consider firms (data providers) buying paths on (data) networks~\citep{Babaioff09} - a firm is interested in sending
data from node $x$ to node $y$ on a directed graph whose edges are up for sale, and as long as a bundle of edges
contain a path from $x$ to $y$, it is acceptable to the
firm.

\end{itemize}

In all the examples above, if the payment involved are high, we can expect income effects, which will
mean that agents do not have quasilinear preferences.
One may also consider the dichotomous preference restriction as a behavioural assumption, where the agent does not consider computing values
for each of the exponential number of bundles but classifies the bundles as acceptable and unacceptable.
Hence, they are easy to elicit even in combinatorial auction setting.
Even with quasilinear preferences, the dichotomous restriction poses interesting combinatorial challenges for computing the VCG outcome.
This has led to a large literature in computer science for looking at {\em approximately desirable} VCG-style mechanisms~\citep{Babaioff05,Babaioff09, Lehmann02, Ledyard07, Milgrom14}.
Also related is the literature in matching and social choice theory (models without payments), where dichotomous preferences have been widely studied~\citep{Bogo04,Bogo05,Bade15}.

\subsection{Summary and intuition of results}

We show that
if the domain of preferences contains {\em all} dichotomous classical preferences, there is no desirable mechanism.
However,
a natural generalization of the VCG mechanism to classical preferences, which we call the {\em generalized VCG} (GVCG) mechanism,
is desirable if the domain contains {\em only positive income effect} dichotomous preferences. In other words, when {\em normal} goods
are sold, the GVCG mechanism is desirable. Further,
the GVCG mechanism is the {\em unique} desirable mechanism in any domain of positive income effect dichotomous preferences if it contains
the quasilinear dichotomous preferences. The GVCG mechanism allocates the goods in a way such that the collective {\em willingness to pay} of all the bidders
is maximized. Classical preferences imply that willingness to pay for a bundle of objects depends on the payment level.
Thus, it is not clear what the counterpart of ``valuation" of a bundle of objects is in this setting. Our generalized VCG is defined
by treating the willingness to pay at {\em zero} payment as the ``valuation" of a bundle and then defining the VCG outcome
with respect to these valuations, i.e., the allocation maximizes the sum of agents' valuations and each agent
pays her externality.

The intuition for these results is the following. The GVCG mechanism allocates the goods in a way
that maximizes the collective willingness to pay of all the bidders. In fact, with enough richness
in the domain, every desirable mechanism must allocate objects like the GVCG mechanism at certain
profiles. Individual rationality implies
that winning bidders pay an amount less than their willingness to pay. So, winning makes a
winning bidder wealthier. With dichotomous preferences, the payments in the GVCG mechanism can be quite low.
If bidders have negative income effect, then their {\em willingness to sell} (i.e., the compensating amount
needed to make a winning bidder lose her bundle of objects) is lower than their willingness to pay.
This creates ex-post trading opportunities and the GVCG mechanism is no longer efficient.
On the other hand, with positive income effect, the willingness to sell of winning bidders
is higher than their willingness to pay and the GVCG mechanism is efficient.

Our positive result is tight: we get back impossibility in any domain containing quasilinear
dichotomous preferences and at least one more positive income effect non-dichotomous preference (satisfying some extra reasonable conditions).
Such an additional preference may be a unit-demand preference, where the agent is interested
in at most one object~\citep{Demange85}.
To get an intuition for this result, suppose we consider a domain which contains all quasilinear dichotomous preferences and
one unit-demand positive income effect preference. We know that the GVCG mechanism may not be strategy-proof
in the domain of unit-demand preferences if agents have income effects~\citep{Morimoto15}.
But, we know that in the quasilinear domain with dichotomous preferences, the GVCG mechanism
is the unique desirable mechanism. With two agents having positive income effect unit-demand preference and others having
quasilinear dichotomous preference, we show that the outcome in a desirable mechanism, if it existed, would still have to be the
outcome of the GVCG mechanism. As a result, the agents with positive income effect unit-demand preferences could manipulate
at such preference profiles. This negative result not only establishes the tightness of our
positive result, but also helps to illuminate the bigger picture of possibility and impossibility domains
without quasilinearity.

We briefly connect our results to some relevant results from the literature. A detailed literature survey
is given in Section \ref{sec:lit}.
\citet{Saitoh08} was the first paper to define the generalized VCG mechanism for the
single object auction model. They show
that the generalized VCG mechanism is desirable in their model even if preferences have
{\em negative income effect}. This is in contrast to our model, where we get impossibility with negative income effect
preferences but the generalized VCG mechanism is desirable with positive income effect.

When we go from single object to multiple object combinatorial auctions, the generalized VCG may
fail to be DSIC without quasilinear preferences. For instance, \citet{Demange85} consider a combinatorial auction model where
multiple heterogenous objects are sold but each agent demands at most one object.
In this model, the generalized VCG is no longer DSIC. However, \citet{Demange85}
propose a different mechanism (based on the idea of market-clearing prices), which is desirable.

When agents can demand more than one object in a combinatorial auction model with multiple
heterogeneous objects, \citet{Kazumura16} show that a desirable mechanism
may not exist - this result requires certain richness of the domain of preferences which is violated by our dichotomous
preference model. Similarly, \citet{Baisa17} shows that in the homogeneous objects sale case, if agents demand multiple
units, then a desirable mechanism may not exist -- he requires slightly different axioms than our desirability axioms.

These results point to a conjecture that when agents demand multiple objects in a combinatorial auction model,
a desirable mechanism may not exist. Since ours is a combinatorial auction model where agents can consume multiple objects,
an impossibility result might not seem surprising. However, dichotomous preferences are somewhat
close to the single object model preference. So, it is not clear which intuition dominates.
Our impossibility result with dichotomous preferences
complement the earlier impossibility results, showing that the multi-demand intuition goes through if
we include all possible dichotomous preferences. However, what is surprising is that we recover the
desirability of the generalized VCG mechanism with positive
income effect dichotomous preferences. This shows that {\em not all} multi-demand combinatorial auction models without quasilinearity
are impossibility domains.

\section{Preliminaries}
\label{sec:prelim}

Let $N=\{1,\ldots,n\}$ be the set of agents and $M$ be a set of $m$ objects.
Let $\mathcal{B}$ be the set of all subsets of $M$. We will refer to elements
in $\mathcal{B}$ as {\bf bundles} (of objects).
A seller (or a planner) is selling/allocating bundles from $\mathcal{B}$
to agents in $N$ using payments. We introduce the notion of classical preferences and type spaces corresponding
to them below.

\subsection{Classical Preferences}

Each agent has preference over possible {\em outcomes}, which are pairs of the
form $(A,t)$, where $A \in \mathcal{B}$ is a bundle and $t \in \mathbb{R}$ is the amount paid by the agent.
Let $\mathcal{Z}=\mathcal{B} \times \mathbb{R}$ denote
the set of all outcomes. A preference $R_i$ of agent $i$ over $\mathcal{Z}$
is a complete transitive preference relation with strict part denoted by $P_i$ and indifference
part denoted by $I_i$. This formulation of preference is very general and can
capture wealth effects. For instance, varying levels of transfers will correspond
to varying levels of wealth and this can be captured by our preference
over $\mathcal{Z}$.

We restrict attention to the following class of preferences.
\begin{defn}
Preference $R_i$ of agent $i$ over $\mathcal{Z}$ is {\bf classical} if it satisfies

\begin{enumerate}

\item {\bf Monotonicity}. for each $A,A' \in \mathcal{B}$ with $A' \subseteq A$ and for each $t,t' \in \mathbb{R}$ with $t' > t$,
the following hold:
(i) $(A,t)~P_i~(A,t')$ and (ii) $(A,t)~R_i~(A',t)$.

\item {\bf Continuity}. for each $Z \in \mathcal{Z}$, the upper contour set $\{Z' \in \mathcal{Z}: Z'~R_i~Z\}$ and the lower contour set $\{Z' \in \mathcal{Z}: Z~R_i~Z'\}$ are closed.

\item {\bf Finiteness}. for each $t \in \mathbb{R}$ and for each $A,A' \in \mathcal{B}$, there exist $ t' , t'' \in \mathbb{R}$ such that $(A',t')~R_i~(A,t)$ and $ (A,t)~R_i~(A',t'')$.
\end{enumerate}
\end{defn}
Restricting attention to such classical preferences is standard in mechanism design literature
without quasilinearity~\citep{Demange85,Baisa17,Morimoto15}.
The monotonicity conditions mentioned above are quite natural. The continuity and finiteness are technical conditions
needed to ensure nice structure of the indifference vectors. A quasilinear
preference is always classical, where {\em indifference vectors} are ``parallel".
Notice that the monotonicity condition requires a free-disposal property: at a fixed payment level, every bundle is weakly preferred to every other bundle which is a subset of it.
All our results continue to hold even if we relax this free-disposal property to require that at a fixed payment level, every bundle be weakly preferred to the empty bundle only.

Given a classical preference $R_i$, the {\bf willingness to pay (WP)} of agent $i$ at
$t$ for bundle $A$ is defined as the unique solution $x$ to the following equation:
$$(A,t+x)~I_i~(\emptyset,t).$$
We denote this solution as $WP(A,t;R_i)$. The following fact is immediate from monotonicity, continuity, and finiteness.
\begin{fact}
For every classical preference $R_i$, for every $A \in \mathcal{B}$ and for every $t \in \mathbb{R}$,
$WP(A,t;R_i)$ is a unique non-negative real number.
\end{fact}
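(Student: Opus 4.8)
The plan is to establish, in order, existence, uniqueness, and non-negativity of the solution $x$ to $(A,t+x)~I_i~(\emptyset,t)$, using only the three axioms defining a classical preference.

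For existence I would fix $A$ and $t$ and study the set $U = \{x \in \mathbb{R} : (A,t+x)~R_i~(\emptyset,t)\}$. Monotonicity in money makes $U$ a downward-closed half-line: if $x \in U$ and $x' < x$ then $(A,t+x')~P_i~(A,t+x)~R_i~(\emptyset,t)$, so $x' \in U$. Finiteness, applied with the roles ``$A$''$=\emptyset$ and ``$A'$''$=A$, yields a $t'$ with $(A,t')~R_i~(\emptyset,t)$ (so $U \neq \emptyset$) and a $t''$ with $(\emptyset,t)~R_i~(A,t'')$; monotonicity then forces $x \le t''-t$ for every $x \in U$, so $U$ is bounded above. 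Continuity says the upper contour set of $(\emptyset,t)$ is closed, and since $x \mapsto (A,t+x)$ is continuous, $U$ is closed. Hence $x^{*} := \sup U$ lies in $U$, giving $(A,t+x^{*})~R_i~(\emptyset,t)$. For the reverse comparison, every $x > x^{*}$ lies outside $U$, hence by completeness satisfies $(\emptyset,t)~P_i~(A,t+x)$; letting $x \downarrow x^{*}$ and invoking closedness of the lower contour set of $(\emptyset,t)$ gives $(\emptyset,t)~R_i~(A,t+x^{*})$. Combining the two comparisons yields $(A,t+x^{*})~I_i~(\emptyset,t)$, so $x^{*}$ is a solution.

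Uniqueness is then immediate: if $x_1 < x_2$ were both solutions, transitivity would give $(A,t+x_1)~I_i~(A,t+x_2)$, contradicting the strict ranking $(A,t+x_1)~P_i~(A,t+x_2)$ guaranteed by monotonicity in money. Non-negativity also follows quickly: if $x^{*} < 0$, monotonicity in money gives $(A,t+x^{*})~P_i~(A,t)$, while monotonicity in bundles (using $\emptyset \subseteq A$) gives $(A,t)~R_i~(\emptyset,t)$; transitivity then yields $(A,t+x^{*})~P_i~(\emptyset,t)$, contradicting $(A,t+x^{*})~I_i~(\emptyset,t)$.

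The only mildly delicate point is the existence argument, specifically checking that $U$ is simultaneously nonempty, bounded above, and closed, so that its supremum is attained and is the indifference point; the rest is a short consequence of monotonicity, completeness, and transitivity. This is essentially the standard construction of a money-metric representation for a continuous monotone preference on $\mathcal{Z}$, here anchored at the base bundle $\emptyset$.
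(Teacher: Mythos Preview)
Your argument is correct. The paper does not supply a proof of this fact at all; it simply asserts that it is ``immediate from monotonicity, continuity, and finiteness,'' so your proposal is exactly the kind of routine verification the authors are gesturing at, and it uses precisely those three ingredients in the expected way.
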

For quasilinear preference, $WP(A,t;R_i)$ is independent of $t$ and represents the valuation for bundle $A$.

Another way to represent a classical preference is by a collection of indifference vectors. Fix a classical preference $R_i$.
Then, by definition, for every $t \in \mathbb{R}$ and for every $A \in \mathcal{B}$,
agent $i$ with classical preference $R_i$ will be indifferent between the following outcomes:
$$(\emptyset,t)~I_i~(A,t+WP(A,t;R_i)).$$

Figure \ref{fig:class} shows a representation of classical preference for three objects $\{a,b,c\}$.
The horizontal lines correspond to payment levels for each of the bundles. Hence, these lines are
the set of all outcomes $Z$ -- the space between these eight lines have no meaning and are
kept only for ease of illustration. As we go to the right along any of these lines, the outcomes become worse
since the payment (payment made by the agent) increases. Figure \ref{fig:class} shows eight points, each
corresponding to a unique bundle and a payment level for that bundle. These points are joined
to show that the agent is indifferent between these outcomes for a classical preference. Classical preference
implies that all the points to the left of this indifference vector are better than these outcomes and all the points
to the right of this indifference vector are worse than these outcomes. Indeed, every classical preference can be represented
by a collection of an infinite number of such indifference vectors. \\
\begin{figure}[!hbt]
\centering
\includegraphics[width=4.5in]{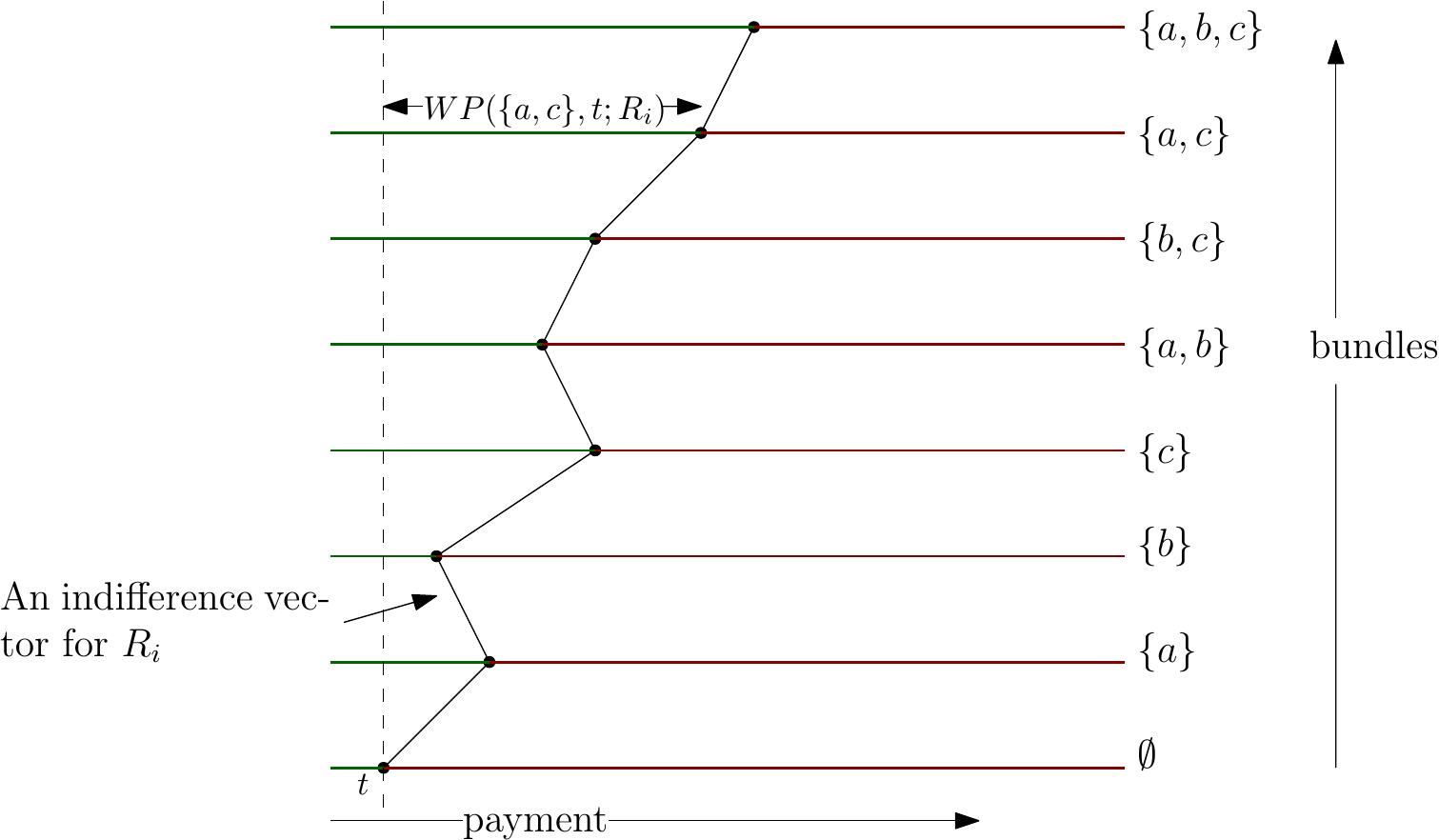}
\caption{Representation of classical preferences}
\label{fig:class}
\end{figure}

\subsection{Domains and mechanisms}

A {\em bundle allocation} is an ordered sequence of objects $(A_1,\ldots,A_n)$, where
$A_i$ denotes the bundle allocated to agent $i$, such that for each
$A_i,A_j \in \mathcal{B}$, we have $A_i \cap  A_j= \emptyset$ - note that $A_i$ can be equal to $\emptyset$ for any $i$ in an object
allocation.  Let $\mathcal{X}$ denote the set of all bundle allocations.

An outcome profile $((A_1,t_1),\ldots,(A_n,t_n))$ is a collection of $n$ outcomes
such that $(A_1,\ldots,A_n)$ is the bundle allocation and $t_i$ denotes the payment made by agent $i$.
An outcome profile $((A_1,t_1),\ldots,(A_n,t_n))$ is {\bf Pareto efficient} at $R \equiv (R_1,\ldots, R_n)$, if
there does not exist another outcome profile $((A'_1,t'_1),\ldots,(A'_n,t'_n))$ such that
\begin{enumerate}
\item for each $ i \in N, (A_i',t_i')~R_i~(A_i,t_i)$,
\item $\sum_{i \in N} t_i' \geq \sum_{i \in N}t_i$,
\end{enumerate}
with one of the inequalities strictly satisfied. The first relation says that each agent $i$ prefers
$(A'_i,t'_i)$ to $(A_i,t_i)$. The second relation requires that the seller is not spending money to make
everyone better off. Without the second relation, we can always improve any outcome profile by subsidizing the agents.\footnote{Our efficiency definition says that
the agents and the designer cannot improve using an outcome profile, which may involve negative
payments. Later, we
impose no-subsidy as an axiom for our mechanism.
The way to think about this is that Pareto efficient improvements are outside the
mechanism and may involve one agent or the designer ``buying" a bundle of objects from another
agent by compensating (negative payment) her.}

A {\bf domain or type space} is any subset of classical preferences. A typical domain of preferences will be denoted by $\mathcal{T}$.
A {\bf mechanism} is a pair $(f,\mathbf{p})$, where $f:\mathcal{T}^n \rightarrow \mathcal{X}$ and
$\mathbf{p} \equiv (p_1,\ldots,p_n)$ is a collection of payment rules with each $p_i:\mathcal{T}^n \rightarrow \mathbb{R}$.
Here, $f$ is the bundle allocation rule
and $p_i$ is the payment rule of agent $i$.
We denote the bundle allocated to agent $i$ at type profile $R$ by $f_i(R) \in \mathcal{B}$ in the bundle allocation rule $f$.

We require the following properties from a mechanism, which we term desirable.
\begin{defn}[Desirable mechanisms]
A mechanism $(f,\mathbf{p})$ is {\bf desirable} if

\begin{enumerate}
\item it is {\bf dominant strategy incentive compatible (DSIC)}: for all $i \in N$,
for all $R_{-i} \in \mathcal{T}^{n-1}$, and for all $R_i,R'_i \in \mathcal{T}$, we have
$$\Big(f_i(R),p_i(R)\Big)~R_i~\Big(f_i(R_i',R_{-i}),p_i(R'_i,R_{-i})\Big).$$

\item it is {\bf Pareto efficient}: $\Big((f_1(R),p_1(R)),\ldots,(f_n(R),p_n(R))\Big)$ is
Pareto efficient at $R$, for all $R \in \mathcal{T}^n$.

\item it is {\bf individually rational (IR)}: for all $R \in \mathcal{T}^n$ and for all $i \in N$,
$$\Big(f_i(R),p_i(R)\Big)~R_i~(\emptyset,0).$$

\item satisfies {\bf no subsidy}: for all $R \in \mathcal{T}^n$ and for all $ i \in N$,
$$p_i(R) \geq 0.$$

\end{enumerate}
\end{defn}

We will explore domains where a desirable mechanism exists. DSIC, Pareto efficiency,
and IR are standard constraints in mechanism design. No subsidy is
debatable. Our motivation for considering it as desirable stems from the fact that
most auction formats in practice and the VCG mechanism satisfy it. It also
discourages {\em fake} buyers from participating in the mechanism.

\subsection{A motivating example}
\label{sec:mot}

In this section, we provide an example to give some intuition for one of our main results.
\begin{example}
\label{ex:ex1}
\end{example}
\begin{exam}
Consider a setting with three agents $N=\{1,2,3\}$,
and two objects $M=\{a,b\}$. We are interested in a preference profile
where agents 2 and 3 have identical preference: $R_2=R_3=R_0$.
In particular, all non-empty bundles have the same willingness to pay according to $R_0$ and satisfy
$$WP(\{a,b\},t;R_0) = WP(\{a\},t;R_0) = WP(\{b\},t;R_0)=2+3t,$$
for $t > -\frac{1}{2}$. We are silent about the willingness to pay below $-\frac{1}{2}$, but it can be taken to
be $0.5$. We will only consider payments $t > -\frac{1}{2}$ for this example.
At preference $R_0$, we have
$$(\{a,b\},2+4t)~I_0~(\{b\},2+4t)~I_0~(\{a\},2+4t)~I_0~(\emptyset,t),$$
for all $t > -\frac{1}{2}$.
Hence, as $t$ increases, bundle $\{a\}$ (or $\{b\}$ or $\{a,b\}$) will require more payment to be indifferent to $(\emptyset,t)$.
We term this {\em negative income effect}.

\begin{table}[!hbt]
  \centering
  \begin{tabular}{|c||c | c | c|}
    \hline
     & $\{a\}$ & $\{b\}$ & $\{a,b\}$ \\
     \hline
     \hline
    $WP(\cdot,0;R_1)$ & $0$ & $0$ & $3.9$ \\
    \hline
      $WP(\cdot,0;R_2=R_0)$ & $2$ & $2$ & $2$ \\
      $WP(\cdot,0;R_3=R_0)$ & $2$ & $2$ & $2$ \\
    \hline
    \hline
  \end{tabular}
\caption{A profiles of preferences with $M=\{a,b\}$, $N=\{1,2,3\}$.}
\label{tab:negative}
\end{table}

Agent $1$ has {\em quasilinear} preference with
a value of $3.9$ for bundle $\{a,b\}$; value zero (or, arbitrarily close to zero) for bundle $\{a\}$ and bundle $\{b\}$, and
value of bundle $\emptyset$ is normalized to zero. We denote this preference as $R_1$.
The willingness to pay at zero payment for these
preferences are shown in Table \ref{tab:negative}.

Suppose $(f,\mathbf{p})$ is a desirable mechanism defined on a (rich enough) type space $\mathcal{T}$ containing the preference profile $R \equiv (R_1,R_2=R_0,R_3=R_0)$.
Notice that the value of $\{a,b\}$ for agent $1$ is 3.9 but $WP(\{a\},0;R_2)+WP(\{b\},0;R_3)=4$. Hence, a consequence
of Pareto efficiency, individual rationality, and no subsidy is that $f_1(R)=\emptyset$.\footnote{This follows from the following reasoning.
Individual rationality and no subsidy
imply that agents who are not allocated any object pay zero. Hence, any
outcome where agent $1$ is given both the objects
can be Pareto improved.} Then, without loss of generality, agent $2$ gets bundle $\{a\}$ and agent $3$ gets bundle $\{b\}$ due to Pareto efficiency.

Next, we can pin down the payments of agents at $R$. Since agent $1$ gets $\emptyset$, her payment must be zero by IR and no subsidy. Now, pretend as if agents $2$ and $3$ have quasilinear preference with valuations equal to their willingness to pay
at zero payment (see Table \ref{tab:negative}). Then, the VCG mechanism would charge them their externalities, which is equal to $1.9$ for both the
agents. If the type space $\mathcal{T}$ is sufficiently rich (in a sense, we make precise later),
DSIC will still require that $p_2(R)=p_3(R)=1.9$ (a precise argument is given in the proof of Theorem \ref{theo:impos}).

The negative income effect of $R_0$ makes the Pareto improvement possible in this example. The maximum payment
we can extract from agent $1$ is $3.9$. Hence, to collect more payment than the VCG outcome,
we can pay a maximum of $0.1 (=3.9-3.8)$ to agents 2 and 3.
If the preference $R_0$ was quasilinear, agents 2 and 3 would have required a compensation of $0.1$ each
to be indifferent between not getting any objects and the VCG outcome.
Due to negative income effect, agents 2 and 3 can be made to improve from their VCG outcome by paying
them much lower amounts. This in turn enables us to Pareto dominate the VCG outcome.

To be precise, the following outcome vector Pareto dominates the outcome of the mechanism at $R$:
$$z_1:= (\{a,b\},3.9),~~z_2:=(\emptyset,-0.025),~~z_3:=(\emptyset,-0.025).$$
To see why, note that (a) sum of payments in $z$ is $3.85 > p_2(R)+p_3(R)=3.8$; (b) agent $1$
is indifferent between $z_1$ and $(\emptyset,0)$; (c) agents 2 and 3 are also indifferent between
their outcomes in the mechanism and $z$ since $(\emptyset,-0.025)~I_0~(\{a\},1.9)$ (because
$WP(\{a\},t;R_0)=2+3t$ for all $t > -0.5$).

It is important to note that $R_1$ having high value on $\{a,b\}$ and (almost) zero value on all other bundles
played a crucial role in determining payments of agents, and hence, in the impossibility.
Indeed, if agent $1$ also had equal willingness to pay on some smaller bundle, then the example will not work.\footnote{
If the willingness to pay of agent $1$ is $3.9$ on $\{a\}$ or $\{b\}$, then her preference will
satisfy the {\em unit demand} property (for a formal definition, see Section \ref{sec:robust}).
Preference $R_0$ also satisfies the unit demand property.
It is known that if agents have unit demand preferences,
a desirable mechanism exists, even if such preferences have negative income effect~\citep{Demange85}.
}
This motivates the class of preferences we study in the next section.
\end{exam}

\subsection{Dichotomous preferences}

We turn our focus on a subset of classical preferences
which we call dichotomous. The dichotomous preferences can be described by: (a) a collection of
bundles, which we call the {\em acceptable} bundles, and (b) a willingness to pay function, which
only depends on the payment level. Formally, it is defined as follows.

\begin{defn}
A classical preference $R_i$ of agent $i$ is {\bf dichotomous} if there exists a non-empty set
of bundles $\emptyset \ne \mathcal{S}_i \subseteq (\mathcal{B} \setminus \{\emptyset\})$ and
a willingness to pay (WP) map $w_i:\mathbb{R} \rightarrow \mathbb{R}_{++}$ such that for every $t \in \mathbb{R}$,
\begin{displaymath}
WP(A,t;R_i) = \left\{ \begin{array}{l l}
w_i(t) & \forall~A \in \mathcal{S}_i \\
0 & \forall~A \in \mathcal{B} \setminus \mathcal{S}_i.
\end{array} \right.
\end{displaymath}
In this case, we refer to $\mathcal{S}_i$ as the collection of {\bf acceptable} bundles.
\end{defn}
The interpretation of the dichotomous preference is that, given same price (payment) for all the bundles,
the agent is indifferent between the bundles
in $\mathcal{S}_i$. Similarly, she is indifferent between the bundles in $\mathcal{B} \setminus \mathcal{S}_i$,
but it strictly prefers a bundle in $\mathcal{S}_i$ to a bundle outside it.
Hence, a dichotomous preference can be succinctly represented by a pair $(w_i,\mathcal{S}_i)$, where
$w_i: \mathbb{R} \rightarrow \mathbb{R}_{++}$ is a WP map and $\emptyset \ne \mathcal{S}_i \subseteq (\mathcal{B} \setminus \{\emptyset\})$ is the set of acceptable
bundles.

By our monotonicity requirement (free-disposal) of classical preference, for
every $S, T \in \mathcal{B}$, we have
$$\Big[ S \subseteq T, S \in \mathcal{S}_i \Big] \Rightarrow \Big[ T \in \mathcal{S}_i \Big].$$
Hence, a dichotomous preference can be described by $w_i$ and a {\em minimal}
set of bundles $\mathcal{S}^{min}_i$ such that
$$\mathcal{S}_i := \{T \in \mathcal{B}: S \subseteq T~\textrm{for some}~S \in \mathcal{S}^{min}_i\}.$$

Figure \ref{fig:dich}
shows two indifference vectors of a dichotomous preference. The figure shows that
the bundles $\{a\}, \{a,c\}, \{a,b\}$ and $\{a,b,c\}$ are acceptable but others are not. \\

\begin{figure}[!hbt]
\centering
\includegraphics[width=4.5in]{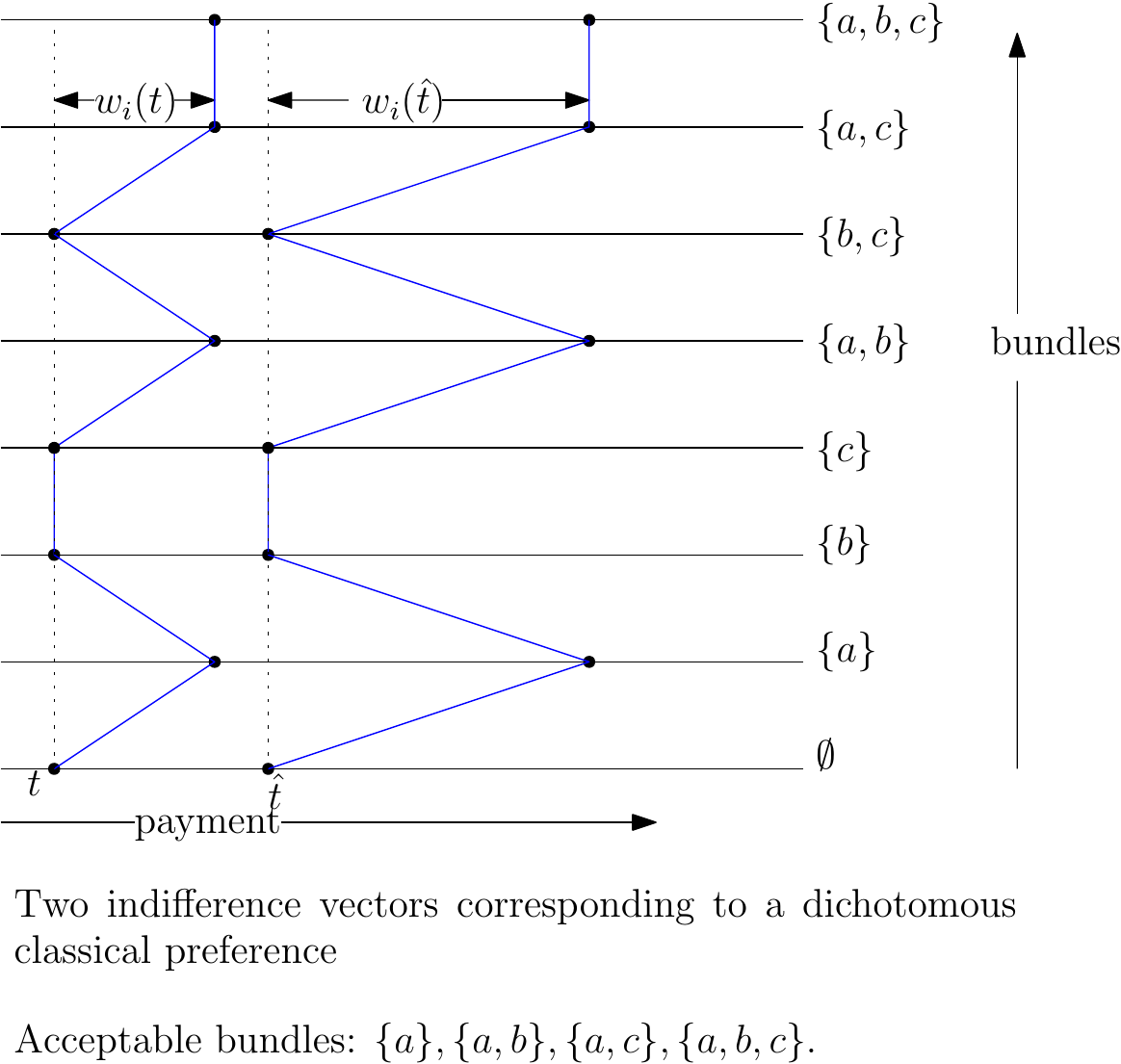}
\caption{A dichotomous preference}
\label{fig:dich}
\end{figure}

We will denote the domain of {\bf all} dichotomous preferences as $\mathcal{D}$, where each preference
in $\mathcal{D}$ for agent $i$ is described by a $w_i$ map and a collection of minimal bundles $\mathcal{S}^{min}_i$. A {\bf dichotomous domain} is any
subset of dichotomous preferences.

For some of our results, we will need a particular type of dichotomous preference.
\begin{defn}\label{def:single}
A dichotmous preference $R_i \equiv (\mathcal{S}_i^{min},w_i)$ is called a {\bf single-minded preference} if
$|\mathcal{S}_i^{min}|=1$.
\end{defn}
An agent having a single-minded dichotomous preference has {\em a unique} bundle of objects and all its supersets as acceptable bundles.
Let $\mathcal{D}^{single}$ denote the set of all single-minded preferences.
Single-minded preferences are well-studied in the algorithmic game theory literature~\citep{Lehmann02,Babaioff05,Babaioff09}.
They were also central in the recent analysis of US incentive auction~\citep{Milgrom19}. Our main negative result will
be for domains containing $\mathcal{D}^{single}$. Establishing a negative result
on domains containing $\mathcal{D}^{single}$ implies a negative result on domains containing $\mathcal{D}$ since $\mathcal{D}^{single} \subsetneq \mathcal{D}$.

Before concluding this section, we briefly discuss how dichotomous preferences
are similar to some other kinds of preferences in the literature. In the single object
model, the preferences are clearly dichotomous, where there is no uncertainty about
the acceptable bundles. Similarly, consider the unit demand preferences studied in
\citet{Demange85,Morimoto15}. A preference $R_i$ is a unit demand preference if for every
$S \in \mathcal{B}$ and every $t \in \mathbb{R}$, we have $WP(S,t;R_i)=\max_{a \in S}WP(\{a\},t;R_i)$.
Now, suppose the objects are {\em homogeneous} in the
following sense: $WP(\{a\},t;R_i)=WP(\{b\},t;R_i)$ for all $a,b \in M$ and for all $t \in \mathbb{R}$.
It is clear that a unit demand preference $R_i$ over homogeneous objects is a dichotomous preference,
where $\mathcal{S}^{min}_i$ consists of singleton bundles. If the objects are not homogeneous, the
unit demand preferences are not dichotmous since the willingness to pay of different objects may be different.

\section{The results}
\label{sec:results}

We describe our main results in this section.

\subsection{An impossibility result}
\label{sec:impos}

We start with our main negative result:
if the domain consists of {\em all} single-minded preferences, then
there is no desirable mechanism. This generalizes the intuition we demonstrated in the example
in Section \ref{sec:mot}.
\begin{theorem}[Impossibility]\label{theo:impos}
Suppose $\mathcal{T} \supseteq \mathcal{D}^{single}$ (i.e., the domain contains {\em all} single-minded preferences), $n \ge 3$, and $m \ge 2$.
Then, no desirable mechanism exists in $\mathcal{T}^n$.
\end{theorem}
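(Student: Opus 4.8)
The plan is to follow the strategy laid out in the motivating example in Section~\ref{sec:mot} and turn it into a general argument. I would first reduce to the case $n=3$, $m=2$ (with the remaining agents, if any, assigned preferences that make them irrelevant—e.g.\ with a tiny willingness to pay on all bundles so that they never win and pay zero), and consider a profile $R=(R_1,R_0,R_0)$ where agents $2$ and $3$ share a single-minded preference $R_0$ whose unique minimal acceptable bundle is one of the singletons, say $\{a\}$ (so by free disposal every bundle containing $a$ is acceptable), with a willingness-to-pay map $w_0$ exhibiting strict negative income effect, and agent $1$ is single-minded with minimal acceptable bundle $\{a,b\}$ and willingness to pay on $\{a,b\}$ strictly between $2w_0(0)$ and, say, $2w_0(0)+\epsilon$ for suitable small parameters. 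The numbers in Table~\ref{tab:negative} ($w_0(0)=2$, agent $1$'s value $3.9$) are the template; the real work is choosing $w_0$ so that the later inequalities hold.

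The core of the proof is to pin down, using desirability alone, the outcome of the hypothetical mechanism $(f,\mathbf p)$ at $R$. First, Pareto efficiency together with IR and no subsidy forces $f_1(R)=\emptyset$: any allocation giving agent $1$ both objects (the only way agent $1$ gets a positive willingness to pay since $\{a,b\}$ is her only minimal acceptable bundle) is Pareto dominated, because agents $2$ and $3$ together are willing to pay more for the two singletons than agent $1$ is willing to pay for $\{a,b\}$, and losers pay zero by IR and no subsidy. Hence Pareto efficiency forces (up to relabeling) $f_2(R)=\{a\}$, $f_3(R)=\{b\}$—wait, agent $3$'s acceptable bundles are those containing $a$, so actually both $2$ and $3$ want bundles containing $a$; I would instead let agents $2,3$ have minimal acceptable bundles $\{a\}$ and $\{b\}$ respectively (two copies of a homogeneous-type single-minded preference, one pointing at each object), so efficiency gives $f_2(R)=\{a\}$, $f_3(R)=\{b\}$, and agent $1$ gets $\emptyset$ with $p_1(R)=0$. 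Then I must show $p_2(R)=p_3(R)=$ the ``VCG externality'' $1.9$. The upper bound $p_i(R)\le 1.9$ comes from an IR-type argument after a deviation; the lower bound is the delicate part and is exactly where the richness of $\mathcal D^{single}$ is used: by considering deviations of agents $2$ and $3$ to nearby single-minded preferences and invoking DSIC repeatedly (a standard ``revenue equivalence on a connected domain'' or a Saitoh--Morimoto-style argument, which the paper says it will give in detail), one shows the payment cannot drop below the externality. I expect this step—deriving $p_2(R)=p_3(R)=1.9$ purely from DSIC on the single-minded domain—to be the main obstacle, since it requires a careful chain of incentive comparisons rather than a one-line observation.

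Once the outcome at $R$ is pinned down as $\big((\emptyset,0),(\{a\},1.9),(\{b\},1.9)\big)$, I would exhibit an explicit Pareto-dominating outcome profile, exactly as in the example: $z_1=(\{a,b\},w_1(\{a,b\}))$ where agent $1$ pays her full willingness to pay (so $z_1\,I_1\,(\emptyset,0)$), and $z_2=z_3=(\emptyset,-\delta)$ for a small $\delta>0$ chosen so that $(\emptyset,-\delta)\,I_0\,(\{a\},1.9)$, i.e.\ $\delta$ solves $w_0(-\delta)=1.9$; negative income effect ($w_0$ increasing in $t$, equivalently the compensation to give up the bundle is small) guarantees $\delta$ is small—small enough that $\text{(agent 1's WP)}-2\delta > 3.8 = p_2(R)+p_3(R)$. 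Then $\sum z_i$'s payments strictly exceed $\sum p_i(R)$ while every agent is indifferent, contradicting Pareto efficiency of the mechanism's outcome at $R$. The only remaining care is to verify that all preferences used ($R_1$, $R_0$, and the deviation preferences in the DSIC step, plus the filler preferences for agents $4,\dots,n$) lie in $\mathcal D^{single}\subseteq\mathcal T$, and that the chosen $w_0$ is a legitimate WP map (continuous, valued in $\mathbb R_{++}$, consistent with a classical preference)—these are routine but must be stated.
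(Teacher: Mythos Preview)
Your proposal follows essentially the same route as the paper: construct the profile from Section~\ref{sec:mot} (with filler agents of negligible WP), use Pareto efficiency plus IR and no subsidy to force the allocation, pin the payments of agents $2$ and $3$ at their externality $1.9$ via DSIC deviations, and then exhibit the Pareto-dominating outcome using negative income effect. The overall architecture is correct.

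One point deserves sharpening. You describe the lower bound $p_2(R)\ge 1.9$ as coming from ``revenue equivalence on a connected domain'' or a ``Saitoh--Morimoto-style'' argument with DSIC invoked repeatedly. The paper does something more direct and more delicate: it shows (its Step~3) that if agent $2$ instead reports a \emph{quasilinear} single-minded preference with value $\hat v\in(p_2(R),1.9)$, then Pareto efficiency forces agent $1$ to receive $S$ and agent $2$ to receive nothing; hence a single DSIC comparison (agent $2$ at $\hat R_2$ deviating to $R^*_2$) gives the contradiction. The subtle part of Step~3 is that its Pareto-improvement construction compensates agent $3$ by paying her $2$ less than $p_3$, and verifying that agent $3$ weakly accepts this uses the specific shape of the negative-income-effect map $w_3$ (namely $w_3(t)\le 2$ for all $t\le 0$). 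So the negative income effect of agent $3$ is used \emph{twice}: once in the final Pareto-domination step you describe, and once earlier, inside the argument pinning agent $2$'s payment. Make sure your chosen $w_0$ supports both uses, and note that the lower-bound step relies on Pareto efficiency, not only DSIC.
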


The proof of this theorem and all other proofs are relegated to an appendix at the end.
The proof formalizes the sketch given in the example in Section \ref{sec:mot}.
The main idea of the proof is that if a desirable mechanism exists in $\mathcal{D}^{single}$,
it has to define outcomes at {\em all} single-minded preference profiles, which includes
an $n$-agent and $m$-object version of the preference profile discussed in Section \ref{sec:mot}.
The challenge is to show that any desirable mechanism at that profile must coincide with
the outcome of a {\em generalized} VCG mechanism (where agents pay their ``externalities").
Once this is shown, the rest of the proof is similar to the discussion in Section \ref{sec:mot}.

As discussed in the introduction, Theorem \ref{theo:impos} adds to a small list of papers that have established such negative results in other combinatorial
auction problems. Notice that the domain $\mathcal{T}$ may contain preferences that are not dichotomous or it may be equal to $\mathcal{D}$, the
set of all dichotomous preferences.

The conditions $m \ge 2$ and $n \ge 3$ are both necessary: if $m=1$, we
know that a desirable mechanism exists~\citep{Saitoh08}; if $n=2$, the mechanism that we propose next is desirable -- see
Proposition \ref{prop:vcg} and discussions after it.

\begin{defn}
The {\bf generalized Vickrey-Clarke-Groves mechanism with loser's payment $t_L$ (GVCG-$t_L$)}, denoted as $(f^{vcg,t_L},\mathbf{p}^{vcg,t_L})$, is defined as follows:
for every profile of preferences $R$,
\begin{align*}
f^{vcg,t_L}(R) &\in \arg \max_{A\in \mathcal{X} } \sum_{i\in N} WP(A_i,t_L;R_i) \\
p^{vcg,t_L}_i(R) &= t_L + \max_{A \in \mathcal{X}} \sum_{j \ne i} WP(A_j,t_L;R_j)- \sum_{j \ne i} WP(f^{vcg,t_L}_j(R),t_L;R_j).
\end{align*}
We refer to the GVCG-$0$ mechanism as the {\bf GVCG} mechanism.
\end{defn}

The GVCG class of mechanisms is a natural generalization of the VCG mechanism to our setting without quasilinearity.
Note that the current definition does not use anything about dichotomous preferences. It computes the ``externality'' of every agent with respect to a reference transfer
level $t_L$. This transfer level $t_L$ corresponds to
the payment by any agent who does not win any non-empty bundle of objects in the mechanism (such an agent has zero
externality).
The additional term $t_L$ in the payment expression ensures that when we use $t_L$ as the reference transfer level to
compute externalities, we maintain incentive compatibility in the dichotomous domain. In the quasilinear domain,
the reference transfer level does not matter as the willingness to pay does not change with reference transfer: $WP(S,t_L,R_i)=WP(S,0,R_i)$ for
each $S$, if $R_i$ is a quasilinear preference.

Theorem \ref{theo:impos} implies that the GVCG mechanism is not desirable.
Indeed, no GVCG mechanism can be DSIC in an arbitrary combinatorial auction domain without
quasilinearity. For instance, \citet{Morimoto15} show that there is a unique desirable
mechanism in the domain of ``unit-demand" (where agents have demand for at most one object) preferences,
and it is {\bf not} a GVCG mechanism.
We show that the GVCG mechanism is DSIC, individually rational, and satisfies no subsidy in {\em any} dichotomous preference domain.

\begin{prop}\label{prop:vcg}
Consider the GVCG-$t_L$ mechanism for some $t_L \in \mathbb{R}$, defined on an arbitrary dichotomous domain $\mathcal{T} \subseteq \mathcal{D}$. Then, the following are true.

\begin{enumerate}
\item The GVCG-$t_L$ mechanism is DSIC.
\item The GVCG-$t_L$ mechanism is individually rational if $t_L \le 0$.
\item The GVCG-$t_L$ mechanism satisfies individual rationality and no subsidy if $t_L=0$.
\item The GVCG-$t_L$ mechanism is Pareto efficient if $n=2$.
\item The GVCG-$t_L$ mechanism is not Pareto efficient if $n> 2, m > 1$, and $\mathcal{T} \supseteq \mathcal{D}^{single}$.
\end{enumerate}
\end{prop}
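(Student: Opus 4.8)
The plan is to dispatch the five parts in the listed order, since parts~2 and~3 lean on the menu structure established for part~1 and part~5 is a direct construction. Fix a dichotomous domain $\mathcal{T}\subseteq\mathcal{D}$ and $t_L\in\mathbb{R}$; for a dichotomous $R_i\equiv(\mathcal{S}_i,w_i)$ write $v_i:=w_i(t_L)$, so $WP(A,t_L;R_i)=v_i$ for $A\in\mathcal{S}_i$ and $0$ otherwise. There are finitely many bundle allocations, so every maximum in the definition of GVCG-$t_L$ is attained. The central reduction is an identity for agent~$i$'s payment: with $R_{-i}$ fixed, define for each bundle $B$ the \emph{cost} $\gamma_i(B):=\max_{A\in\mathcal{X}}\sum_{j\ne i}WP(A_j,t_L;R_j)-\max_{A\in\mathcal{X}:\,A_i=B}\sum_{j\ne i}WP(A_j,t_L;R_j)$, which satisfies $\gamma_i(\emptyset)=0\le\gamma_i(B)$ and is non-decreasing in $B$ under inclusion. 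Because $f^{vcg,t_L}(R)$ maximizes $\sum_{k}WP(A_k,t_L;R_k)$ and the term $WP(A_i,t_L;R_i)$ depends only on $A_i$, the allocation must, for the fixed value $A_i=f^{vcg,t_L}_i(R)$, maximize $\sum_{j\ne i}WP(A_j,t_L;R_j)$; substituting into the payment formula gives $p^{vcg,t_L}_i(R)=t_L+\gamma_i\!\big(f^{vcg,t_L}_i(R)\big)$. Hence, as agent~$i$'s report varies (with $R_{-i}$ fixed), her outcome always lies in the menu $\big\{(B,\,t_L+\gamma_i(B)):B\in\mathcal{B}\big\}$.

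For part~1, a dichotomous agent with true acceptable set $\mathcal{S}_i$ ranks this menu as follows: she is indifferent among all $(B,t_L+\gamma_i(B))$ with $B\notin\mathcal{S}_i$, each of which is $R_i$-weakly worse than $(\emptyset,t_L)$ since $\gamma_i(B)\ge0$; among $B\in\mathcal{S}_i$ the best is $(\text{acceptable},\,t_L+c_i)$ where $c_i:=\min_{B\in\mathcal{S}_i}\gamma_i(B)$; and $(\text{acceptable},t_L+c_i)$ is $R_i$-preferred to $(\emptyset,t_L)$ iff $c_i\le v_i$, because $(\text{acceptable},t_L+v_i)\,I_i\,(\emptyset,t_L)$ by definition of $w_i$ together with monotonicity in payments. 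A direct check of the GVCG-$t_L$ allocation under the truthful report shows it attains exactly this $R_i$-best menu outcome (she is given a cost-minimizing acceptable bundle when $v_i\ge c_i$, and the empty bundle at payment $t_L$ when $v_i\le c_i$), so no misreport helps; this is DSIC. For part~2, if $t_L\le 0$ the truthful outcome is either $(\emptyset,t_L)\,R_i\,(\emptyset,0)$ by monotonicity, or $(\text{acceptable},t_L+c_i)$ with $c_i\le v_i$, which is $R_i$-weakly above $(\text{acceptable},t_L+v_i)\,I_i\,(\emptyset,t_L)\,R_i\,(\emptyset,0)$; this gives IR. For part~3, no subsidy is immediate from $p^{vcg,0}_i(R)=\gamma_i\!\big(f^{vcg,0}_i(R)\big)\ge0$, and IR at $t_L=0$ is the special case of part~2.

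For part~4 take $n=2$. Since agent~$j$ alone can receive $M\in\mathcal{S}_j$, $\gamma_i(B)\in\{0,v_{-i}\}$ — it is $0$ when the complement of $B$ still contains an acceptable bundle for the other agent, and $v_{-i}$ otherwise. Two cases arise for the GVCG-$t_L$ outcome: (A) if $\mathcal{S}_1,\mathcal{S}_2$ are \emph{jointly feasible} (there are disjoint $A_1\in\mathcal{S}_1$, $A_2\in\mathcal{S}_2$), the allocation gives acceptable bundles to both, each with cost $0$, so each pays $t_L$ and the total is $2t_L$; (B) otherwise, relabeling so $v_1\ge v_2$, agent~$1$ gets an acceptable bundle with cost $v_2$ and pays $t_L+v_2$, while agent~$2$ gets an unacceptable bundle (or $\emptyset$) of cost $0$ and pays $t_L$, total $2t_L+v_2$. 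To exclude Pareto improvements I use the sub-lemma that $t\mapsto t+WP(A,t;R_i)$ is non-decreasing for any classical $R_i$ — if it were decreasing somewhere, chaining the two indifferences $(A,t+WP(A,t;R_i))\,I_i\,(\emptyset,t)$ against payment-monotonicity would contradict $(\emptyset,t)\,P_i\,(\emptyset,t')$ for $t<t'$. In case (A), any $z_i$ with $z_i\,R_i\,(A_i,t_L)$ has payment at most $t_L$ (strictly less unless $z_i$ carries an acceptable bundle at payment exactly $t_L$), so the total payment of a weakly improving profile is at most $2t_L$, with equality forcing both agents indifferent, hence no Pareto improvement. In case (B) I argue pattern by pattern according to whether each $z_i$ carries an acceptable bundle (the pattern "both acceptable" being ruled out by non-joint-feasibility): using $v_1\ge v_2$ and the sub-lemma, the payment level $\hat t$ at which agent~$1$ is indifferent between $(\emptyset,\hat t)$ and her current outcome $(A_1,t_L+v_2)$ satisfies $\hat t\le t_L$ (because $w_1(t_L)+t_L=v_1+t_L\ge v_2+t_L=w_1(\hat t)+\hat t$ and $t+w_1(t)$ is non-decreasing); this bounds the total payment of any weakly improving profile by $2t_L+v_2$, again with equality forcing indifference. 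Thus GVCG-$t_L$ is Pareto efficient when $n=2$. The sub-case of (B) where agent~$1$ surrenders her bundle and agent~$2$ is handed an acceptable bundle is the delicate one, and it is exactly where non-decreasingness of $t+WP(A,t;R_i)$ is indispensable.

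For part~5 I exhibit a single-minded profile — which lies in $\mathcal{T}$ since $\mathcal{T}\supseteq\mathcal{D}^{single}$ — reproducing Example~\ref{ex:ex1}. Use two objects $a,b$ (put any others into bundles nobody finds acceptable, and give any remaining agents single-minded preferences with negligible willingness to pay, so they never win and cancel out of the comparison). Let agent~$1$ be single-minded for $\{a,b\}$ with quasilinear value $3.9$; let agents~$2$ and~$3$ be single-minded for $\{a\}$ and $\{b\}$ respectively, each with a willingness-to-pay map $w$ satisfying $w(t_L)=2$ and $w(t)=2+3(t-t_L)$ near $t_L$ (negative income effect), which defines a legitimate classical preference just as in Example~\ref{ex:ex1}. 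A direct computation yields the GVCG-$t_L$ outcome $\big((\emptyset,t_L),\,(\{a\},t_L+1.9),\,(\{b\},t_L+1.9)\big)$: splitting $\{a,b\}$ between agents~$2$ and~$3$ has social willingness to pay $4>3.9$, and each of agents~$2,3$ has cost $3.9-2=1.9$. This outcome is Pareto dominated by $\big((\{a,b\},t_L+3.9),\,(\emptyset,t_L-0.025),\,(\emptyset,t_L-0.025)\big)$: the total payment rises by $0.05$, agent~$1$ is indifferent (quasilinear with value $3.9$), and agents~$2,3$ are indifferent since $w(t_L-0.025)=1.925$ gives $(\{a\},(t_L-0.025)+1.925)\,I\,(\emptyset,t_L-0.025)$, i.e. $(\{a\},t_L+1.9)\,I\,(\emptyset,t_L-0.025)$. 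Hence GVCG-$t_L$ is not Pareto efficient when $n>2$, $m>1$, $\mathcal{T}\supseteq\mathcal{D}^{single}$. I expect the main obstacles to be the bookkeeping in part~1 (the cases where an agent is handed a useful or a useless bundle at cost $0$) and, above all, the delicate sub-case of part~4 just flagged.
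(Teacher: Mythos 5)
Your proposal is correct and, at bottom, runs along the same lines as the paper's proof: the payment identity $p_i^{vcg,t_L}(R)=t_L+\gamma_i\big(f_i^{vcg,t_L}(R)\big)$ and the resulting menu argument for DSIC is a repackaging of the paper's two facts that every agent weakly prefers her GVCG outcome to $(\emptyset,t_L)$ (with losers paying exactly $t_L$) and that truthful reporting minimizes the payment among reports yielding acceptable bundles; parts 2, 3 and 5 are essentially identical (your part-5 profile is the Example 1 profile translated to reference level $t_L$, while the paper writes the same phenomenon with general parameters $w_1+w_2>w_3>\max(w_1,w_2)$ and the large bidder losing). The one place you genuinely reroute is the sub-case of part 4 where the winning agent surrenders her acceptable bundle: the paper bounds $p_1\le t_L$ in one line from $(B_1,p_1)~R_1~(A_1^*,t_L+v_2)~R_1~(\emptyset,t_L)$ (the second relation being your menu fact, since $\gamma_1=v_2\le v_1$), whereas you pass through the indifference level $\hat{t}$ and monotonicity of $t\mapsto t+w_1(t)$. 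This works, but as stated your sub-lemma (``non-decreasing'') is not quite enough to conclude $\hat{t}\le t_L$ in the tie case $v_1=v_2$, where $g(\hat{t})=g(t_L)$; fortunately the chaining argument you sketch actually yields that $t\mapsto t+WP(A,t;R_i)$ is \emph{strictly} increasing (equality of the two indifference payments would force $(\emptyset,t)~I_i~(\emptyset,t')$, contradicting strict monotonicity in money), which closes that case. Either upgrade the sub-lemma to strict monotonicity, or simply reuse the menu inequality as the paper does.
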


We explain below why the GVCG class of mechanisms are compatible with Pareto efficiency when $n=2$ but
not compatible when $n > 2$. For simplicity, we assume that preferences of agents are single-minded, i.e., the domain
is $\mathcal{D}^{single}$. We consider various cases. \\

\noindent {\sc One object ($m=1$).} It is well known that the GVCG mechanism is Pareto efficient if $m=1$~\citep{Saitoh08}.
Note that for $m=1$, every preference is single-minded.
The GVCG mechanism allocates the object
to an agent $k$ with the highest WP at $0$, i.e., $w_k(0) = \max_{i \in N}w_i(0)$. All agents except agent $k$
pay zero and agent $k$ pays $\max_{i \ne k}w_i(0)$. This outcome is always Pareto efficient. The main reason for this is
that there is {\em only one} object, and any new outcome can only give this object to one agent (may be the same or another agent).
Take any such outcome $z \equiv (z_1,\ldots,z_n)$ and assume for contradiction that it Pareto dominates the GVCG outcome.
If agent $k$ continues to get the object in $z_k$ also, her payment cannot be more than $\max_{i \ne k}w_i(0)$.
Further, payments of other agents cannot be more than zero. As a result, total payment cannot be more than $\max_{i \ne k}w_i(0)$.
Similarly, if any other agent $j \ne k$ receives the object in $z$, then her payment cannot be more than $w_j(0)$ (else,
she will prefer the GVCG outcome of getting nothing and paying zero). Further, in this case, since agent $k$ does not receive the object
in $z$, her payment will be non-positive. As a result, the total payment cannot be more than $\max_{i \ne k}w_i(0)$.
In fact, the total payment in $z$ in both the cases will be strictly less than the GVCG
payments if any agent strictly improves, which is a contradiction. \\

\noindent {\sc Two agents ($n = 2$) but arbitrary $m$.} Since preferences of agents are
single-minded, at every preference profile the acceptable bundles of each agent $i$ are supersets of some $S_i \in \mathcal{B}$. Since
there are two agents, we have only two cases to consider: (i) $S_1 \cap S_2 = \emptyset$
and (ii) $S_1 \cap S_2 \ne \emptyset$. Intuitively, in the first case, the two agents are not competing against each other.
Pareto efficiency requires us to allocate each agent $i \in \{1,2\}$ her acceptable bundle $S_i$. The GVCG mechanism charges zero payment to the agents. Clearly, this cannot be Pareto dominated.
In the second case, the two agents compete against each other like the single object case. This is because $S_1 \cap S_2 \ne \emptyset$ means
exactly one agent can be assigned an acceptable bundle. In fact the allocation and payment in the GVCG mechanism for this case
mirrors the single object case: the agent with the
higher WP at $0$ gets her acceptable bundle and pays the willingness to pay of the other agent.
The fact that this outcome cannot be Pareto dominated follows an argument similar to the $m=1$ case.
Summarizing, if there are two agents, independent
of the number of objects, the Pareto efficiency requirement is very similar to the single object case. Hence, the GVCG mechanism remains compatible
with Pareto efficiency. \\

\noindent {\sc More than two agents and more than one object} ($n > 2, m > 1$). With more than two agents and more than one object, the Pareto efficiency requirement
is no longer like the single object case. To understand, let us consider
Example \ref{ex:ex1} (see Table \ref{tab:negative}).
The GVCG mechanism
allocates objects $a$ and $b$ to agents $2$ and $3$ but charges them low payments ($1.9$ each).
This is akin to low payments
in the VCG mechanism as documented in \citet{Ausubel06}.\footnote{They point out that
when there are at least two objects and at least three agents, the VCG mechanism outcome
may not lie in the ``core" of the associated game if objects are complements. This in turn results
in low payments. The dichotomous preferences
exhibit extreme form of complementarity.} In our example, even though agent $1$
is not allocated any object, she has high enough willingness to pay for the bundle of objects -- with one object, if the payment
of the winning agent is low, then the willingness to pay of all losing agents is also low.
With negative income effect,
agents $2$ and $3$ feel ``wealthier" after getting the objects at low payments.
So, their ``willingness to sell" amount is low.
Hence, it is easier to compensate them.
With agent $1$ having a high enough willingness to pay ($3.9$), a Pareto improving trade is thus possible.
Such a Pareto improving trade is not possible if agents $2$ and $3$ have positive income effect preferences.
This is because with positive income effect, the ``willingness to sell" amount is higher than the willingness to pay.

\subsection{Positive income effect and possibility}
\label{sec:pos}

Proposition \ref{prop:vcg} and Theorem \ref{theo:impos} point out that the GVCG is not Pareto efficient in the entire
dichotomous domain. A closer look at the proof of Theorem \ref{theo:impos} (and Example \ref{ex:ex1})
reveals that the impossibility is driven by a particular kind of dichotomous preferences: the ones where the willingness
to pay of an agent increases with payment. We term such preferences {\em negative income effect}.

A standard definition of positive income effect will say that as income rises, a preferred bundle becomes ``more preferred".
We do not model income explicitly, but our preferences implicitly account for income. So, if payment decreases from $t$
to $t'$, the income level of the agent increases implicitly. As a result, she is willing to pay more for his acceptable bundles at
$t'$ than at $t$. Thus, positive income effect captures a reasonable (and standard) restriction on preferences of the agents.
\begin{defn}
\label{def:dichpos}
A dichotomous preference $R_i \equiv (w_i,\mathcal{S}_i)$ satisfies {\bf positive income effect} if for all $t > t'$, we have
$w_i(t) \le w_i(t').$

A dichotomous domain of preferences $\mathcal{T}$ satisfies positive income effect if every preference in $\mathcal{T}$ satisfies
positive income effect.
\end{defn}
As an illustration, the indifference vectors shown in Figure \ref{fig:dich} cannot be part of a dichotomous preference satisfying positive income effect -- we see
that $\hat{t} > t$ but $w_i(\hat{t}) > w_i(t)$. The preference $R_0$ in Example \ref{ex:ex1} also violated positive income effect.
A quasilinear preference (where $w_i(t)=w_i(t')$ for all $t,t'$) always satisfies positive income effect, and
the GVCG mechanism is known to be a desirable mechanism in this domain. We show below that the GVCG mechanism is Pareto efficient
if the domain contains preferences that satisfy positive income effect. Before stating the result, let us reconsider Example \ref{ex:ex1}
and see why the GVCG mechanism becomes desirable with positive income effect.

\begin{example}
\label{ex:ex11}
\end{example}
\begin{exam}
We revisit Example \ref{ex:ex1} but with an important difference: the preferences of agents 2 and 3 now satisfy positive income effect.
So, we have three agents $N=\{1,2,3\}$ and two objects $M=\{a,b\}$. As in Example \ref{ex:ex1}, agent $1$
has single-minded quasilinear preference $R_1$ with valuation $3.9$ on the unique acceptable bundle $\{a,b\}$.
All the bundles are acceptable bundles for agents 2 and 3. But their preference is now $\widehat{R}_0$ which satisfies positive income effect.
However, similar to Example \ref{ex:ex1}, we have $\widehat{w}(0)=2$.
Figure \ref{fig:pie_1} shows two indifference vectors of $\widehat{R}_0$. Since $\widehat{R}_0$ satisfies positive
income effect, we have $\widehat{w}(t) > \widehat{w}(0)$, where $t < 0$.

The GVCG outcome
does not change from Example \ref{ex:ex1} at this profile: agent $2$ gets object $a$ and agent $3$ gets object $b$ with
payments $p_1^{vcg}=0, p_2^{vcg}=p_3^{vcg}=1.9$. To Pareto dominate this outcome, we need to give both the objects to agent $1$.

\begin{figure}[!hbt]
\centering
\includegraphics[width=4in]{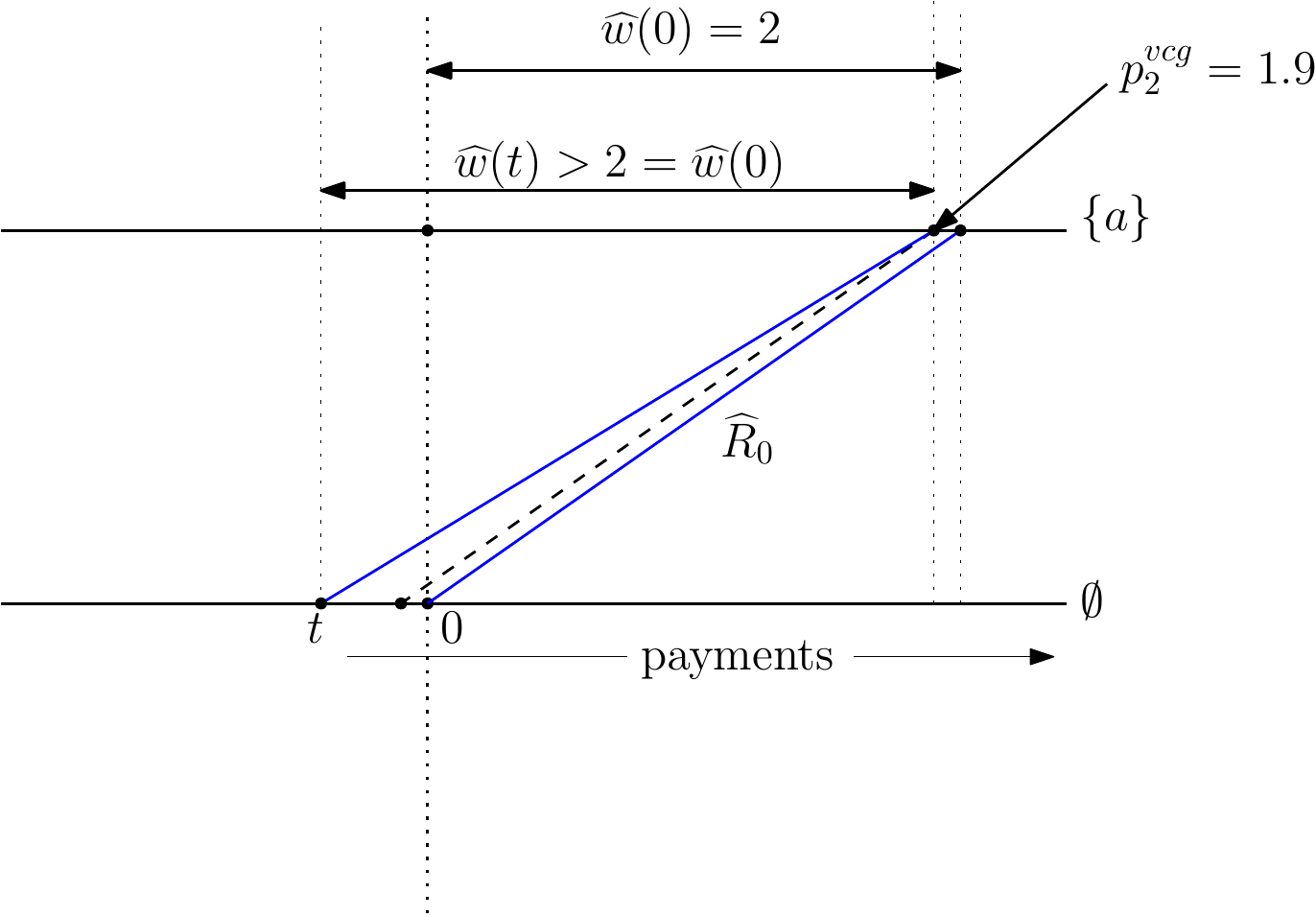}
\caption{Possibility with positive income effect}
\label{fig:pie_1}
\end{figure}

\begin{table}[!hbt]
  \centering
  \begin{tabular}{|c||c | c | c|}
    \hline
     & $\{a\}$ & $\{b\}$ & $\{a,b\}$ \\
     \hline
     \hline
    $WP(\cdot,0;R_1)$ & $0$ & $0$ & $3.9$ \\
    \hline
      $WP(\cdot,0;R_2=\widehat{R}_0)$ & $2$ & $2$ & $2$ \\
      $WP(\cdot,0;R_3=\widehat{R}_0)$ & $2$ & $2$ & $2$ \\
    \hline
    \hline
  \end{tabular}
\caption{A profiles of preferences with $M=\{a,b\}$, $N=\{1,2,3\}$.}
\label{tab:positive}
\end{table}

Now, the GVCG outcome to agent $2$ is $(\{a\},1.9)$ and, by Table \ref{tab:positive} (see Figure \ref{fig:pie_1} also), $(\{a\},2)~\widehat{I}_0~(\emptyset,0)$.
If $(\{a\},1.9)~\widehat{I}_0~(\emptyset,t)$, then by positive income effect $t < -0.1$. A pictorial description of the indifference
vectors of $\widehat{R}_0$ for these transfer amounts are shown in Figure \ref{fig:pie_1}.
This means that if agent $2$ is not given any object, the total compensation required for her
alone will be more than $0.1$. Since agent $3$ needs to be compensated too and the total revenue collected in the
VCG outcome is $3.8$, we need to charge more than $3.9$ to agent $1$ to Pareto dominate the VCG outcome. This is impossible since
the value of agent $1$ for both the objects is only $3.9$.
\end{exam}

The intuition in this example generalizes. The main idea is that the GVCG mechanism
allocates goods in a way that maximizes the collective willingness to pays (at zero) of the winning
bidders. IR implies that the winning bidders pay a price less than their willingness to pay
for their winning bundles. Thus, winning essentially makes the bidders feel ``wealthier".
Positive income effect then implies that their ``willingness to sell" after the auction exceeds the willingness
to pay before the auction. This rules out any Pareto improving trades\footnote{We are grateful
to an anonymous referee for this intuition. \citet{Baisa19jme} give similar intuition in a
single object auction model to establish a mapping between non-quasilinear and quasilinear
economies.}.

Our next result says that the impossibility in Theorem \ref{theo:impos} is overturned in any domain of dichotomous preferences satisfying positive income effect.
\begin{theorem}[Possibility]\label{theo:gvcg}
The GVCG mechanism is desirable on any dichotomous
domain satisfying positive income effect.
\end{theorem}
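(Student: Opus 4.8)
The plan is to reduce the statement to Pareto efficiency, since parts~1 and~3 of Proposition~\ref{prop:vcg} already give that the GVCG mechanism is DSIC, individually rational and satisfies no subsidy on every dichotomous domain. So fix a dichotomous domain $\mathcal{T}$ satisfying positive income effect, fix $R\in\mathcal{T}^n$ with $R_i=(w_i,\mathcal{S}_i)$, and write $z^*\equiv\big((f^{vcg}_i(R),p^{vcg}_i(R))\big)_{i\in N}$ for the GVCG outcome. First I would record the structure of this outcome in the dichotomous case: $f^{vcg}(R)$ maximizes $\sum_{i\in S}w_i(0)$ over all \emph{feasible} winner sets $S$ (those admitting a disjoint assignment of acceptable bundles), the realized winner set $W:=\{i:f^{vcg}_i(R)\in\mathcal{S}_i\}$ attains this maximum, every loser $i\in L:=N\setminus W$ has $p^{vcg}_i(R)=0$, and every winner $i\in W$ has $0\le p^{vcg}_i(R)\le w_i(0)$ (the last being IR restated).

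Now suppose, for contradiction, that an outcome profile $z\equiv\big((A'_i,t'_i)\big)_{i\in N}$ Pareto dominates $z^*$ at $R$. Let $W':=\{i:A'_i\in\mathcal{S}_i\}$, $L':=N\setminus W'$, and partition $N$ into the four cells $W\cap W'$, $W\cap L'$, $L\cap W'$, $L\cap L'$. For each winner $i\in W$ let $s_i$ be the unique payment with $(\emptyset,s_i)~I_i~(f^{vcg}_i(R),p^{vcg}_i(R))$; existence and uniqueness follow from continuity (an IVT argument along the $\emptyset$-line) and strict monotonicity in money. Two observations drive the argument: (i) individual rationality of $z^*$ gives $(\emptyset,s_i)~R_i~(\emptyset,0)$, hence $s_i\le 0$; and (ii) since $f^{vcg}_i(R)$ is acceptable, strict monotonicity forces $s_i+w_i(s_i)=p^{vcg}_i(R)$, so combining $s_i\le 0$ with positive income effect ($w_i(s_i)\ge w_i(0)$) yields $s_i\le p^{vcg}_i(R)-w_i(0)$. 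Using the dichotomous indifference structure (all acceptable bundles indifferent at a common payment; unacceptable bundles indifferent to $\emptyset$) and strict monotonicity, the domination relation $(A'_i,t'_i)~R_i~z^*_i$ then produces, on each cell, an upper bound $\bar t_i$ for $t'_i$: $\bar t_i=p^{vcg}_i(R)$ on $W\cap W'$, $\bar t_i=s_i$ on $W\cap L'$, $\bar t_i=w_i(0)$ on $L\cap W'$, and $\bar t_i=0$ on $L\cap L'$; moreover by construction $(A'_i,\bar t_i)~I_i~z^*_i$ in every case.

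Summing the bounds gives $\sum_i t'_i\le\sum_i\bar t_i$, and a short computation using~(ii) shows $\sum_i\bar t_i-\sum_i p^{vcg}_i(R)\le \sum_{i\in L\cap W'}w_i(0)-\sum_{i\in W\cap L'}w_i(0)=\sum_{i\in W'}w_i(0)-\sum_{i\in W}w_i(0)$, which is $\le 0$ by optimality of the GVCG winner set $W$. On the other hand Pareto domination requires $\sum_i t'_i\ge\sum_i p^{vcg}_i(R)$. Hence every inequality in $\sum_i p^{vcg}_i(R)\le\sum_i t'_i\le\sum_i\bar t_i\le\sum_i p^{vcg}_i(R)$ is an equality; since $t'_i\le\bar t_i$ termwise, this forces $t'_i=\bar t_i$ for all $i$, and therefore $(A'_i,t'_i)~I_i~z^*_i$ for every $i$ while $\sum_i t'_i=\sum_i p^{vcg}_i(R)$, so none of the $n+1$ Pareto conditions can be strict --- the desired contradiction.

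I expect the main obstacle to be the case bookkeeping of the third paragraph: one must (a) justify that the four ``indifference point'' values $\bar t_i$ genuinely upper-bound $t'_i$, invoking the dichotomous property together with strict monotonicity cell by cell, and (b) match $\sum_{i\in L\cap W'}w_i(0)-\sum_{i\in W\cap L'}w_i(0)$ with the suboptimality gap $\sum_{i\in W'}w_i(0)-\sum_{i\in W}w_i(0)$ of the alternative winner set, so that optimality of $W$ closes the argument. The single genuinely ``economic'' step, and the only place positive income effect is used, is observation~(ii): with negative income effect one would instead get $w_i(s_i)\le w_i(0)$, reversing that inequality and reopening precisely the Pareto-improving trades exhibited in Example~\ref{ex:ex1}.
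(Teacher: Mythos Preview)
Your proposal is correct and follows essentially the same route as the paper: reduce to Pareto efficiency via Proposition~\ref{prop:vcg}, partition agents by membership in the GVCG and alternative winner sets, bound each agent's payment in a would-be dominating outcome (with positive income effect used only on the cell $W\cap L'$), sum, and invoke optimality of the GVCG allocation to close the chain of inequalities. The only cosmetic difference is that you introduce the auxiliary indifference points $s_i$ and the bounds $\bar t_i$ explicitly, whereas the paper writes the same inequalities directly as relations between $p^{vcg}_i(R)$ and the dominating payments; in particular your observation~(ii) is the paper's Case~3 argument rephrased at the point $s_i$ rather than at $t'_i$.
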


Theorem \ref{theo:gvcg} can be interpreted to be a generalization of the well-known result that the VCG mechanism
is desirable in the quasilinear domain. Indeed, we know that if the domain of preferences is the set of {\em all} quasilinear preferences, then
standard revenue equivalence result (which holds in the quasilinear domain) implies that the VCG mechanism is the {\em only} desirable
mechanism. Though we do not have a revenue equivalence result, we show below a similar uniqueness result of the GVCG mechanism.
For this, we first remind ourselves of the definition of a quasilinear preference.
A dichotomous preference $(w_i,\mathcal{S}_i)$ is {\bf quasilinear} if for every $t,t' \in \mathbb{R}$, we have $w_i(t)=w_i(t')$.
We denote by $\mathcal{D}^{QL}$ the set of {\bf all} dichotomous quasilinear preferences. This leads to a characterization
of the GVCG mechanism.

\begin{theorem}[Uniqueness]\label{theo:unique}
Suppose the domain of preferences $\mathcal{T}$ is a dichotomous domain satisfying positive income effect and contains $\mathcal{D}^{QL}$.
Let $(f,\mathbf{p})$ be a mechanism defined on $\mathcal{T}^n$. Then, the following statements are equivalent.
\begin{enumerate}
\item $(f,\mathbf{p})$ is a desirable mechanism.

\item $(f,\mathbf{p})$ is the GVCG mechanism.

\end{enumerate}
\end{theorem}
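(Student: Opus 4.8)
The equivalence splits into a trivial direction and a substantive one. For $(2)\Rightarrow(1)$ there is nothing to do: Theorem~\ref{theo:gvcg} already asserts that the GVCG mechanism is desirable on every dichotomous domain satisfying positive income effect, and $\mathcal{T}$ is such a domain. So the plan is entirely about $(1)\Rightarrow(2)$: I want to show that any desirable $(f,\mathbf{p})$ on $\mathcal{T}^n$ produces, at every profile $R$, the same allocation value $\sum_{i}WP(f_i(R),0;R_i)$ as the GVCG mechanism and the same individual payments (up to the unavoidable indeterminacy coming from ties in the $\arg\max$ defining the GVCG allocation).

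The first ingredient is the \emph{quasilinear benchmark}: on the subdomain $(\mathcal{D}^{QL})^n\subseteq\mathcal{T}^n$ any desirable mechanism coincides with the classical VCG mechanism, which here is exactly the GVCG mechanism. This is the familiar characterization --- Pareto efficiency with individual rationality and no subsidy forces the allocation to maximize $\sum_i WP(\cdot,0;R_i)$ and forces losers to pay zero, and a revenue-equivalence/telescoping argument along paths of valuations then pins winners' payments to their externalities --- and the dichotomous quasilinear preferences (arbitrary minimal acceptable collections, arbitrary positive valuations) are rich enough for it to run. The second ingredient is a \emph{shadow lemma}: for each agent $i$ the outcome $\big(f_i(R_i,R_{-i}),p_i(R_i,R_{-i})\big)$ depends on $R_i$ only through $(\mathcal{S}_i,w_i(0))$, up to indifference. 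Given $R_i\in\mathcal{T}$, let $\widehat R_i\in\mathcal{D}^{QL}$ be the quasilinear dichotomous preference with the same minimal acceptable collection and valuation $w_i(0)$. Normalizing allocations so that each agent gets an acceptable bundle or $\emptyset$, individual rationality and no subsidy put both outcomes in $\{(\emptyset,0)\}\cup\{(\text{acceptable},q):0\le q\le w_i(0)\}$; the two DSIC constraints between $R_i$ and $\widehat R_i$, together with the single-crossing between a dichotomous preference and its quasilinear shadow supplied by positive income effect ($w_i(t)\le w_i(0)$ for $t\ge0$ and $w_i(t)\ge w_i(0)$ for $t\le0$), force the two payments to coincide whenever the allocated bundles are of the same type, and to equal $w_i(0)$ (hence be indifferent to $(\emptyset,0)$) in the mixed case.

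With these in hand, the heart of the argument is that the allocation rule of any desirable mechanism is $WP$-at-zero efficient at \emph{every} profile, which I would prove by induction on the number of non-quasilinear coordinates of $R$, the base case being the benchmark above. For the inductive step, replace one non-quasilinear coordinate $R_i$ by its shadow $\widehat R_i$ to get $R'$; then $R'$ has the same $WP$-at-zero profile as $R$ and, by the inductive hypothesis, an efficient allocation, while the shadow lemma controls agent $i$'s own outcome across $R$ and $R'$. If $f(R)$ were not efficient, one builds a Pareto-improving outcome profile that implements an efficient allocation, compensating each agent to indifference with her current outcome and funding the compensations from the efficiency gap: positive income effect is precisely what keeps the required compensation small enough, and for the awkward case of a winner with strictly positive surplus (whose compensation back to $(\emptyset,0)$ is large) one routes instead through the efficient allocation at $R'$ to contradict Pareto efficiency. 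Once efficiency holds at all profiles, the payments are pinned down by varying an agent's quasilinear valuation and applying DSIC in revenue-equivalence fashion: the threshold valuation above which agent $i$ wins, and the constant payment she then makes, equal her GVCG externality, which depends only on the $WP$-at-zero profile and hence equals $p^{vcg}_i(R)$; losers pay $0=p^{vcg}_i(R)$ by individual rationality and no subsidy.

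I expect the inductive efficiency step to be the main obstacle. It is where positive income effect is genuinely used, it requires the delicate bookkeeping of compensations in the Pareto-improving profile, and it demands care in passing between a profile and its partially quasilinear shadows without losing control of the other agents' outcomes --- the place where a naive coordinate-by-coordinate replacement does not obviously chain, and where one must lean on the incentive constraints of the remaining agents together with Pareto efficiency.
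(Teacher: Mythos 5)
Your decomposition is right in outline and several of its pieces are sound: $(2)\Rightarrow(1)$ is indeed immediate from Theorem \ref{theo:gvcg}; your ``shadow lemma'' is correct (and in fact needs only that $R_i$ and $\widehat R_i$ are dichotomous with the same acceptable family and the same $WP$ at $0$ --- positive income effect plays no role in it); and your threshold argument for pinning payments once the allocation is known to be $WP$-at-$0$ efficient matches the paper's payment step. The genuine gap is exactly where you suspect it: the inductive efficiency step does not close as described, for two reasons. First, the shadow lemma controls only agent $i$'s own outcome across $R$ and $R'=(\widehat R_i,R_{-i})$; it says nothing about $f_j$ for $j\ne i$, so efficiency of $f(R')$ (your inductive hypothesis) gives no handle on $\sum_j WP(f_j(R),0;R_j)$. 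Second, and more fundamentally, positive income effect works \emph{against} you in the Pareto-improving construction: an agent holding an acceptable bundle at $f(R)$ with strict surplus ($p_i(R)<w_i(0)$) who loses it in the efficient allocation must be moved to $(\emptyset,t)$ with $t+w_i(t)=p_i(R)$, and positive income effect gives $w_i(t)\ge w_i(0)$ for $t\le 0$, so the required compensation is at least $w_i(0)-p_i(R)$ and possibly strictly more --- the efficiency gap need not cover it. ``Routing through $R'$'' de-fangs only the single coordinate you replaced, whereas there may be several such awkward agents, and they are not the ones indexed by your induction.

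The paper's proof supplies precisely the missing device. Starting from a hypothetically inefficient profile, it does not replace arbitrary non-quasilinear coordinates by equal-valuation shadows; it iteratively replaces exactly the problematic agents --- those in $N_{0+}(R^t)$, whose GVCG outcome is indifferent to $(\emptyset,0)$ but whose $f$-outcome gives strict surplus --- by quasilinear dichotomous preferences with unique minimal acceptable bundle $f_{k^t}(R^t)$ and valuation \emph{strictly lowered} into $\big(p_{k^t}(R^t),w^t_{k^t}(0)\big)$. DSIC and Lemma \ref{lem:ir} keep that agent's bundle and payment fixed (Claim \ref{cl:scl1}); a key inequality (Claim \ref{cl:scl2}) combined with the strict decrease in valuation shows the inefficiency gap \emph{persists} along the sequence even though the $WP$-at-$0$ profile changes (Claim \ref{cl:scl3}); and at the terminal profile every agent who would need compensating is either surplus-free or quasilinear, so the explicit Pareto-dominating outcome can be funded from the surviving gap. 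Without this (or an equivalent) mechanism for simultaneously neutralizing all awkward agents while preserving the inefficiency, your induction stalls at its central step, so the proposal as written is incomplete.
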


We reiterate that the GVCG is known to fail DSIC with non-quasilinear preferences
if agents demand multiple objects. So, Theorems \ref{theo:gvcg}
and \ref{theo:unique} show that under dichotomous classical preferences with
positive income effect, we recover the desirability of the GVCG mechanism.

\subsection{Tightness of results}
\label{sec:robust}

In this section, we investigate if the positive results in the previous sections continue to hold
if the domain includes (positive income effect) non-dichotomous preferences. In particular, we investigate the consequences of adding a non-dichotomous
preference satisfying positive income effect and some other reasonable properties (we precisely define them
later in the section).
Both these conditions are natural properties to impose on preferences.
Our results below can be summarized as follows: if we take the set of {\em all}
quasilinear dichotomous preferences and add {\em any} non-dichotomous preference satisfying the above two conditions, then
no desirable mechanism can exist in such a type space.
As corollaries, we uncover new type spaces where
no desirable mechanism can exist with non-quasilinear preferences, and establish the role of dichotomous
preferences in such type spaces. Before we formally state the result, we give an example
to show why we should expect such an impossibility result.

\begin{table}[!hbt]
  \centering
  \begin{tabular}{|c||c | c | c|}
    \hline
     & $\{a\}$ & $\{b\}$ & $\{a,b\}$ \\
     \hline
     \hline
    $WP(\cdot,0;R_1)$ & $0$ & $0$ & $5$ \\
    \hline
      $WP(\cdot,0;R_2=R_0)$ & $3$ & $4$ & $4$ \\
      $WP(\cdot,0;R_3=R_0)$ & $3$ & $4$ & $4$ \\
    \hline
    \hline
    $WP(\cdot,0;R'_2)$ & $0$ & $4$ & $4$ \\
    \hline
    \hline
  \end{tabular}
\caption{Two profiles of preferences with $M=\{a,b\}$, $N=\{1,2,3\}$.}
\label{tab:vcg_sp}
\end{table}

\begin{example}
\label{ex:ex2}
\end{example}
\begin{exam}
\noindent We consider an example with two object $M:=\{a,b\}$ and three agents $N:=\{1,2,3\}$.
We will require the following preferences of the agents.
The preference $R_1$ of agent $1$ is quasilinear and the corresponding values for
bundles of objects is shown in Table \ref{tab:vcg_sp}. It is clear that $R_1$ is a
single-minded preference.
We have two preferences of agent $2$: $R_2=R_0$ and $R'_2$.
Preference $R_0$ is not quasilinear, but
it satisfies positive income effect (decreasing prices by the same amount of two indifferent consumption bundles
lead the agents to strictly prefer the costlier object): $(\{b\},4)~I_0~(\{a\},3)$ and $(\{b\},2)~P_0~(\{a\},1)$.
This is shown in Figure \ref{fig:vcg_sp}, where we show some indifference vectors of $R_0$. Note that
the other indifference vectors of $R_0$ can be constructed such that it satisfies the unit demand property and positive
income effect. Preference $R'_2$ is a quasilinear single-minded preference with $\{b\}$ and $\{a,b\}$ as
acceptable bundles and value $4$. Finally, preference $R_3$ of agent $3$ is also $R_0$.

\begin{figure}[!hbt]
\centering
\includegraphics[height=3in]{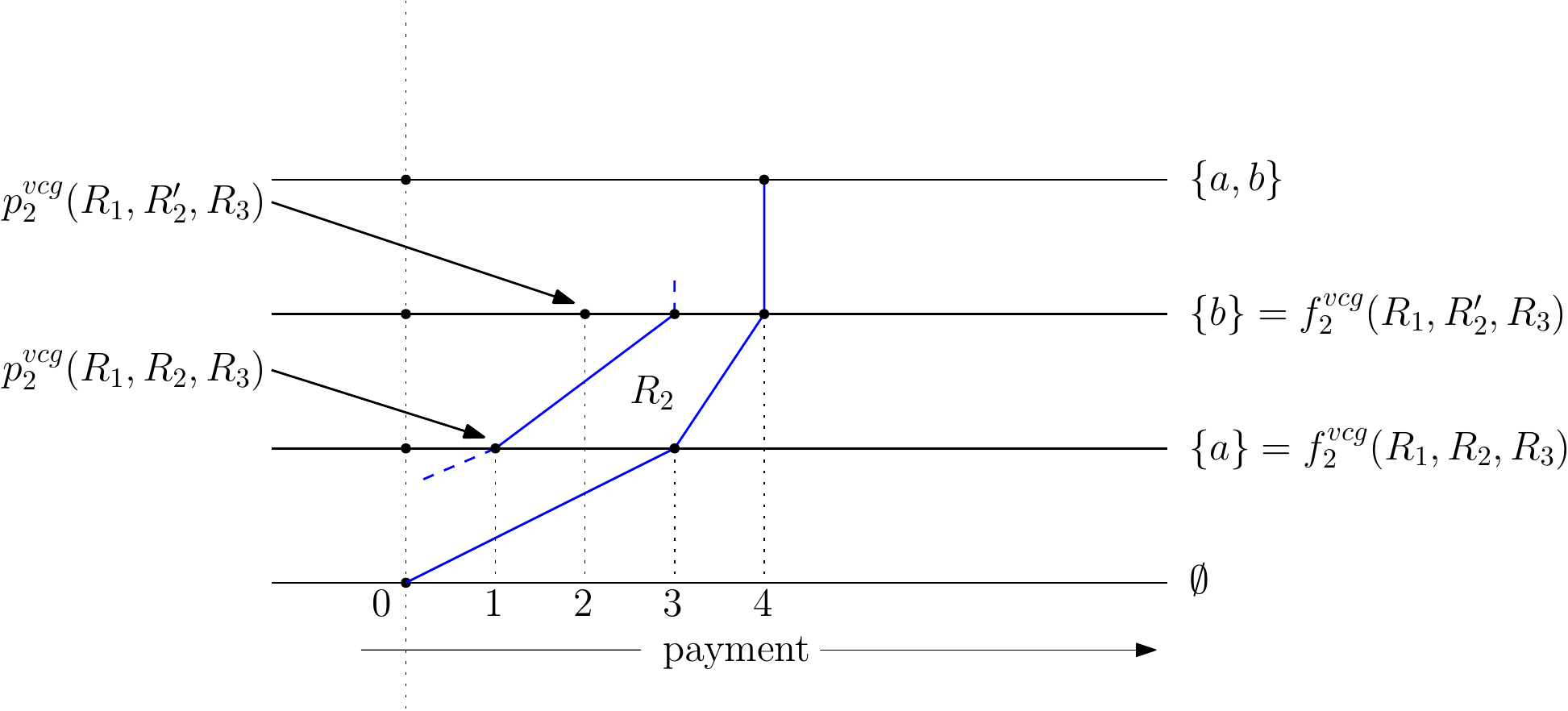}
\caption{Positive income effect preference of agents $2$ and $3$.}
\label{fig:vcg_sp}
\end{figure}

We argue that the GVCG mechanism containing all quasilinear dichotomous preferences and $R_0$ is {\bf not DSIC}.
So, our domain is $\mathcal{T}=\mathcal{D}^{QL} \cup \{R_0\}$. We will look at two preference profiles: $(R_1,R_2,R_3)$
and $(R_1,R'_2,R_3)$. At the preference profile $(R_1,R_2,R_3)$, agents 2 and 3 should get objects from
$\{a,b\}$ according to
GVCG. Since they have identical preferences, we break the tie by giving object $a$ to agent $2$ and object $b$ to agent $3$:
$f_1^{vcg}(R_1,R_2,R_3)=\{a\}, f^{vcg}_2(R_1,R_2,R_3)=\{b\}$.\footnote{
The example can be modified to work if the tie is broken by giving object $b$ to agent $2$ and object $a$ to agent $3$.}
The payment of agent $2$ is $p^{vcg}_2(R_1,R_2,R_3)=1$.

Now, consider the preference profile $(R_1,R'_2,R_3)$. Here, since agent $2$ has only $\{b\}$ and $\{a,b\}$ in her acceptable bundle, her
GVCG outcome changes: $f_2^{vcg}(R_1,R'_2,R_3) = \{b\}$ and $p_2^{vcg}(R_1,R'_2,R_3)=2$.
In other words, the externality of agent $2$ changes from $1$ at preference profile $(R_1,R_2,R_3)$ to $2$ at $(R_1,R'_2,R_3)$.

If $R_2$ was a quasilinear preference, then
agent $2$ would have been indifferent between $(\{a\},1)$ and $(\{b\},2)$. But since $R_2 = R_0$ satisfies positive income effect (see Figure \ref{fig:vcg_sp}),
$(\{b\},2)~P_2~(\{a\},1)$. This shows that with positive income effect, agent $2$ can manipulate in the GVCG mechanism in this domain.

This is a general problem. We formalize this in Theorem \ref{theo:robustu}. We show in the proof of Theorem \ref{theo:robustu}
that any desirable mechanism in such a domain must have the GVCG outcomes at these profiles,
and this will lead to manipulation by the agent having positive income effect.

It is crucial that $WP(\{a\},0;R_0) < WP(\{b\},0;R_0)$ for this manipulation to happen in this example. If $WP(\{a\},0;R_0) = WP(\{b\},0;R_0)=4$, then
$R_0$ can be a dichotomous preference (i.e., besides the indifference vector shown in Table \ref{tab:vcg_sp},
we can construct other indifference vectors such that it is a dichotomous preference). We know that the
GVCG mechanism is DSIC in such domains. Indeed, in that case, the externality of agent $2$ remains unchanged across profiles
$(R_1,R_2,R_3)$ and $(R_1,R'_2,R_3)$. In other words, we have $p^{vcg}_2(R_1,R_2,R_3)=p^{vcg}_2(R_1,R'_2,R_3)=1$. So, no manipulation
is possible by agent $2$ across these two preference profiles.\footnote{This is true even if this preference does not satisfy
positive income effect.}

\end{exam}

We formalize the intuition in Example \ref{ex:ex2} now.
We consider a preference where an agent can demand multiple heterogeneous objects.
We require that at least two objects are heterogeneous in the following sense.
\begin{defn}
A preference $R_i$ satisfies {\bf heterogenous} demand if there exists $a,b \in M$,
$$WP(\{a\},0;R_i) \ne WP(\{b\},0;R_i).$$
\end{defn}
Heterogeneous demand requires that for {\em some} pair of objects, the WP at $0$ must be different for them.
If objects are not the same (i.e., not homogeneous), then we should expect this condition to hold. We can provide an analogous tightness result if objects are homogeneous.\footnote{The result is
available on request.}

Besides the heterogeneous demand, we will impose two natural conditions on preferences. The first condition is a mild
form of substitutability condition.
\begin{defn}
A preference $R_i$ satisfies {\bf strict decreasing marginal WP} if for every $a,b \in M$,
$$WP(\{a\},0;R_i)  + WP(\{b\},0;R_i) > WP(\{a,b\},0;R_i).$$
\end{defn}
Strict decreasing marginal WP requires a minimal degree of submodularity: the marginal increase in WP (at $0$) by adding $\{a\}$
to $\{b\}$ is less than adding $\{a\}$ to $\emptyset$. Notice that this substitutability requirement is {\em only} for bundles
of size two. Hence, larger bundles may exhibit complementarity or substitutability.
Because of free disposal, for every $a,b \in M$, we have $$WP(\{a,b\},0;R_i) \ge \max(WP(\{a\},0;R_i),WP(\{b\},0;R_i)).$$
Hence, strict decreasing marginal WP implies that $WP(\{a\},0;R_i) > 0$ and $WP(\{b\},0;R_i) > 0$, i.e., each object is a {\em good} in a
weak sense (getting an object is preferred to getting nothing at payment $0$).

We point out that unit demand preferences (studied in \citep{Demange85,Morimoto15})
satisfy strict decreasing marginal WP. A preference $R_i$ is called a {\bf unit demand} preference if for every $S$, $$WP(S,t;R_i)=\max_{a \in S}WP(\{a\},t;R_i)~\forall~t \in \mathbb{R}_+.$$
If $R_i$ is a unit demand preference and objects are {\em goods}, then it satisfies strict decreasing marginal WP. To see this, call every object $a \in M$ {\bf a real good} if
 $WP(\{a\},0;R_i) > 0$ at every $R_i$. If every object is a real good, then for every $a, b \in M$, we see that
$$WP(\{a\},0;R_i)+WP(\{b\},0;R_i) > \max_{x \in \{a,b\}}WP(\{x\},0;R_i) = WP(\{a,b\},0;R_i).$$

Besides the strict decreasing marginal WP condition, we will also be requiring strict positive income effect, but {\em only} for
singleton bundles.
\begin{defn}
A classical preference $R_i$ satisfies {\bf strict positive income effect} if for every $a, b \in M$ and
for every $t, t'$ with $t' > t$, the following holds for every $\delta > 0$:
$$\Big[(\{b\},t')~I_i~(\{a\},t)\Big]~\Rightarrow~\Big[(\{b\},t'-\delta)~P_i~(\{a\},t-\delta)\Big].$$
\end{defn}
This definition of strict positive income effect requires that if two objects are indifferent then decreasing their
prices by the same amount makes the higher priced (lower income) object better. This is a generalization of the definition of
positive income effect we had introduced for dichotomous preferences in Definition \ref{def:dichpos}, but only
restricted to singleton bundles.\footnote{An alternate definition along the lines of Definition \ref{def:dichpos}
using willingness to pay map is also possible. It will require {\em decreasing differences
of willingness to pay}. Formally, a preference $R_i$ satisfies
strict positive income effect if for every $t' > t$ and for every $a, b \in M$, we have
$WP(\{a\},t';R_i) > WP(\{b\},t';R_i)$ implies $WP(\{a\},t';R_i) - WP(\{b\},t';R_i) < WP(\{a\},t;R_i) - WP(\{b\},t;R_i)$.}
This means that for larger bundles, we do not require positive income effect to hold.

We are ready to state the main tightness result with heterogeneous objects.
\begin{theorem}\label{theo:robustu}
Suppose $n \ge 4, m \ge 2$. Let $R_0$ be a heterogeneous demand preference satisfying strict positive income effect and
strict decreasing marginal WP.
Consider any domain $\mathcal{T}$ containing $\mathcal{D}^{QL} \cup \{R_0\}$.
Then, no desirable mechanism exists in $\mathcal{T}^n$.
\end{theorem}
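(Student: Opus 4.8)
The plan is to mimic the structure of Example \ref{ex:ex2} at scale, but the real work is to show that \emph{any} desirable mechanism on $\mathcal{T}^n$ must coincide with the GVCG outcome at the two critical profiles, so that the positive-income-effect agent's manipulation becomes unavoidable. First I would set up the construction: pick the two objects $a,b$ witnessing heterogeneous demand, say $WP(\{a\},0;R_0) < WP(\{b\},0;R_0)$ (WLOG), and build a profile with agent $1$ having a quasilinear single-minded preference whose unique acceptable bundle is $\{a,b\}$ with a value $v$ chosen to sit delicately between $WP(\{a\},0;R_0)+WP(\{b\},0;R_0)$ and a slightly smaller quantity, exactly as $3.9$ and $5$ were chosen in Examples \ref{ex:ex1}, \ref{ex:ex2}. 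Agents $2$ and $3$ get $R_0$; the remaining $n-3$ agents (here is where $n\ge 4$ is needed, to have enough ``padding'' agents, though actually $n \ge 3$ may suffice — I would check whether the extra agent is genuinely required or is an artifact of the proof technique) get quasilinear dichotomous preferences with zero or negligible willingness to pay, so they never win and pay zero by IR and no subsidy.

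The core lemma to establish is: on a domain containing $\mathcal{D}^{QL}$, a desirable mechanism restricted to ``quasilinear-like'' sub-profiles must behave like the GVCG mechanism — in particular, payments of winners must equal their GVCG externalities. I would prove this by the standard revenue-equivalence / taxation-principle argument available in the quasilinear dichotomous domain: since $\mathcal{D}^{QL} \subseteq \mathcal{T}$, Theorem \ref{theo:unique} (or its proof techniques, since $\mathcal{D}^{QL}$ itself is a positive-income-effect dichotomous domain containing $\mathcal{D}^{QL}$) already tells us that on $(\mathcal{D}^{QL})^n$ the unique desirable mechanism is GVCG. The subtlety is that I need this pinning-down to persist when one agent's preference is $R_0 \notin \mathcal{D}^{QL}$. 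Here I would use the fact that $R_0$ restricted to singleton bundles and the empty bundle, at the specific payment levels in play, agrees in its \emph{indifference relation} with some quasilinear preference — DSIC against the nearby quasilinear types forces the allocation and payment at the $R_0$-profile to match what GVCG prescribes. This is exactly the ``precise argument given in the proof of Theorem \ref{theo:impos}'' that the excerpt alludes to; I would adapt that argument, which pins down payments via a sequence of quasilinear comparison types.

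With the GVCG outcomes pinned down at both profiles — the profile where agent $2$ (and $3$) has $R_0$ with full acceptable set, and the profile where agent $2$ reports a quasilinear single-minded preference $R_2'$ with only $\{b\}$ (and supersets) acceptable — I compute the two externalities agent $2$ faces. At the first profile agent $2$ wins $\{a\}$ at her externality $t_1$; at the second she wins $\{b\}$ at externality $t_2$, and the heterogeneous-demand gap forces $t_2 > t_1$ while $(\{b\},t_2)\,I_i\,(\{a\},t_1)$ would hold for a quasilinear type. Strict decreasing marginal WP is what guarantees agent $1$ cannot profitably take $\{a,b\}$ (so the allocation really is the ``split'' one), and strict positive income effect then gives $(\{b\},t_2)\,P_2\,(\{a\},t_1)$ under $R_0$, so agent $2$ strictly prefers to misreport $R_2'$ — contradicting DSIC. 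The main obstacle, as flagged, is the pinning-down lemma: making airtight that desirability plus the presence of \emph{all} quasilinear dichotomous types forces the $R_0$-profile outcome to be the GVCG outcome, rather than merely constraining it. I expect this to require carefully choosing the padding agents' types and the value $v$ so that the GVCG allocation is the unique Pareto-efficient allocation (up to the relevant tie-break), then invoking the Theorem \ref{theo:impos}-style payment argument; getting the inequalities on $v$ to simultaneously serve all these purposes is the delicate calculation.
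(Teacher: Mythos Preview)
Your overall architecture is right and matches the paper's: construct the Example~\ref{ex:ex2} profile at scale, pin down the allocation and payments to the GVCG values via Pareto efficiency, IR/no-subsidy, and DSIC against nearby quasilinear dichotomous types (the paper does invoke Theorem~\ref{theo:unique} at an all-quasilinear subprofile exactly as you anticipate), and then exhibit the strict-positive-income-effect manipulation by agent~$2$. Your reading of the roles of strict decreasing marginal WP (to create the window for $v$) and strict positive income effect (to make the deviation strict) is correct.

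There is, however, one concrete gap. You describe the $n-3$ extra agents as ``padding'' with \emph{zero or negligible} willingness to pay, and you even wonder whether $n\ge 4$ is an artifact. For $m>2$ this is wrong, and it is precisely why the fourth agent is needed. The preference $R_0$ is an \emph{arbitrary} classical preference satisfying the three hypotheses; nothing prevents it from having very large $WP$ on bundles that contain objects outside $\{a,b\}$. If all non-principal agents have negligible WP, Pareto efficiency may well assign agents~$2$ and~$3$ bundles such as $\{a,c\}$ or $\{b,c,d\}$, and then your externality calculation for agent~$2$ (which relies on her receiving exactly $\{a\}$ at the true profile and exactly $\{b\}$ after deviation) collapses. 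The paper fixes this by giving agent~$4$ a quasilinear dichotomous preference whose unique minimal acceptable bundle is $M\setminus\{a,b\}$ with \emph{arbitrarily high} value; this agent absorbs all objects outside $\{a,b\}$ in any Pareto-efficient allocation, forcing agents~$1,2,3$ to fight over $\{a,b\}$ alone. The remaining agents $i>4$ are the genuinely negligible padding. So the fourth agent is not a technicality: without her, the pinning-down lemma you identify as the main obstacle cannot be proved for general $R_0$ when $m>2$. (When $m=2$ there is nothing to absorb, and indeed the paper's Remark~1 confirms $n\ge 3$ suffices in that case.)
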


We make a quick remark about the statement of Theorem \ref{theo:robustu}. \\

\noindent {\sc Remark 1.}\begin{exam}
Though Theorem \ref{theo:robustu} requires $n \ge 4$, a careful look at its proof reveals that
we only need $n \ge 4$ if $m > 2$. If there are only two objects, the impossibility result in Theorem \ref{theo:robustu}
holds with $n \ge 3$. This was shown in Example \ref{ex:ex2} also.

The basic idea of the proof of Theorem \ref{theo:robustu} is similar to Example \ref{ex:ex2}.
With more than two object ($m > 2$), we will need at least four agents. The reason is
slightly delicate. Notice that $R_0$ in the statement of Theorem \ref{theo:robustu}
is an arbitrary preference. As in Example \ref{ex:ex2}, the proof ensures that three agents
compete for two objects, say $\{a,b\}$, out of which two agents have $R_0$ as their preference. With more
than two objects, we need a way to ensure that $\{a,b\}$ are allocated among these three agents.
In the absence of a fourth agent, it is not possible to ensure that the two agents having $R_0$
preference are not assigned objects outside of $\{a,b\}$. A fourth agent having arbitrarily large willingness
to pay for the bundle $M \setminus \{a,b\}$ ensures that.

We do not know if the impossibility result holds for $n=2$ or $n=3$ when $m > 2$, but we conjecture that it does not.
\end{exam}

Unlike the negative result in Theorem \ref{theo:impos}, Theorem \ref{theo:robustu} does not require the existence
of negative income effect dichotomous preferences. It requires the domain to contain the set of quasilinear dichotomous preferences
and one heterogeneous demand preference satisfying some reasonable conditions.
This negative result parallels a result of \citet{Kazumura16} who show that adding {\em any} multi-demand
preference to a class of {\em rich} unit demand preference gives rise to a similar impossibility.
As was explained in Example \ref{ex:ex2}, our proof exploits the fact that
any desirable mechanism must coincide with the GVCG mechanism in the positive income effect dichotomous domain,
and adding any strictly positive income effect preference
to the domain leads to manipulation. In the case of \citet{Kazumura16}, they add an arbitrary multi-demand preference (which may or may not satisfy positive income effect) to a domain of unit demand
preferences, where the GVCG mechanism is {\em not} desirable. So, neither of the results imply
the other and the proof strategies are different.

We now spell out an exact implication of Theorem \ref{theo:robustu} in a corollary below.
Let $\mathcal{D}^+$ be
the set of all {\em positive income effect} dichotomous preferences (note that $\mathcal{D}^{QL} \subsetneq
\mathcal{D}^+$) and $\mathcal{U}^{+}$
be the set of all heterogeneous unit demand preferences satisfying positive income effect (as argued earlier, unit demand preferences
satisfy strict decreasing marginal WP). Then, the following corollary is immediate from Theorem \ref{theo:robustu}.
\begin{cor}\label{cor:imp}
Suppose $\mathcal{T}=\mathcal{D}^+ \cup \mathcal{U}^+$. Then, no desirable mechanism  exists on $\mathcal{T}^n$.
\end{cor}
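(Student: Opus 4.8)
The plan is to derive the corollary directly from Theorem \ref{theo:robustu}. For this it suffices to show that the domain $\mathcal{T} = \mathcal{D}^+ \cup \mathcal{U}^+$ contains $\mathcal{D}^{QL} \cup \{R_0\}$ for some preference $R_0$ satisfying the three hypotheses of that theorem: heterogeneous demand, strict positive income effect, and strict decreasing marginal WP. As in Theorem \ref{theo:robustu}, we take $n \ge 4$ and $m \ge 2$.

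First I would record the trivial inclusion $\mathcal{D}^{QL} \subsetneq \mathcal{D}^+ \subseteq \mathcal{T}$: every quasilinear dichotomous preference has a constant WP map and hence satisfies positive income effect. Second, I would exhibit a suitable $R_0 \in \mathcal{U}^+$. Take a unit demand preference on $M$ whose singleton willingness-to-pay maps $t \mapsto WP(\{x\},t;R_0)$ are strictly positive for all $t$, strictly decreasing in $t$, pairwise distinct at $t=0$ (so that $R_0$ has heterogeneous demand), and such that any two of them have strictly decreasing difference on the range where that difference is positive (so that $R_0$ has strict positive income effect on singletons); the preference $R_0$ of Example \ref{ex:ex2}, extended from $\{a,b\}$ to all of $M$, is of exactly this form. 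Such an $R_0$ is a heterogeneous unit demand preference with positive income effect, so $R_0 \in \mathcal{U}^+ \subseteq \mathcal{T}$. Moreover every object is a real good under $R_0$ (its WP at $0$ is strictly positive), so the computation already recorded in the text for unit demand preferences shows that $R_0$ satisfies strict decreasing marginal WP: $WP(\{a\},0;R_0) + WP(\{b\},0;R_0) > \max_{x \in \{a,b\}} WP(\{x\},0;R_0) = WP(\{a,b\},0;R_0)$ for all $a,b \in M$.

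With these two observations in hand, $\mathcal{T} \supseteq \mathcal{D}^{QL} \cup \{R_0\}$ where $R_0$ is a heterogeneous demand preference with strict positive income effect and strict decreasing marginal WP, so Theorem \ref{theo:robustu} applies verbatim and no desirable mechanism exists on $\mathcal{T}^n$. The one place that needs genuine (if routine) care is the construction of $R_0$: one must write down a family of singleton indifference vectors that is monotone, continuous and finite --- so that it extends to a bona fide classical unit demand preference on $\mathcal{Z}$ --- while simultaneously being heterogeneous and exhibiting strictly decreasing differences of the singleton WP maps. This is the same bookkeeping already carried out for Example \ref{ex:ex2}, and nothing beyond it is required.
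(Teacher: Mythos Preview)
Your proposal is correct and matches the paper's approach exactly: the paper simply declares the corollary ``immediate from Theorem \ref{theo:robustu},'' and your argument spells out precisely why --- $\mathcal{D}^{QL} \subseteq \mathcal{D}^+ \subseteq \mathcal{T}$, and $\mathcal{U}^+$ contains a heterogeneous unit-demand preference $R_0$ with strict positive income effect and real goods (hence strict decreasing marginal WP), so Theorem \ref{theo:robustu} applies. The only point worth flagging is that the corollary, like Theorem \ref{theo:robustu}, implicitly assumes $n \ge 4$ and $m \ge 2$, which you correctly note.
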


Theorem \ref{theo:unique} shows that the GVCG mechanism is the unique desirable mechanism on $\mathcal{D}^+$.
Similarly, \citet{Demange85} have shown that a desirable mechanism exists in $\mathcal{U}^+$. This mechanism
is called the {\em minimum Walrasian equilibrium price mechanism} and collapses to the VCG mechanism if
preferences are quasilinear. Corollary \ref{cor:imp} says that we lose these possibility results if we consider
the unions of these two type spaces.

\section{Related Literature}
\label{sec:lit}

The quasilinearity assumption is at the heart of mechanism design literature with payments.
Our formulation of classical preferences was studied in the context of single object auction
by \citet{Saitoh08}, who proposed the generalized VCG mechanism and axiomatized it for that setting. Other such
axiomatizations include \citet{Sakai08,Sakai13}. As discussed,
\citet{Demange85} had shown that a mechanism different from the generalized VCG mechanism is desirable
when multiple heterogeneous objects are sold to agents with unit demand. Characterizations of this
mechanism have been given in \citet{Morimoto15}, \citet{Serizawa16} and \citet{Kazumura18b}. However, impossibility results for the existence
of a desirable mechanism were shown (a) by \citet{Kazumura16} for multi-object auctions with multi-demand agents
and (b) by \citet{Baisa17} for multiple homogeneous object model with multi-demand agents.
Social choice problems with payments are studied with particular form of non-quasilinear preferences
in \citet{Ma16,Ma18}. These papers establish dictatorship results in this setting with non-quasilinear preferences.

\citet{Baisa16} considers non-quasilinear preferences with randomization in a single object auction
environment. He proposes a randomized mechanism and establishes strategic properties of this mechanism. \citet{Dastidar15}
considers a model where agents have same utility function but models income explicitly to allow for different incomes. He considers
equilibria of standard auctions. \citet{Larry18}
discuss an implementation duality without quasilinear preferences and apply it to matching and adverse selection problems.
\citet{Kazumura18} discuss monotonicity based characterization of DSIC mechanisms
in domains which admit non-quasilinear preferences. \citet{Baisa19jme} discuss a model of
single object allocation when bidders have interdependent values and non-quasilinear preferences with positive
income effects. They give necessary and sufficient conditions for the existence of an ex-post implementable
and Pareto efficient mechanism in two settings: (i) where the auctioneer is the seller; and (ii) the procurement
setting, where the auctioneer is the buyer. In the former setting, their condition requires existence of
an ex-post implementable and Pareto efficient mechanism in a corresponding quasilinear economy. In the latter setting,
they show an impossibility result if the level of interdependence is strong.

The literature on auction design with budget constrained bidders models
budget constraint such that if an agent has to pay more than budget, then his utility is minus infinity. This introduces non-quasilinear
utility functions but it does not fit our model because of the hard budget constraint. For the multi-unit auction with such budget-constrained
agents, \citet{Lavi12} establish that no desirable mechanism can exist -- see an extension of this result
in \citet{Dobzinski12}. They prove this result by considering two bidders each with publicly known budgets
and two units. Their result shows an impossibility similar to ours as long as the public budgets of the bidders
are not equal. Their paper also allows complementary preferences but not of the extreme form seen with dichotomous preferences.

For combinatorial auctions with single-minded and quasilinear preferences,
\citet{Le18} shows that these impossibilities with budget-constrained agents can be overcome in a {\em generic} sense --
he defines a ``truncated" VCG mechanism and shows that it is desirable {\em almost everywhere}.

There is a literature in algorithmic mechanism design on combinatorial auctions with
quasilinear but ``single-minded" preferences. Apart from practical significance,
the problem is of interest because computing a VCG outcome is computationally challenging but
various ``approximately" desirable mechanisms which are computationally tractable can be constructed~\citep{Babaioff05,Babaioff09, Lehmann02,Milgrom19}.
\citet{Rastegari11} show that in this model, the revenue from the VCG mechanism (and any DSIC mechanism) may
not satisfy monotonicity, i.e., adding an agent may {\em decrease} revenue. Our paper adds to this literature by
illustrating the implications of non-quasilinear preferences.

\appendix

\section{Proofs}

\subsection{Proof of Theorem \ref{theo:impos}}

The proof extends the intuition in Example \ref{ex:ex1}. \\

\begin{proof}
We start by providing two useful lemmas.
\begin{lemma}\label{lem:ir}
Suppose $(f,\mathbf{p})$ is an individually rational mechanism
satisfying no subsidy. Then for every agent $i \in N$ and every $R \in \mathcal{T}^n$, we have $p_i(R)=0$
if $f_i(R) \notin \mathcal{S}_i$.
\end{lemma}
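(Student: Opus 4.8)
The plan is to prove Lemma \ref{lem:ir}: if $(f,\mathbf{p})$ is individually rational and satisfies no subsidy, then any agent $i$ who does not receive an acceptable bundle pays exactly zero. First I would unpack what individual rationality gives us at profile $R$ for such an agent $i$: since $f_i(R) \notin \mathcal{S}_i$, the willingness to pay for the allocated bundle is zero, i.e., $WP(f_i(R),0;R_i)=0$, and by the definition of a dichotomous preference together with monotonicity this means $(f_i(R),0) \sim_i (\emptyset,0)$. Individual rationality states $(f_i(R),p_i(R))~R_i~(\emptyset,0)$, so combining these, $(f_i(R),p_i(R))~R_i~(f_i(R),0)$. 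The monotonicity axiom (part (i): $(A,t)~P_i~(A,t')$ whenever $t'>t$) then forces $p_i(R) \le 0$: if we had $p_i(R)>0$, monotonicity would give $(f_i(R),0)~P_i~(f_i(R),p_i(R))$, contradicting the IR-derived preference.

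The second half is the no-subsidy axiom, which says $p_i(R) \ge 0$ directly. Putting the two inequalities together yields $p_i(R)=0$, which is exactly the claim. So the proof is essentially a two-line combination of monotonicity (applied at the allocated bundle $f_i(R)$ across payment levels $0$ and $p_i(R)$) with the no-subsidy inequality; the only slightly substantive observation is the reduction $(f_i(R),0)\sim_i(\emptyset,0)$, which comes from the dichotomous-preference structure — an agent who gets an unacceptable bundle is, at the same payment, indifferent to getting nothing.

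I do not anticipate any real obstacle here; this is a warm-up lemma. The only thing to be careful about is to state precisely that $f_i(R)\notin\mathcal{S}_i$ means $WP(f_i(R),0;R_i)=0$ and hence indifference to $(\emptyset,0)$ at payment level $0$ (using the definition of $WP$ via $(f_i(R), 0 + 0)~I_i~(\emptyset,0)$), so that the IR inequality can be ``transported'' from $(\emptyset,0)$ to $(f_i(R),0)$. After that, monotonicity in money and no-subsidy pin down the payment. I would then presumably use this lemma downstream in the proof of Theorem \ref{theo:impos} to conclude that losing agents pay zero, which is what drives the Pareto-improvement argument sketched in Example \ref{ex:ex1}.
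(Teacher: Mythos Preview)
Your proof is correct and essentially the same as the paper's. The only cosmetic difference is that the paper transports the indifference at payment level $p_i(R)$ (writing $(\emptyset,p_i(R))~I_i~(f_i(R),p_i(R))~R_i~(\emptyset,0)$ and concluding $p_i(R)\le 0$ from monotonicity at the empty bundle), whereas you transport it at payment level $0$ (writing $(f_i(R),0)~I_i~(\emptyset,0)$ and concluding $p_i(R)\le 0$ from monotonicity at $f_i(R)$); both are immediate from the dichotomous definition and yield the same two-line argument.
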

\begin{proof}
Suppose $R$ is a profile such that $f_i(R) \notin \mathcal{S}_i$ for agent $i$. By individual rationality,
$(f_i(R),p_i(R))~R_i~(\emptyset,0)$. But $f_i(R) \notin \mathcal{S}_i$ implies that
$(\emptyset,p_i(R))~I_i~(f_i(R),p_i(R))~R_i~(\emptyset,0)$. Hence, $p_i(R) \le 0$.
But no subsidy implies that $p_i(R)=0$.
\end{proof}

\begin{lemma}\label{lem:ir2}
Suppose $(f,\mathbf{p})$ is an individually rational mechanism
satisfying no subsidy. Then for every agent $i \in N$ and every $R \in \mathcal{T}^n$, we have
$0 \le p_i(R) \le WP(f_i(R),0;R_i).$
\end{lemma}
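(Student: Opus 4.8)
The plan is to treat the two inequalities separately. The lower bound $p_i(R)\ge 0$ is immediate: it is precisely the no subsidy axiom, so there is nothing to prove there.

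For the upper bound, fix an agent $i$ and a profile $R$, and abbreviate $A:=f_i(R)$ and $w:=WP(A,0;R_i)$. The one substantive ingredient is the definition of willingness to pay: taking $t=0$ in the defining equation $(A,t+x)\,I_i\,(\emptyset,t)$ gives the indifference $(A,w)\,I_i\,(\emptyset,0)$. Individual rationality of the mechanism gives $(A,p_i(R))\,R_i\,(\emptyset,0)$. Chaining these two relations via transitivity of $R_i$ yields $(A,p_i(R))\,R_i\,(A,w)$. Now I would invoke monotonicity in money (part (i) of the definition of a classical preference): for a fixed bundle, a strictly higher payment is strictly worse, so if we had $p_i(R)>w$, then $(A,w)\,P_i\,(A,p_i(R))$, contradicting the relation just derived. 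Hence $p_i(R)\le w=WP(f_i(R),0;R_i)$, which is the claim.

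I do not expect any real obstacle here; the argument is just a transitivity-plus-monotonicity chain. The only point worth a sentence of care is that the argument is uniform in whether or not $f_i(R)\in\mathcal{S}_i$: it uses only properties of classical preferences, and when $f_i(R)\notin\mathcal{S}_i$ it specializes, via $WP(f_i(R),0;R_i)=0$, to the conclusion $p_i(R)=0$ of Lemma \ref{lem:ir}, so the general statement is consistent. No case split is actually needed in the proof itself.
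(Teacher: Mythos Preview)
Your proof is correct and follows essentially the same chain as the paper's: individual rationality gives $(f_i(R),p_i(R))\,R_i\,(\emptyset,0)$, the definition of $WP$ gives $(\emptyset,0)\,I_i\,(f_i(R),WP(f_i(R),0;R_i))$, and money monotonicity then forces $p_i(R)\le WP(f_i(R),0;R_i)$. The only difference is that the paper first disposes of the case $f_i(R)\notin\mathcal{S}_i$ via Lemma~\ref{lem:ir}, whereas you correctly note that the transitivity-plus-monotonicity argument already covers both cases uniformly, so no case split is needed.
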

\begin{proof}
If $f_i(R) \notin \mathcal{S}_i$, then the claim follows from Lemma \ref{lem:ir}. Suppose $f_i(R) \in \mathcal{S}_i$.
By individual rationality, $(f_i(R),p_i(R))~R_i~(\emptyset,0)~I_i~(f_i(R),WP(f_i(R),0;R_i))$.
This implies that $p_i(R) \le WP(f_i(R),0;R_i)$. No subsidy implies that $p_i(R) \ge 0$.
\end{proof}

Consider any three non-empty bundles $S, S_1, S_2$ such that
$S=S_1 \cup S_2$ and $S_1 \cap S_2 = \emptyset$. Consider
a profile of single-minded preferences $R^* \in \big(\mathcal{D}^{single}\big)^n$ as follows.
Since all the agents have dichotomous preferences, to
describe any agent $i$'s preference, we describe the {\em minimal} acceptable bundles $\mathcal{S}^{min}_i$ (i.e., the
set of acceptable bundles $\mathcal{S}_i$ are derived by taking supersets of each element in $\mathcal{S}^{min}_i$) and
the willingness to pay map $w_i$.
Preference $R^*_1$ of agent $1$ is quasilinear:
$$\mathcal{S}^{min}_1 = \{S\}, w_1(t) = 3.9~\forall~t \in \mathbb{R}.$$
Preference $R^*_2$ of agent $2$ is:
$$\mathcal{S}^{min}_2=\{S_1\}, w_2(t) =2+3t~\forall~t>- \dfrac{1}{2} \textrm{ and } w_2(t) =\dfrac{1}{2} \textrm { otherwise}$$
Preference $R^*_3$ of agent $3$ is:
$$\mathcal{S}^{min}_3=\{S_2\}, w_3(t) =2+3t~\forall~t>- \dfrac{1}{2} \textrm{ and } w_3(t) =\dfrac{1}{2} \textrm { otherwise}$$
Preference $R^*_i$ of each agent $i \notin \{1,2,3\}$ is quasilinear:
$$\mathcal{S}^{min}_i = \{S\}, w_i(t) = \epsilon~\forall~t \in \mathbb{R},$$
where $\epsilon > 0$ but very close to zero.

Assume for contradiction that there exists a DSIC, Pareto efficient, individually rational
mechanism $(f,\mathbf{p})$ satisfying no subsidy.
We now do the proof in several steps. \\

\noindent {\sc Step 1.} In this step, we show that at every preference profile
$R$ with $R_i=R^*_i$ for all $i \notin \{2,3\}$, we must have $S \nsubseteq f_i(R)$ if $i \notin \{1,2,3\}$.
We know that $\mathcal{S}^{min}_i=\{S\}$ for all $i \notin \{2,3\}$.
Assume for contradiction $S \subseteq f_k(R)$ for some $k \notin \{1,2,3\}$.
Then, $S \nsubseteq f_1(R)$. By Lemma \ref{lem:ir}, $p_1(R)=0$.
Consider the following outcome:
$$Z_1 = (S,\epsilon), Z_k= (\emptyset, p_k(R) - \epsilon), Z_j = (f_j(R),p_j(R))~\forall~j \notin \{1,k\}.$$
Since preferences of agent $1$ and agent $k$ are quasilinear (note that $R_1=R^*_1$ and $R_k=R^*_k$)
and $\epsilon$ is very close to zero, we have
$$Z_1~P_1~(f_1(R),p_1(R)=0), Z_k~I_k~(f_k(R),p_k(R)), Z_j~I_j~(f_j(R),p_j(R))~\forall~j \notin \{1,k\}.$$
Also, the sum of payments in the outcome vector $Z \equiv (Z_1,\ldots,Z_n)$ is
$\sum_{i \in N}p_i(R).$ This contradicts Pareto efficiency of $(f,\mathbf{p})$. \\

\noindent {\sc Step 2.} Fix a preference $\hat{R}_2$ of agent 2 such that
$\mathcal{\hat{S}}^{min}_2 = \{S_1\}$ and $\hat{w}_2(0) > 1.9$. We show that
at preference profile $\hat{R}=(\hat{R}_2,R^*_{-2})$, $S \nsubseteq f_1(\hat{R})$.
Suppose $S \subseteq f_1(\hat{R})$. Then, $S_1 \nsubseteq f_2(\hat{R})$ and $S_2 \nsubseteq f_3(\hat{R})$.
By Lemma \ref{lem:ir}, $p_2(\hat{R})=0,p_3(\hat{R})=0$.
Consider a new outcome vector:
$$Z_1= (\emptyset,p_1(\hat{R})-3.9), Z_2 = (S_1,\hat{w}_2(0)), Z_3= (S_2,w_3(0)), Z_j=(f_j(\hat{R}),p_j(\hat{R}))~\forall~j \notin \{1,2,3\}.$$
By quasilinearity of $R^*_1$, we get $Z_1~I^*_1~(f_1(\hat{R}),p_1(\hat{R}))$. By definition,
$$Z_2~\hat{I}_2~(\emptyset,0)~\hat{I}_2~(f_2(\hat{R}),p_2(\hat{R})).$$ Similarly, $Z_3~I^*_3~(f_3(\hat{R}),p_3(\hat{R}))$.
Further, the sum of payments in
the outcome vector $Z$ is
$$p_1(\hat{R})-3.9 + \hat{w}_2(0) + w_3(0) + \sum_{j \notin \{1,2,3\}}p_j(\hat{R}) > \sum_{j \in N}p_j(\hat{R}),$$
where the inequality used the fact that $p_2(\hat{R})=p_3(\hat{R})=0$ and $\hat{w}_2(0) > 1.9, w_3(0)=2$.
This contradicts Pareto efficiency of $(f,\mathbf{p})$. \\

\noindent {\sc Step 3.} Fix any quasilinear preference $\hat{R}_2$ of agent $2$ such that
$\mathcal{\hat{S}}^{min}_2 = \{S_1\}$ and $\hat{w}_2(t) = 1.9 - \delta$, where $\delta \in (0,1.9)$. We show that
at preference profile $\hat{R}=(\hat{R}_2,R^*_{-2})$, we must have
$S \subseteq f_1(\hat{R})$. If not, then by Step 1 and by Pareto efficiency,
$S_1 \subseteq f_2(\hat{R})$ and $S_2 \subseteq f_3(\hat{R})$. Now, consider the following outcome $Z'$:
$$Z'_1=(S,3.9),~Z'_2=\Big(\emptyset,p_2(\hat{R})-(1.9-\frac{\delta}{2})\Big),~Z'_3=(\emptyset,p_3(\hat{R})-2),$$
$$Z'_j=(f_j(\hat{R}),p_j(\hat{R}))~\forall~j \notin \{1,2,3\}.$$

Note that by Lemma \ref{lem:ir}, $p_1(\hat{R})=0$. Hence, using quasilinearity of $R^*_1$, we get $(f_1(\hat{R}),p_1(\hat{R})=0)~I^*_1~(S,3.9)$.
Similarly, by quasilinearity of $\hat{R}_2$, we get $Z'_2~\hat{P}_2~(f_2(\hat{R}),p_2(\hat{R}))$. Also, the sum of payments
in outcome $Z'$ is
$$3.9 + p_2(\hat{R}) - (1.9 - \frac{\delta}{2}) + p_3(\hat{R}) - 2 + \sum_{j \notin \{1,2,3\}}p_j(\hat{R}) = \sum_{i \in N}p_i(\hat{R}) + \frac{\delta}{2} > \sum_{i \in N}p_i(\hat{R}),$$
where we used the fact that $p_1(\hat{R})=0$.

We now prove that $(\emptyset,p_3(\hat{R})-2)~R^*_3~(f_3(\hat{R}),p_3(\hat{R}))$. For this, let $t=p_3(\hat{R})-2$. Note that $w(t) \le 2$ follows from the definition of $w$ and the fact that $t \le 0$ by Lemma \ref{lem:ir2}. This implies
$(\emptyset,t)~R^*_3~(f_3(\hat{R}),t+2) $ i.e. $(\emptyset,p_3(\hat{R})-2)~R^*_3~(f_3(\hat{R}),p_3(\hat{R}))$

Hence, we get a contradiction to Pareto efficiency. \\

\noindent {\sc Step 4.} In this step, we show that at preference profile $R^*$,
$$S_1 \subseteq f_2(R^*), S_2 \subseteq f_3(R^*),$$
and $$p_2(R^*)=p_3(R^*)=1.9.$$
Since $w_2(0) = 2$ in preference $R^*_2$, by Step 2, $S \nsubseteq f_1(R^*)$. By Step 1,
$S \nsubseteq f_i(R^*)$ for all $i \notin \{1,2,3\}$. By Pareto efficiency, it must be
$$S_1 \subseteq f_2(R^*), S_2 \subseteq f_3(R^*).$$

Now, assume for contradiction $p_2(R^*) > 1.9$.
Fix a preference $\hat{R}_2$ of agent 2 such that
$\mathcal{\hat{S}}^{min}_2 = \{S_1\}$ and $p_2(R^*) > \hat{w}_2(0) > 1.9$. By Step 2, $S_1 \subseteq f_2(\hat{R}_2,R^*_{-2})$.
By DSIC, $p_2(R^*)=p_2(\hat{R}_2,R^*_{-2})$. Hence, $p_2(\hat{R}_2,R^*_{-2}) > \hat{w}_2(0)$.
This is a contradiction to Lemma \ref{lem:ir2}.

Finally, assume for contradiction $p_2(R^*) < 1.9$. Then, consider any quasilinear preference $\hat{R}_2$ of agent $2$ such that
$\mathcal{\hat{S}}^{min}_2 = \{S_1\}$ and $p_2(R^*) < \hat{w}_2(0) < 1.9$. By Step 3, $S_1 \nsubseteq f_2(\hat{R}_2,R^*_{-2})$ and by Lemma \ref{lem:ir},
$p_2(\hat{R}_2,R^*_{-2})=0$. But by
reporting $R^*_2$, agent $2$ gets $S_1$ at a payment less than $\hat{w}_2(0)$. By quasilinearity of $\hat{R}_2$ and the fact that
$S_1 \nsubseteq f_2(\hat{R}_2,R^*_{-2})$, she prefers this outcome
to outcome $(f_2(\hat{R}_2,R^*_{-2}),0)$, which is a contradiction to DSIC.

This concludes the proof that $p_2(R^*)=1.9$. A similar argument establishes (with Steps 2 and 3 applied to agent 3)
that $p_3(R^*)=1.9$. \\

\noindent {\sc Step 5.} We now complete the proof. By Step 4, we know that the outcome
at preference profile $R^*$ satisfies:
$$S \nsubseteq f_1(R^*), S_1 \subseteq f_2(R^*), S_2 \subseteq f_3(R^*),$$
$$p_1(R^*)=0,~p_2(R^*)=p_3(R^*)=1.9.$$
Note that by Lemma \ref{lem:ir}, $p_j(R^*)=0$ for all $j \notin \{1,2,3\}$.\\
Now, consider the following outcome: $Z'_j=(f_j(R^*),p_j(R^*))$ for all $j \notin \{1,2,3\}$ and
$$Z'_1=(S,3.9),Z'_2=(\{\emptyset\},-0.025),~Z'_3=(\{\emptyset\},-0.025).$$
Note that sum of payments in $Z'$ is $3.85 > p_2(R^*)+p_3(R^*)=3.8$.\\
Agent $1$ is indifferent between $Z'_1$ and $(f_1(R^*),p_1(R^*))$. Agents 2 and 3 are also indifferent between $Z'_i$ and $(f_i(R^*),p_i(R^*))$. This follows from the fact
that
$(-0.025)+w_2(-0.025)=(-0.025) + w_3(-0.025)=  1.9.$

This contradicts Pareto efficiency.
\end{proof}

\subsection{Proof of Proposition \ref{prop:vcg}}

\begin{proof}
Fix a dichotomous domain $\mathcal{T}$. For some $t_L \in \mathbb{R}$, consider the GVCG-$t_L$ mechanism and denote it
as $(f,\mathbf{p}) \equiv (f^{vcg,t_L},\mathbf{p}^{vcg,t_L})$.
We prove the following claim first.
\begin{claim}\label{cl:loser}
For every agent $i \in N$ and for every profile of preferences $R \in \mathcal{T}^n$, the following hold:
\begin{align}
(f_i(R),p_i(R))~&R_i~(\emptyset,t_L), \label{eq:tl} \\
p_i(R)&=t_L~\qquad~\textrm{if}~f_i(R) \notin \mathcal{S}_i, \label{eq:tl2}
\end{align}
where $\mathcal{S}_i$ is the acceptable set of bundles of agent $i$ at $R_i$.
\end{claim}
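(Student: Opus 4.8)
The plan is to derive both statements directly from the definition of the GVCG-$t_L$ mechanism, using three ingredients only: the optimality of the allocation $f^{vcg,t_L}(R)$ for the objective $A\mapsto\sum_{k\in N}WP(A_k,t_L;R_k)$; the identity $WP(\emptyset,t_L;R_i)=0$, together with the dichotomous fact that $WP(A,t_L;R_i)=0$ whenever $A\notin\mathcal{S}_i$; and the defining indifference of the willingness to pay, $(A,t_L+WP(A,t_L;R_i))\,I_i\,(\emptyset,t_L)$.

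First I would record the elementary inequality
\[
\max_{A\in\mathcal{X}}\ \sum_{j\ne i}WP(A_j,t_L;R_j)\ \le\ \max_{A\in\mathcal{X}}\ \sum_{k\in N}WP(A_k,t_L;R_k)\ =\ \sum_{k\in N}WP\big(f^{vcg,t_L}_k(R),t_L;R_k\big),
\]
whose justification is this: take an allocation $\bar A\in\mathcal{X}$ attaining the leftmost maximum and replace $\bar A_i$ by $\emptyset$ (this keeps $\bar A$ feasible and does not change the sum over $j\ne i$); since $WP(\emptyset,t_L;R_i)=0$, the full sum for this modified allocation equals the leftmost quantity, and the full sum is at most $\sum_{k}WP(f^{vcg,t_L}_k(R),t_L;R_k)$ by the definition of $f^{vcg,t_L}$. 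Subtracting $\sum_{j\ne i}WP(f_j(R),t_L;R_j)$ from the two extreme terms and recalling the payment formula, this gives $p_i(R)\le t_L+WP(f_i(R),t_L;R_i)$. Then \eqref{eq:tl} follows at once: by monotonicity of $R_i$ in the transfer, $\big(f_i(R),p_i(R)\big)\,R_i\,\big(f_i(R),t_L+WP(f_i(R),t_L;R_i)\big)$, and the latter is $I_i$-indifferent to $(\emptyset,t_L)$ by the definition of $WP$.

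For \eqref{eq:tl2}, I would assume $f_i(R)\notin\mathcal{S}_i$, so that $WP(f_i(R),t_L;R_i)=0$ and hence $\sum_{k}WP(f_k(R),t_L;R_k)=\sum_{j\ne i}WP(f_j(R),t_L;R_j)$. The optimality of $f^{vcg,t_L}(R)$, compared against the allocation $\bar A$ above (whose full value equals $\max_{A}\sum_{j\ne i}WP(A_j,t_L;R_j)$), yields $\sum_{j\ne i}WP(f_j(R),t_L;R_j)\ge\max_{A}\sum_{j\ne i}WP(A_j,t_L;R_j)$, and the opposite inequality is immediate; so the two ``externality'' terms in the payment formula cancel, leaving $p_i(R)=t_L$.

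I do not expect a genuine obstacle: the argument is bookkeeping with the definitions, and in fact \eqref{eq:tl} does not even use the dichotomous structure. The one point that will deserve a careful sentence is the step of evaluating the externality maximum $\max_{A\in\mathcal{X}}\sum_{j\ne i}WP(A_j,t_L;R_j)$ at an allocation that gives agent $i$ the empty bundle, so that it becomes a feasible value of the full objective and can be compared with the value achieved by $f^{vcg,t_L}(R)$; this is exactly where $WP(\emptyset,t_L;R_i)=0$ (equivalently $\emptyset\notin\mathcal{S}_i$) enters. The remaining tools — monotonicity of $R_i$ in money and the defining indifference of $WP$ — are already in place from the preliminaries.
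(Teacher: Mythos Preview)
Your proposal is correct and follows essentially the same approach as the paper: both establish the key inequality $p_i(R)\le t_L+WP(f_i(R),t_L;R_i)$ from the optimality of $f^{vcg,t_L}(R)$ and the fact that the externality maximum is bounded by the full maximum, then use monotonicity and the defining indifference of $WP$ to obtain \eqref{eq:tl}. For \eqref{eq:tl2}, the paper bounds $p_i(R)$ from both sides (using \eqref{eq:tl} together with $(f_i(R),p_i(R))\,I_i\,(\emptyset,p_i(R))$ for one direction and the nonnegativity of the externality term for the other), whereas you compute the externality difference directly and show it vanishes; this is a minor stylistic variation rather than a different route.
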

\begin{proof}
The following inequalities follow straightforwardly.
\begin{align*}
\max_{A \in \mathcal{X} } \sum_{j\in N} WP(A_j,t_L;R_j) &\geq \max_{A \in \mathcal{X}} \sum_{j\ne i} WP(A_j,t_L;R_j) \\
\Rightarrow \sum_{j\in N} WP(f_j(R),t_L;R_j) &\geq \max_{A \in \mathcal{X}} \sum_{j\ne i} WP(A_j,t_L;R_j) \\
 \Rightarrow WP(f_i(R),t_L;R_i) + t_L &\geq \max_{A \in \mathcal{X}} \sum_{j \ne i} WP(A_j,t_L; R_j)- \sum_{j \ne i} WP(f_j(R),t_L;R_j) + t_L =p_i(R).
 \end{align*}
 But this implies that
 \begin{align*}
 \Big(f_i(R),p_i(R)\Big)~R_i~\Big(f_i(R),WP(f_i(R),t_L;R_i)+t_L\Big)~I_i~(\emptyset,t_L),
\end{align*}
where the second relation comes from the definition of $WP$.

 Suppose $f_i(R)$ is not an acceptable bundle at $R_i$, then $(f_i(R),p_i(R))~I_i~(\emptyset,p_i(R))$.
 Then, the relation (\ref{eq:tl}) implies that $t_L \ge p_i(R)$. But by construction, $p_i(R) \ge t_L$. Hence,
 $p_i(R) = t_L$ if $f_i(R) \notin \mathcal{S}_i$.
\end{proof}

Using Claim \ref{cl:loser}, we prove each assertion of the proposition. \\

\noindent {\sc Proof of (1).} We prove that the GVCG-$t_L$ is DSIC. Fix agent $i \in N$, $R_{-i} \in \mathcal{T}^{n-1}$, and $R_i,R'_i \in \mathcal{T}$.
Let $A \equiv f(R_i,R_{-i})$ and $A' \equiv f(R'_i,R_{-i})$. We start with a simple lemma.

\begin{lemma}\label{lem:lvcg}
If $A_i$ and $A'_i$ belong to the acceptable bundle set at $R_i$, then
$$p_i(R_i,R_{-i}) \leq p_i(R'_i,R_{-i}).$$
\end{lemma}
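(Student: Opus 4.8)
The plan is to work directly with the closed form of the GVCG-$t_L$ payment and to exploit the dichotomous structure at the single preference $R_i$ appearing in the hypothesis. Write $A \equiv f(R_i,R_{-i})$ and $A' \equiv f(R'_i,R_{-i})$ as in the statement, and set $W^*_{-i} := \max_{B \in \mathcal{X}} \sum_{j \ne i} WP(B_j,t_L;R_j)$. The first observation is that $W^*_{-i}$ depends only on $R_{-i}$, so it is literally the same number appearing in the payment expression at both profiles; since moreover the components $f_j$ for $j \ne i$ are unchanged when only agent $i$'s report changes, subtracting the two payment formulas gives
$$p_i(R'_i,R_{-i}) - p_i(R_i,R_{-i}) = \sum_{j \ne i} WP(A_j,t_L;R_j) - \sum_{j \ne i} WP(A'_j,t_L;R_j),$$
so it suffices to show that the right-hand side is nonnegative.

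For this I would invoke optimality of the allocation rule at the profile $(R_i,R_{-i})$: since $f(R_i,R_{-i})=A$ maximizes $\sum_{j \in N} WP(\cdot_j,t_L;R_j)$ over $\mathcal{X}$, and $A'$ is itself a legitimate element of $\mathcal{X}$ (because $f$ maps into $\mathcal{X}$), we have
$$WP(A_i,t_L;R_i) + \sum_{j \ne i} WP(A_j,t_L;R_j) \ \ge\ WP(A'_i,t_L;R_i) + \sum_{j \ne i} WP(A'_j,t_L;R_j).$$
Now the hypothesis enters decisively: both $A_i$ and $A'_i$ lie in the acceptable set $\mathcal{S}_i$ of $R_i$, so by the definition of a dichotomous preference $WP(A_i,t_L;R_i) = WP(A'_i,t_L;R_i) = w_i(t_L)$. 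These two terms cancel from the displayed inequality, leaving $\sum_{j \ne i} WP(A_j,t_L;R_j) \ge \sum_{j \ne i} WP(A'_j,t_L;R_j)$, which combined with the identity in the first paragraph yields $p_i(R_i,R_{-i}) \le p_i(R'_i,R_{-i})$.

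The only genuinely delicate point is the cancellation step: it is exactly there that dichotomousness (as opposed to an arbitrary classical preference) is used, since both candidate winning bundles for agent $i$ contribute the common value $w_i(t_L)$ to the welfare sum evaluated at $R_i$, no matter which acceptable bundle agent $i$ is assigned. Everything else is bookkeeping with the GVCG-$t_L$ expression, and no appeal to individual rationality, no subsidy, continuity, or finiteness is needed for this particular lemma.
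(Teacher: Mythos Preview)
Your argument is correct and is essentially the paper's own proof: subtract the two GVCG-$t_L$ payment expressions so that the common term $t_L + W^*_{-i}$ cancels, then use dichotomousness at $R_i$ to equate $WP(A_i,t_L;R_i)$ with $WP(A'_i,t_L;R_i)$ and reduce the question to the optimality of $A$ at the profile $(R_i,R_{-i})$. One remark: your parenthetical claim that ``the components $f_j$ for $j \ne i$ are unchanged when only agent $i$'s report changes'' is false in general---changing agent $i$'s report can certainly change the bundles assigned to the other agents---but fortunately your argument does not actually use this claim, since your displayed difference equation already (correctly) distinguishes $A_j$ from $A'_j$, and that equation follows simply from cancellation of $t_L + W^*_{-i}$.
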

\begin{proof}
Note that
\begin{align*}
p_i(R_i,R_{-i}) - p_i(R'_i,R_{-i}) &= \big[ \max_{\hat{A} \in \mathcal{X}} \sum_{j \ne i} WP(\hat{A}_j,t_L;R_j) - \sum_{j \ne i} WP(A_j,t_L;R_j)\big] \\
&- \big[ \max_{\hat{A} \in \mathcal{X}} \sum_{j \ne i} WP(\hat{A}_j,t_L;R_j)- \sum_{j \ne i} WP(A'_j,t_L;R_j)\big] \\
&= \sum_{j \ne i} WP(A'_j,t_L;R_j) - \sum_{j \ne i} WP(A_j,t_L;R_j) \\
&= WP(A'_i,t_L;R_i) + \sum_{j \ne i} WP(A'_j,t_L;R_j) \\
&- WP(A_i,t_L;R_i) - \sum_{j \ne i} WP(A_j,t_L;R_j) \\
&= \sum_{j \in N}WP(A'_j,t_L;R_j) - \sum_{j \in N}WP(A_j,t_L;R_j) \\
&\le 0,
\end{align*}
where the third equality follows from the fact that $A_i,A'_i$ belong to the acceptable bundle set at $R_i$ and the last inequality
follows from the fact that $f(R)=A$.
\end{proof}

Let $\mathcal{S}_i$ be the acceptable bundle set of agent $i$ according to $R_i$. We consider two cases. \\

\noindent{\sc Case 1.} $A_i \in \mathcal{S}_i$. If $A'_i \in \mathcal{S}_i$, then Lemma \ref{lem:lvcg} implies that
$$(A_i,p_i(R_i,R_{-i}))~I_i~(A'_i,p_i(R_i,R_{-i}))~R_i~(A'_i,p_i(R'_i,R_{-i})).$$
If $A'_i \notin \mathcal{S}_i$, then Equation (\ref{eq:tl2}) implies that $p_i(R'_i,R_{-i})=t_L$. But, then Inequality (\ref{eq:tl}) implies that
$$(A_i,p_i(R_i,R_{-i}))~R_i~(\emptyset,t_L)~I_i~(A'_i,t_L).$$

\noindent{\sc Case 2.} $A_i \notin \mathcal{S}_i$. By Equation \ref{eq:tl2}, $p_i(R_i,R_{-i})=t_L$.
Now, note that since $A_i \notin \mathcal{S}_i$, we have $WP(A_i,t_L;R_i)=0$, and hence,
$$\sum_{j \in N} WP(A_j,t_L;R_j)=\max_{\hat{A} \in \mathcal{X}} \sum_{j \not=i} WP(\hat{A}_j,t_L;R_j).$$
This implies that

$$\sum_{j \in N} WP(A'_j,t_L;R_j) \leq \sum_{j \in N} WP(A_j,t_L;R_j)=\max_{\hat{A} \in \mathcal{X}} \sum_{j \not=i} WP(\hat{A}_j,t_L;R_j),$$
where the first inequality followed from the definition of $A$. This implies that
$$WP(A'_i,t_L;R_i) \leq \max_{\hat{A} \in \mathcal{X}} \sum_{j \not=i} WP(\hat{A}_j,t_L;R_j)-\sum_{j \ne i} WP(A'_j,t_L;R_j) =p_i(R'_i,R_{-i})-t_L.$$
This further implies that $$\Big(A_i,p_i(R_i,R_{-i})\Big)~I_i~(\emptyset,t_L)~I_i~\Big(A'_i,WP(A'_i,t_L;R_i)+t_L\Big)~R_i~\Big(A'_i,p_i(R'_i,R_{-i})\Big).$$

Hence, in both cases, we see that agent $i$ prefers his outcome $(A_i,p_i(R_i,R_{-i}))$ in the GVCG mechanism to
the outcome obtained by reporting $R'_i$. This concludes the proof that the GVCG-$t_L$ is strategy-proof. \\

\noindent {\sc Proofs of (2) and (3).} By Inequality (\ref{eq:tl}), for every $i \in N$ and for every $R$, we have
$\Big(f_i(R),p_i(R)\Big)~R_i~(\emptyset,t_L)$. If $t_L \le 0$, we get that $\Big(f_i(R),p_i(R)\Big)~R_i~(\emptyset,0)$, which
is individual rationality. (3) follows from (2). \\

\noindent {\sc Proof of (4).} We now show that for $n=2$, the GVCG-$t_L$ mechanism (for any $t_L \in \mathbb{R}$)
is Pareto efficient in any dichotomous domain. Let $N=\{1,2\}$
and consider a preference profile $R \equiv (R_1,R_2)$ with $\mathcal{S}_1$ and $\mathcal{S}_2$ as
the collection of acceptable bundles of agents 1 and 2 respectively. We consider two cases. As before, denote
by $(f,\mathbf{p}) \equiv (f,\mathbf{p}^{vcg,t_L})$. \\

\noindent {\sc Case 1.} There exists $S_1 \in \mathcal{S}_1$
and $S_2 \in \mathcal{S}_2$ such that $S_1 \cap S_2 = \emptyset$. Then,
$f_1(R) \in \mathcal{S}_1$ and $f_2(R) \in \mathcal{S}_2$ and $p_1(R)=p_2(R)=t_L$.
Denote $A^*_1:=f_1(R)$ and $A^*_2:=f_2(R)$.
Assume for contradiction that there is an outcome profile $((A_1,p_1), (A_2,p_2))$ such that
$p_1+p_2 \ge 2t_L$, $(A_1,p_1)~R_1~(A^*_1,t_L)$, and $(A_2,p_2)~R_2~(A^*_2,t_L)$ with strict inequality
holding for one of them. By the last two relations, it must be that $p_1 \le t_L$ and $p_2 \le t_L$ with strict
inequality holding whenever these relations are strict, which
means that $p_1+p_2 \le 2t_L$. But this means $p_1+p_2 = 2t_L$ since we assumed $p_1+p_2 \ge 2t_L$.
Hence, none of the relations can hold strict, a contradiction. \\

\noindent {\sc Case 2.} For every $S_1 \in \mathcal{S}_1$ and for every $S_2 \in \mathcal{S}_2$,
we have $S_1 \cap S_2 \ne \emptyset$. Then, one of the agents in $\{1,2\}$ will be assigned
an acceptable bundle in $f$. Let this agent be $1$. Hence, $f_1(R) \in \mathcal{S}_1$
and $f_2(R) = \emptyset$. Further, $p_1(R) = w_2(t_L)+t_L$, where $w_2(t_L)$ is the willingness
to pay of agent $2$ at $t_L$, and $p_2(R)=t_L$.

Denote $A^*_1:=f_1(R)$ and assume for contradiction that there is an
outcome profile $((A_1,p_1), (A_2,p_2))$ such that
$p_1+p_2 \ge w_2(t_L)+2t_L$, $(A_1,p_1)~R_1~(A^*_1,w_2(t_L)+t_L)$, and $(A_2,p_2)~R_2~(\emptyset,t_L)$ with strict inequality
holding for one of them. Consider the following two subcases - by our assumption that
for every $S_1 \in \mathcal{S}_1$ and for every $S_2 \in \mathcal{S}_2$,
we have $S_1 \cap S_2 \ne \emptyset$, only the following two subcases may happen. \\

\begin{itemize}

\item {\sc Case 2a.} Suppose $A_1 \in \mathcal{S}_1$ and $A_2 \notin \mathcal{S}_2$.
Since $(A_1,p_1)~R_1~(A^*_1,w_2(t_L)+t_L)$ and $(A_2,p_2)~R_2~(\emptyset,t_L)$,
we have $p_1 \le w_2(t_L)+t_L$ and
$p_2 \le t_L$. Hence, we have $p_1+p_2 \le w_2(t_L)+2t_L$.

\item {\sc Case 2b.} Suppose $A_1 \notin \mathcal{S}_1$ and $A_2 \in \mathcal{S}_2$.
Inequality (\ref{eq:tl}) implies $(A_1,p_1)~R_1~(A^*_1,w_2(t_L)+t_L)~R_1~(\emptyset,t_L)$. Hence, $p_1 \le t_L$.
Similarly, Inequality (\ref{eq:tl}) for agent $2$ implies that $p_2 \le w_2(t_L)+t_L$. Hence, again we have
$p_1+p_2 \le w_2(t_L)+2t_L$.

\end{itemize}
Both the cases imply that $p_1+p_2 \le w_2(t_L)+2t_L$ with strict inequality holding if
$$(A_1,p_1)~P_1~\Big(A^*_1,w_2(t_L)+t_L\Big)~\textrm{or}~(A_2,p_2)~P_2~(\emptyset,t_L).$$
But we are given that $p_1+p_2 > w_2(t_L)+2t_L$ or
$(A_1,p_1)~P_1~(A^*_1,w_2(t_L))$ or $(A_2,p_2)~P_2~(\emptyset,t_L)$. This is a contradiction. \\

\noindent {\sc Proof of (5).} We show the impossibility for $N=\{1,2,3\}$ and $M=\{a,b\}$. The impossibility
can be extended easily to the case when $n > 3$ and $m > 2$ by (i) considering preference profiles where each agent
$i$ has minimal acceptable bundle set $\mathcal{S}_i^{min} \subseteq \{a,b\}$ and (ii) every agent $i \notin \{1,2,3\}$
has arbitrarily small willingness to pay (at every transfer level) on acceptable bundles. This is similar as in
the proof of Theorem \ref{theo:impos}.

Fix the GVCG-$t_L$ mechanism for some $t_L \in \mathbb{R}$ and denote it as $(f,\mathbf{p}) \equiv (f^{vcg,t_L},\mathbf{p}^{vcg,t_L})$.
Consider the following single-minded preference profile $(R_1,R_2,R_3)$ such that
$$\mathcal{S}^{min}_1=\{a\}, \mathcal{S}^{min}_2=\{b\}, \mathcal{S}^{min}_3=\{a,b\}.$$
The WP values at transfer level $t_L$ are as follows:
\begin{align*}
WP(\{a\},t_L;R_1) = w_1; WP(\{b\},t_L;R_2) = w_2; WP(\{a,b\},t_L;R_3) = w_3,
\end{align*}
such that $w_1+w_2 > w_3 > \max(w_1,w_2)$. Further, we require $R_1$ and $R_2$ to satisfy the following:
\begin{align*}
\big(\{a\},w_3 - w_2 + t_L\big)~I_1~\big(\emptyset,t_L - \epsilon\big)~\textrm{and}\big(\{b\},w_3-w_1+t_L\big)~I_2~\big(\emptyset,t_L - \epsilon\big).
\end{align*}
Such dichotomous preferences $R_1,R_2,R_3$ are possible to construct. Figure \ref{fig:impos_pe} illustrates the some indifference vectors of $R_1, R_2$, and $R_3$.
\begin{figure}[!tbh]
\centering
\includegraphics[height=3in]{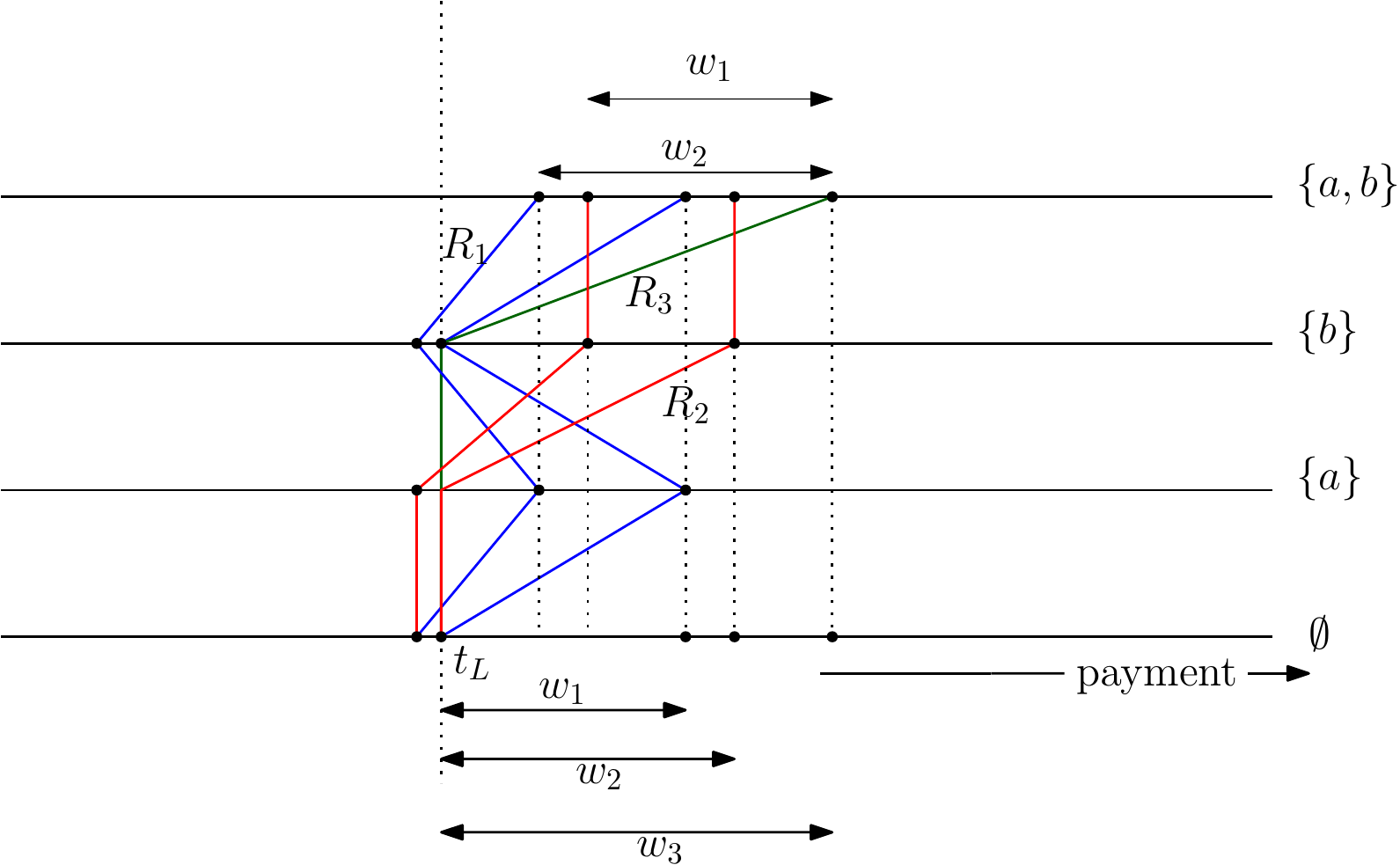}
\caption{A profile of dichotomous preferences for $N=\{1,2,3\}$ and $M=\{a,b\}$.}
\label{fig:impos_pe}
\end{figure}

Hence, the GVCG-$t_L$ mechanism produces the following outcome:
\begin{align*}
f_1(R_1,R_2,R_3)&=\{a\},~~f_2(R_1,R_2,R_3)=\{b\},~~f_3(R_1,R_2,R_3)=\emptyset; \\
p_1(R_1,R_2,R_3)&=w_3-w_2+t_L,~~p_2(R_1,R_2,R_3)=w_3 - w_1+t_L,~~p_3(R_1,R_2,R_3)=t_L.
\end{align*}
Consider the following outcome profile
\begin{align*}
z_1:=(\emptyset,t_L - \epsilon); z_2:=(\emptyset,t_L-\epsilon); z_3:= (\{a,b\},w_3+t_L).
\end{align*}
By construction (see Figure \ref{fig:impos_pe}), each agent $i \in \{1,2,3\}$ is indifferent
between $z_i$ and $(f_i(R),p_i(R))$.
Total transfers in the outcome profile $z$ is: $w_3+ 3t_L - 2\epsilon$.
Total transfers in the GVCG-$t_L$ mechanism: $2w_3 - (w_1+w_2) + 3t_L < w_3 + 3t_L - \epsilon$, where
the inequality follows since $w_3 < w_1+w_2$ and $\epsilon > 0$ is arbitrarily close to zero. Hence, the GVCG-$t_L$ mechanism
is not Pareto efficient.
\end{proof}

\subsection{Proof of Theorem \ref{theo:gvcg}}

\begin{proof}
By Proposition \ref{prop:vcg}, the GVCG mechanism is DSIC, individually rational, and satisfies no subsidy.
Now, we prove Pareto efficiency. Let $\mathcal{T}$ be a dichotomous domain
satisfying positive income effect. Assume for contradiction that there exists a profile $R \in \mathcal{T}^n$ such that $ (f^{vcg}(R),\mathbf{p}^{vcg}(R))$
is not Pareto efficient. As before, let $(\mathcal{S}_i,w_i)$ denote the dichotomous preference $R_i$ of any agent $i$.
Let  $f^{vcg}(R) \equiv A $ and $\mathbf{p}^{vcg}(R) \equiv (p_1,\ldots,p_n)$. Then there exists, an outcome profile $((A'_1,p'_1),\ldots,(A'_n,p'_n))$ which
Pareto dominates $((A_1,p_1),\ldots,(A_n,p_n))$.

We consider various cases to derive relationship between $p_i$ and $p'_i$ for each $i \in N$. \\

\noindent {\sc Case 1.} Pick $i \in N$ such that $A_i, A'_i \in \mathcal{S}_i$
or $A_i,A'_i \notin \mathcal{S}_i$. Dichotomous preference implies that $(A'_i,p'_i)~I_i~(A_i,p'_i)$.
But $(A'_i,p'_i)~R_i~(A_i,p_i)$ implies that $(A_i,p'_i)~R_i~(A_i,p_i)$. Hence, we get
\begin{align}\label{eq:p1}
p_i &\ge p'_i~\forall~i~\textrm{such that}~A_i,A'_i \in \mathcal{S}_i~\textrm{or}~A_i,A'_i \notin \mathcal{S}_i.
\end{align}

\noindent {\sc Case 2.} Pick $i \in N$ such that $A_i \notin \mathcal{S}_i$ but $A'_i \in \mathcal{S}_i$.
This implies that $p_i=0$ (by Lemma \ref{lem:ir}).
Hence, $(A'_i,p'_i)~R_i~(A_i,p_i)~I_i~(A_i,0)~I_i~(\emptyset,0)~I_i~(A'_i,w_i(0)).$
Thus,
\begin{align}\label{eq:p2}
w_i(0) + p_i &\ge p'_i~\forall~i~\textrm{such that}~A_i \notin \mathcal{S}_i, A'_i \in \mathcal{S}_i.
\end{align}

\noindent {\sc Case 3.} Pick $i \in N$ such that $A_i \in \mathcal{S}_i$ but $A'_i \notin \mathcal{S}_i$.
Since $A'_i \notin \mathcal{S}_i$, we can write $(A'_i,p'_i)~I_i~(\emptyset,p'_i)~I_i~(A_i,p'_i+w_i(p'_i))$.
But $(A'_i,p'_i)~R_i~(A_i,p_i)$ implies that
$$p_i \ge p'_i+w_i(p'_i).$$
Also, $(\emptyset,p'_i)~I_i~(A'_i,p'_i)~R_i~(A_i,p_i)~R_i~(\emptyset,0)$, where the last inequality
is due to individual rationality of the GVCG mechanism. Hence, $p'_i \le 0$. But then, positive income effect implies
that $w_i(p'_i) \ge w_i(0)$. This gives us
\begin{align}\label{eq:p3}
p_i &\ge p'_i + w_i(0)~\forall~i~\textrm{such that}~A_i \in \mathcal{S}_i, A'_i \notin \mathcal{S}_i.
\end{align}

By summing over Inequalities \ref{eq:p1}, \ref{eq:p2}, and \ref{eq:p3},  we get
\begin{align*}
\sum_{i \in N} p_i &\ge \sum_{i \in N} p'_i + \sum_{i: A_i \in \mathcal{S}_i, A'_i \notin \mathcal{S}_i}w_i(0) - \sum_{i: A_i \notin \mathcal{S}_i, A'_i \in \mathcal{S}_i} w_i(0). \\
&= \sum_{i \in N} p'_i + \sum_{i: A_i \in \mathcal{S}_i, A'_i \notin \mathcal{S}_i}w_i(0) + \sum_{i: A_i,A'_i \in \mathcal{S}_i}w_i(0) - \sum_{i: A_i,A'_i \in \mathcal{S}_i}w_i(0) - \sum_{i: A_i \notin \mathcal{S}_i, A'_i \in \mathcal{S}_i} w_i(0). \\
& = \sum_{i \in N} p'_i + \sum_{i \in N}WP(A_i,0;R_i) - \sum_{i \in N}WP(A'_i,0;R_i) \\
&\ge \sum_{i \in N} p'_i,
\end{align*}
where the inequality follows from the definition of the GVCG mechanism. Also, note that the inequality above is strict if any of the Inequalities \ref{eq:p1}, \ref{eq:p2}, and \ref{eq:p3}
is strict. This contradicts the fact that the outcome $((A'_1,p'_1),\ldots,(A'_n,p'_n))$ Pareto dominates $((A_1,p_1),\ldots,(A_n,p_n))$.
\end{proof}

\subsection{Proof of Theorem \ref{theo:unique}}

\begin{proof}
Let $(f,\mathbf{p})$ be a Pareto efficient, DSIC, IR mechanism satisfying no subsidy.
The proof proceeds in two steps. We assume without loss of generality that
at every preference profile $R$, if an agent $i \in N$ is assigned an acceptable
bundle $f_i(R)$, then $f_i(R)$ is a {\em minimal} acceptable bundle at $R_i$, i.e., there
does not exist another acceptable bundle $S_i \subsetneq f_i(R)$ at $R_i$.~\footnote{This is without loss of
generality for the following reason. For every Pareto efficient, DSIC, IR mechanism $(f,\mathbf{p})$ satisfying
no subsidy, we can construct another mechanism $(f',\mathbf{p}')$ such that: for all $R$ and for all $i \in N$,
$f'_i(R) \subseteq f_i(R)$ and $f'_i(R)$ is a minimal acceptable bundle at $R_i$ whenever $f_i(R)$ is an acceptable bundle
at $R_i$ and $f'_i(R)=f_i(R)$ otherwise. Further, $\mathbf{p}'=\mathbf{p}$. It is routine to verify that
$(f',\mathbf{p}')$ is DSIC, IR, Pareto efficient and satisfies no subsidy. Finally, by construction, if $(f',\mathbf{p}')$ is a generalized VCG mechanism, then
$(f,\mathbf{p})$ is also a generalized VCG mechanism.} We now proceed with the proof in two Steps. \\

\noindent {\sc Allocation is GVCG allocation.} In this step, we argue that $f$ must satisfy:
$$f(R) \in \arg \max_{A \in \mathcal{X}} \sum_{i \in N}WP(A_i,0;R_i)~\forall~R \in \mathcal{T}^n$$
Assume for contradiction that for some $R \in \mathcal{T}^n$, we have
$$\sum_{i \in N}WP(f_i(R),0;R_i) < \max_{A \in \mathcal{X}} \sum_{i \in N}WP(A_i,0;R_i).$$
Before proceeding with the rest of the proof, we fix a generalized VCG mechanism $(f^{vcg},p^{vcg})$ and introduce a notation. For every $R'$, denote by
$$N_{0+}(R') := \Big\{i \in N: \big[(f^{vcg}_i(R'),p^{vcg}_i(R'))~I'_i~(\emptyset,0)\big]~\textrm{and}~\big[(f_i(R'),p_i(R'))~P'_i~(\emptyset,0)\big]\Big\}.$$

We now construct a sequence of preference profiles, starting with preference profile $R$, as follows.
Let $R^0:= R$. Also, we will maintain a sequence of subsets of agents, which is initialized as $B^0:=\emptyset$.
We will denote the preference profile constructed in step $t$ of the sequence as $R^t$ and the
willingness to pay map at preference $R^t_i$ as $w^t_i$ for each $i \in N$.
\\

\begin{enumerate}
\item[S1.] If $N_{0+}(R^t) \setminus B^t = \emptyset$, then stop. Else, go to the next step.

\item[S2.] Choose $k^t \in N_{0+}(R^t) \setminus B^t$ and consider $R^{t+1}_{k^t}$ to be a quasilinear dichotomous
preference with valuation $w_{k^t}^{t+1}(0) \in (p_{k^t}(R^t),w_{k^t}^t(0))$ and a unique minimal acceptable bundle $f_{k^t}(R^t)$ - such a quasilinear
preference exists because $\mathcal{T} \supseteq \mathcal{D}^{QL}$.
Let $R^{t+1}_j=R^t_j$ for all $j \ne k^t$.

\item[S3.] Set $B^{t+1}:= B^t \cup \{k^t\}$ and $t:= t+1$. Repeat from Step S1.

\end{enumerate}

Because of finiteness of number of agents, this process will terminate finitely in some $T < \infty$ steps. We establish some claims about the preference
profiles generated in this procedure.

\begin{claim}\label{cl:scl1}
For every $t \in \{0,\ldots,T-1\}$, $f_{k^t}(R^{t+1}) = f_{k^t}(R^t)$ and $p_{k^t}(R^{t+1})=p_{k^t}(R^t)$.
\end{claim}
\begin{proof}
Fix $t$ and assume for contradiction $f_{k^t}(R^{t+1}) \ne f_{k^t}(R^t)$. Since $f_{k^t}(R^t)$ is the unique minimal acceptable
bundle at $R^{t+1}_{k^t}$ and $f$ only assigns a minimal acceptable bundle whenever it assigns
acceptable bundles, it must be that $f_{k^t}(R^{t+1})$ is not an acceptable bundle at $R^{t+1}_{k^t}$.
Then, by Lemma \ref{lem:ir}, we get
$p_{k^t}(R^{t+1})=0$. Since $w_{k^t}^{t+1}(0) > p_{k^t}(R^t)$ and $f_{k^t}(R^t)$ is an acceptable bundle at $R^{t+1}_{k^t}$, we get
$$(f_{k^t}(R^t),p_{k^t}(R^t))~P_{k^t}^{t+1}~(\emptyset,0)~I_{k^t}^{t+1}(f_{k^t}(R^{t+1}),p_{k^t}(R^{t+1})).$$
This contradicts DSIC. Finally, if $f_{k^t}(R^{t+1}) = f_{k^t}(R^t)$, we must have $p_{k^t}(R^{t+1}) = p_{k^t}(R^t)$ due
to DSIC since acceptable bundle at $R^{t+1}_{k^t}$ is $f_{k^t}(R^t)$ and $f_{k^t}(R^t)$ is also an acceptable bundle
at $R^t_{k^t}$.
\end{proof}

The next claim establishes a useful inequality.
\begin{claim}\label{cl:scl2}
For every $t \in \{0,\ldots,T\}$, the following holds:
$$w^t_{k^t}(0)+\max_{A \in \mathcal{X}, A_{k^t}= f_{k^t}(R^t)} \sum_{j \ne k^t} WP_j(A_j,0; R^t_j) \le \max_{A \in \mathcal{X}} \sum_{j \ne k^t}  WP_j(A_j,0; R^t_j).$$
\end{claim}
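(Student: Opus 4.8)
The plan is to fix $t \in \{0,\dots,T-1\}$ (so that $k^t$ is defined), write $k := k^t$, and organize the whole argument around a single identity: that agent $k$ has zero GVCG externality at $R^t$, namely
$$\max_{A \in \mathcal{X}} \sum_{j \ne k} WP(A_j,0;R^t_j) \; = \; \max_{A \in \mathcal{X}} \sum_{j \in N} WP(A_j,0;R^t_j). \qquad (\star)$$
First I would record that $f_k(R^t)$ is an acceptable bundle at $R^t_k$: since $k \in N_{0+}(R^t)$ we have $(f_k(R^t),p_k(R^t))~P^t_k~(\emptyset,0)$, and if $f_k(R^t)$ were unacceptable then Lemma \ref{lem:ir} would give $p_k(R^t)=0$, making $(f_k(R^t),p_k(R^t))~I^t_k~(\emptyset,0)$ --- a contradiction. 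Hence $WP(f_k(R^t),0;R^t_k)=w^t_k(0)$. Granting $(\star)$, the claim falls out immediately: for every $A \in \mathcal{X}$ with $A_k = f_k(R^t)$ we have $w^t_k(0) + \sum_{j \ne k} WP(A_j,0;R^t_j) = \sum_{j \in N} WP(A_j,0;R^t_j) \le \max_{A' \in \mathcal{X}} \sum_{j \in N} WP(A'_j,0;R^t_j)$, so maximizing the left-hand side over all such $A$ and applying $(\star)$ on the right gives exactly the inequality asserted in the claim.

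The substantive step is therefore $(\star)$. Here I would use that $k \in N_{0+}(R^t)$ also forces $(f^{vcg}_k(R^t),p^{vcg}_k(R^t))~I^t_k~(\emptyset,0)$, and split on whether $f^{vcg}_k(R^t)$ is acceptable at $R^t_k$: if it is, this indifference together with the defining equation of $WP$ pins down $p^{vcg}_k(R^t)=WP(f^{vcg}_k(R^t),0;R^t_k)$; if it is not, then $WP(f^{vcg}_k(R^t),0;R^t_k)=0$ and the indifference pins down $p^{vcg}_k(R^t)=0$. In either case $p^{vcg}_k(R^t)=WP(f^{vcg}_k(R^t),0;R^t_k)$. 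Substituting this into the GVCG-$0$ payment formula for agent $k$ and rearranging gives $\sum_{j\in N}WP(f^{vcg}_j(R^t),0;R^t_j)=\max_{A\in\mathcal{X}}\sum_{j\ne k}WP(A_j,0;R^t_j)$; since the GVCG allocation maximizes $\sum_{j\in N}WP(\cdot,0;R^t_j)$, the left-hand side equals $\max_{A\in\mathcal{X}}\sum_{j\in N}WP(A_j,0;R^t_j)$, which is precisely $(\star)$.

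I do not expect a genuine obstacle. The only point needing a little care is to treat the two sub-cases (GVCG hands agent $k$ an acceptable versus an unacceptable bundle) uniformly through the identity $p^{vcg}_k(R^t)=WP(f^{vcg}_k(R^t),0;R^t_k)$; once that is in place, the remainder is a routine substitution into the payment formula, valid for whichever maximizer $f^{vcg}$ has been fixed, since only the value $\sum_{j\in N}WP(f^{vcg}_j(R^t),0;R^t_j)$ enters and ties in the GVCG allocation are therefore irrelevant.
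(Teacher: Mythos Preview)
Your argument is correct. The core identity $(\star)$ --- that agent $k^t$ has zero GVCG externality at $R^t$ --- is exactly what drives the claim, and you derive it cleanly by reading $p^{vcg}_k(R^t)=WP(f^{vcg}_k(R^t),0;R^t_k)$ off the indifference in $N_{0+}(R^t)$ (handling both sub-cases uniformly) and substituting into the GVCG payment formula. Once $(\star)$ is in hand, together with the observation that $f_k(R^t)$ is acceptable so $WP(f_k(R^t),0;R^t_k)=w^t_k(0)$, the claimed inequality is immediate.

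The paper proceeds differently: it argues by contradiction, assuming the inequality fails and then showing, in two separate steps, first that $f^{vcg}_{k^t}(R^t)$ must be acceptable (otherwise the failure of the inequality contradicts the optimality of $f^{vcg}$) and second that the failure of the inequality forces $p^{vcg}_{k^t}(R^t) < w^t_{k^t}(0)$, which contradicts the indifference $(f^{vcg}_{k^t}(R^t),p^{vcg}_{k^t}(R^t))\,I^t_{k^t}\,(\emptyset,0)$ coming from $k^t \in N_{0+}(R^t)$. Your direct route is more transparent: by isolating $(\star)$ you make explicit the one structural fact being used, and you avoid the case split the paper needs in its Step~1, since your derivation of $p^{vcg}_k(R^t)=WP(f^{vcg}_k(R^t),0;R^t_k)$ works whether or not the GVCG bundle is acceptable. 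The paper's approach, on the other hand, never needs to verify separately that $f_k(R^t)$ is acceptable at $R^t_k$, since it works entirely with the GVCG allocation; but this is a minor trade-off, and overall your organization is tidier.
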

\begin{proof}
Pick some $t \in \{0,\ldots,T\}$ and suppose the above inequality does not hold. We complete the proof in two steps. \\

\noindent {\sc Step 1.} In this step, we argue that $f^{vcg}_{k^t}$ must be an acceptable bundle for agent $k^t$ at preference $R^t$. If this is not
true, then we must have
\begin{align*}
\sum_{j \in N} WP(f^{vcg}_j(R^t),0;R^t_j) &= \sum_{j \ne k^t} WP(f^{vcg}_j(R^t),0;R^t_j) \\
&\le \max_{A \in \mathcal{X}} \sum_{j \ne k^t}WP(A_j,0;R^t_j) \\
&< w^t_{k^t}(0)+\max_{A \in \mathcal{X}, A_{k^t}= f_{k^t}(R^t)} \sum_{j \ne k^t} WP(A_j,0; R^t_j) \\
&= WP(f_{k^t}(R^t), 0;R^t_{k^t}) + \max_{A \in \mathcal{X}, A_{k^t}= f_{k^t}(R^t)} \sum_{j \ne k^t} WP(A_j,0; R^t_j),
\end{align*}
where the last inequality follows from our assumption that the claimed inequality does not hold and the last equality
follows from the fact that $f_{k^t}(R^t)$ is an acceptable bundle of agent $k^t$ at $R^t_{k^t}$. But, then the resulting inequality contradicts
the definition of $f^{vcg}$. \\

\noindent {\sc Step 2.} We complete the proof in this step. Notice that the payment of agent $k^t$ in $(f^{vcg},p^{vcg})$ is
defined as follows.
\begin{align*}
p^{vcg}_{k^t}(R^t) &= \max_{A \in \mathcal{X}} \sum_{j \ne k^t}WP(A_j,0;R^t_j) - \sum_{j \ne k^t} WP(f^{vcg}_j(R^t),0;R^t_j) \\
&< w^t_{k^t}(0)+\max_{A \in \mathcal{X}: A_{k^t}= f_{k^t}(R^t)} \sum_{j \ne k^t} WP(A_j,0; R^t_j) - \sum_{j \ne k^t} WP(f^{vcg}_j(R^t),0;R^t_j) \\
&= w^t_{k^t}(0)+\max_{A \in \mathcal{X}: A_{k^t}= f_{k^t}(R^t)} \sum_{j \ne k^t} WP(A_j,0; R^t_j) \\
&- \sum_{j \in N} WP(f^{vcg}_j(R^t),0;R^t_j) + WP(f^{vcg}_{k^t}(R^t),0;R^t_{k^t}) \\
&= w^t_{k^t}(0)+\max_{A \in \mathcal{X}, A_{k^t}= f_{k^t}(R^t)} \sum_{j \in N} WP(A_j,0; R^t_j) - \sum_{j \in N} WP(f^{vcg}_j(R^t),0;R^t_j) \\
&\le  w^t_{k^t}(0),
\end{align*}
where the strict inequality followed from our assumption and the last equality follows from the fact both $f_{k^t}(R^t)$ and $f^{vcg}_{k^t}(R^t)$ are
acceptable bundles for agent $k^t$ at $R^t_{k^t}$ (Step 1). But, this implies that $$(f^{vcg}_{k^t}(R^t),p^{vcg}_{k^t}(R^t))~P^t_{k^t}~(f^{vcg}_{k^t}(R^t), w^t_{k^t}(0))~I^t_{k^t}~(\emptyset,0).$$
This is a contradiction to the fact that $k^t \in N_{0+}(R^t)$. This completes the proof.
\end{proof}

We now establish an important claim regarding an inequality satisfied by the sequence of preferences generated.
\begin{claim}\label{cl:scl3}
For every $t \in \{0,\ldots,T\}$,
$$\sum_{j \in N}WP(f_j(R^t),0;R^t_j) < \sum_{j \in N}WP(f^{vcg}_j(R^t),0;R^t_j).$$
\end{claim}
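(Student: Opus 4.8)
The plan is to verify the claim directly for each $t \in \{0,\dots,T\}$; no genuine induction is needed, because the case $t=0$ is the contradiction hypothesis of this step of the proof, and the case $t\ge 1$ will follow from Claims \ref{cl:scl1} and \ref{cl:scl2} applied one step earlier. For $t=0$ we have $R^0=R$, so the inequality to be shown is $\sum_{j\in N}WP(f_j(R),0;R_j)<\max_{A\in\mathcal{X}}\sum_{j\in N}WP(A_j,0;R_j)=\sum_{j\in N}WP(f^{vcg}_j(R),0;R_j)$, where the first inequality is exactly what was assumed for contradiction in this step and the equality is the definition of $f^{vcg}$.

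For $t\in\{1,\dots,T\}$, set $k:=k^{t-1}$. By Step~S2, $R^t$ agrees with $R^{t-1}$ on every agent other than $k$, while $R^t_k$ is a quasilinear dichotomous preference whose unique minimal acceptable bundle is $f_k(R^{t-1})$ and whose valuation satisfies $w^t_k(0)<w^{t-1}_k(0)$. By Claim \ref{cl:scl1}, $f_k(R^t)=f_k(R^{t-1})$, which is acceptable at $R^t_k$, so $WP(f_k(R^t),0;R^t_k)=w^t_k(0)$; and since $WP(\cdot,0;R^t_j)=WP(\cdot,0;R^{t-1}_j)$ for $j\ne k$ while $\bigl(f_j(R^t)\bigr)_{j\ne k}$ is a feasible allocation of the objects outside $f_k(R^{t-1})$, I would write
\[
\sum_{j\in N}WP(f_j(R^t),0;R^t_j)=w^t_k(0)+\sum_{j\ne k}WP(f_j(R^t),0;R^{t-1}_j)\ \le\ w^t_k(0)+\max_{A\in\mathcal{X}:\,A_k=f_k(R^{t-1})}\ \sum_{j\ne k}WP(A_j,0;R^{t-1}_j).
\]
Since $w^t_k(0)<w^{t-1}_k(0)$, Claim \ref{cl:scl2} at step $t-1$ makes the right-hand side strictly less than $\max_{A\in\mathcal{X}}\sum_{j\ne k}WP(A_j,0;R^{t-1}_j)$. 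On the other hand, taking any allocation attaining $\max_{A\in\mathcal{X}}\sum_{j\ne k}WP(A_j,0;R^{t-1}_j)$ and reassigning the empty bundle to agent $k$ (which contributes $WP(\emptyset,0;R^t_k)=0$, and leaves the summands for $j\ne k$ unchanged) shows $\max_{A\in\mathcal{X}}\sum_{j\in N}WP(A_j,0;R^t_j)\ge\max_{A\in\mathcal{X}}\sum_{j\ne k}WP(A_j,0;R^{t-1}_j)$. Chaining these inequalities yields $\sum_{j\in N}WP(f_j(R^t),0;R^t_j)<\max_{A\in\mathcal{X}}\sum_{j\in N}WP(A_j,0;R^t_j)=\sum_{j\in N}WP(f^{vcg}_j(R^t),0;R^t_j)$, which is the claim at step $t$.

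The calculation is routine; the one point I would be careful about is the two-sided use of feasibility — bounding the $f$-objective from above uses that $f(R^t)$ assigns agent $k$ exactly $f_k(R^{t-1})$, so the remaining agents' bundles form a feasible allocation of $M\setminus f_k(R^{t-1})$, whereas when bounding $\max_{A\in\mathcal{X}}\sum_{j\in N}WP(A_j,0;R^t_j)$ from below we may instead give $k$ nothing — together with the fact that the strict inequality in the conclusion comes precisely from the choice $w^t_k(0)<w^{t-1}_k(0)$ in Step~S2, which converts the weak inequality supplied by Claim \ref{cl:scl2} into a strict one. I do not foresee any substantive obstacle beyond invoking Claims \ref{cl:scl1} and \ref{cl:scl2} at the correct index.
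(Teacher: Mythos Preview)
Your argument is correct and follows essentially the same chain of inequalities as the paper's proof: both invoke Claim~\ref{cl:scl1} to identify $f_k(R^t)=f_k(R^{t-1})$, then Claim~\ref{cl:scl2} at index $t-1$ together with $w^t_k(0)<w^{t-1}_k(0)$ to obtain the strict bound, and finally pass from the max over $j\ne k$ to the full max. Your observation that no genuine induction is needed is apt---the paper frames the argument as induction but never actually uses the inductive hypothesis, so your direct verification is in fact a slight streamlining of the same proof.
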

\begin{proof}
The inequality holds for $t=0$ by assumption. We now use induction. Suppose the inequality holds for $t \in \{0,\ldots,\tau-1\}$.
We show that it holds for $\tau$. To see this, denote $k \equiv k^{\tau-1}$. By Claim \ref{cl:scl1}, we know
that $f_k(R^{\tau-1})=f_k(R^{\tau})$. Further, by definition, $f_k(R^{\tau})$ belongs to the acceptable bundle of $k$ at $R^{\tau}_k$ and $R^{\tau-1}_k$.
Now, observe the following:
\begin{align*}
\sum_{j \in N}WP(f_j(R^{\tau}),0;R^{\tau}_j) &= w^{\tau}_{k}(0) + \sum_{j \ne k}WP(f_j(R^{\tau}),0;R^{\tau}_j)~\qquad~\textrm{(follows from definition of $k$)} \\
&\le w^{\tau}_{k}(0) + \max_{A \in \mathcal{X}:A_k=f_k(R^{\tau-1})=f_k(R^{\tau})}\sum_{j \ne k}WP(A_j,0;R^{\tau}_j) \\
&= w^{\tau}_{k}(0) + \max_{A \in \mathcal{X}:A_k=f_k(R^{\tau-1})=f_k(R^{\tau})}\sum_{j \ne k}WP(A_j,0;R^{\tau-1}_j) \\
& \textrm{(using the fact that $R^{\tau}_j=R^{\tau-1}_j$ for all $j \ne k$)} \\
&\le w^{\tau}_{k}(0) - w^{\tau-1}_{k}(0) + \max_{A \in \mathcal{X}} \sum_{j \ne k}WP(A_j,0;R^{\tau-1}_j) \qquad~\textrm{(using Claim \ref{cl:scl2})} \\
&< \max_{A \in \mathcal{X}} \sum_{j \ne k}WP(A_j,0;R^{\tau-1}_j)~\qquad~\textrm{(using the fact that $w^{\tau}_{k}(0) < w^{\tau-1}_{k}(0)$)} \\
&= \max_{A \in \mathcal{X}} \sum_{j \ne k}WP(A_j,0;R^{\tau}_j) \\
&\le \max_{A \in \mathcal{X}} \sum_{j \in N}WP(A_j,0;R^{\tau}_j).
\end{align*}
\end{proof}

We now complete our claim that the allocation is the same as in a GVCG mechanism. Let $R^T \equiv R'$. Let $f^{vcg}(R')=A^{vcg}$ and $f(R')=A'$.
Partition the set of agents as follows.
\begin{align*}
N_{++} &:= \{i : WP_i(A^{vcg}_i,0;R'_i) = WP(A'_i,0;R'_i) > 0\} \\
N_{+-} &:= \{ i: WP_i(A^{vcg}_i,0;R'_i) > 0, WP(A'_i,0;R'_i) = 0 \} \\
N_{-+} &:= \{ i: WP_i(A^{vcg}_i,0;R'_i) = 0, WP(A'_i,0;R'_i) > 0\} \\
N_{--} &:= \{ i: WP_i(A^{vcg}_i,0;R'_i) = WP(A'_i,0;R'_i) = 0\}.
\end{align*}

Now, consider the following consumption bundle $Z$:
\begin{equation*}
Z_i := \left\{
\begin{array}{ll}
(A^{vcg}_i,p_i(R')) & \textrm{if}~i \in N_{++} \cup N_{--}\\
(A^{vcg}_i, p_i(R') - WP(A'_i,0;R'_i)) & \textrm{if}~i \in N_{-+} \\
(A^{vcg}_i,WP(A^{vcg}_i,0;R'_i)) & \textrm{if}~i \in N_{+-}
\end{array}
\right.
\end{equation*}
Notice that for each $i \in N_{++} \cup N_{--}$, we have $Z_i = (A^{vcg}_i,p_i(R'))~I'_i~(A'_i,p_i(R'))$.
For each $i \in N_{+-}$, we know that $WP(A'_i,0;R'_i)=0$ - this implies that $A'_i$
is not an acceptable bundle at $R'_i$. Hence, for all $i \in N_{+-}$, we have
$Z_i = (A^{vcg}_i,WP(A^{vcg}_i,0;R'_i))~I'_i~(\emptyset,0)~I'_i~(A'_i,p_i(R'))$, where
the last relation follows from Lemma \ref{lem:ir}. Finally, for all $i \in N_{-+}$, $WP_i(A^{vcg}_i,0;R'_i) = 0$ implies that $(A^{vcg}_i,p^{vcg}_i(R'))~I'_i~(\emptyset,0)$.
Then, for every $i \in N_{-+}$, either we have $(A'_i,p_i(R'))~I'_i~(\emptyset,0)$ or we have $i \in B^T$ (i.e., $R'_i$ is a quasilinear
preference). In the first case, $p_i(R') = WP(A'_i,0;R'_i) $ implies
$$(A^{vcg}_i,p_i(R') - WP(A'_i,0;R'_i))~I'_i~(A^{vcg}_i,0)~I'_i~(\emptyset,0)~I'_i~(A'_i,p_i(R')).$$
In the second case, quasilinearity of $R'_i$ implies $(A^{vcg}_i,p_i(R') - WP(A'_i,0;R'_i))~I'_i~(A'_i,p_i(R'))$.
This completes the argument that $Z_i~R'_i~(A'_i,p_i(R'))$ for every $i \in N$.

Now, observe the sum of payments across all agents in $Z$ is:
\begin{align*}
&\sum_{i \notin N_{+-}}p_i(R') - \sum_{i \in N_{-+}}WP(A'_i,0;R'_i) + \sum_{i \in N_{+-}}WP(A^{vcg}_i,0;R'_i) \\
& = \sum_{i \in N}p_i(R') - \sum_{i \in N_{-+}}WP(A'_i,0;R'_i) + \sum_{i \in N_{+-}}WP(A^{vcg}_i,0;R'_i) \\
& \textrm{(since $A'_i$ is not acceptable, Lemma \ref{lem:ir} implies $p_i(R')=0$ for all $i \in N_{+-}$)} \\
&= \sum_{i \in N}p_i(R') + \sum_{i \in N} WP(A^{vcg}_i,0;R'_i) - \sum_{i \in N} WP(A'_i,0;R'_i) \\
&> \sum_{i \in N}p_i(R'),
\end{align*}
where the last inequality follows from Claim \ref{cl:scl3}.

Hence, $Z$ Pareto dominates the outcome $(f(R'),p(R'))$, contradicting Pareto efficiency. We now proceed to the next step
to show that the payment in $(f,\mathbf{p})$ must also coincide with the generalized VCG outcome. \\

\noindent {\sc Payment is GVCG payment.} Fix a preference profile $R$. We now know that
$$f(R) \in \arg \max_{A \in \mathcal{X}} \sum_{i \in N}WP(A_i,0;R_i).$$
By Lemma \ref{lem:ir}, for every $i \in N$, if $f_i(R)=f^{vcg}_i(R)$ is not acceptable
for agent $i$, then $p_i(R)=p^{vcg}_i(R)=0$ - here, we assume, without loss of generality,
that $f(R')=f^{vcg}(R')$ for all $R'$.\footnote{Depending on how we break ties
for choosing a maximum in the maximization of sum of willingness to pay, we have a different generalized
VCG mechanism. This assumption ensures that we pick the generalized VCG mechanism
that breaks the ties the same way as $f$.} We now consider two cases. \\

\noindent {\sc Case 1.} Assume for contradiction that there exists $i \in N$ such that $f_i(R)$ is an acceptable bundle of agent $i$ and
\begin{align}\label{eq:con1}
p_i(R) > \max_{A \in \mathcal{X}} \sum_{j \ne i}WP(A_j,0;R_j) - \sum_{j \ne i}WP(f_j(R),0;R_j).
\end{align}
Now consider $R'_i$ with the set of acceptable bundles the same in $R_i$ and $R'_i$ but $WP(f_i(R),0;R'_i) < p_i(R)$ but arbitrarily
close to $p_i(R)$. Let $A' \equiv f(R'_i,R_{-i})$. We argue that $A'_i$ is an acceptable bundle (at $R'_i$). If not, then
$$\max_{A \in \mathcal{X}} \sum_{j \ne i}WP(A_j,0;R_j) \ge \sum_{j \ne i}WP(A'_j,0;R_j) = WP(A'_i,0;R'_i) + \sum_{j \ne i}WP(A'_j,0,R_j),$$
where we used the fact that $A'_i$ is not an acceptable bundle for $i$. But then, by construction of $R'_i$ and Inequality (\ref{eq:con1}),
we get
$$WP(f_i(R),0;R'_i) + \sum_{j \ne i}WP(f_j(R),0;R_j) > \max_{A \in \mathcal{X}} \sum_{j \ne i}WP(A_j,0;R_j) \ge WP(A'_i,0;R'_i) + \sum_{j \ne i}WP(A'_j,0,R_j),$$
which is a contradiction to our earlier step that $f$ is the same allocation as in the GVCG mechanism.
Hence, $A'_i$ is an acceptable bundle at $R'_i$. But, then $p_i(R)=p_i(R'_i,R_{-i})$ by DSIC (since
$f_i(R)$ is also an acceptable bundle at $R_i$ and the set of acceptable bundles at $R_i$ and $R'_i$ are the same).
Since $WP(A'_i,0;R'_i) < p_i(R)=p_i(R'_i,R_{-i})$, we get a contradiction to individual rationality. \\

\noindent {\sc Case 2.} Assume for contradiction that there exists $i \in N$ such that $f_i(R)$ is an acceptable bundle of agent $i$ and
\begin{align*}
p_i(R)  < p^{vcg}_i(R) = \max_{A \in \mathcal{X}} \sum_{j \ne i}WP(A_j,0;R_j) - \sum_{j \ne i}WP(f_j(R),0;R_j).
\end{align*}
Pick $R'_i$ such that the set of acceptable bundles at $R'_i$ and $R_i$ are the same but $WP(f_i(R),0;R'_i) \in (p_i(R),p^{vcg}_i(R))$.
Notice that if $f_i(R'_i,R_{-i})$ is not an acceptable bundle at $R'_i$, then his payment is zero (Lemma \ref{lem:ir}). In that case,
$WP(f_i(R),0;R'_i) > p_i(R)$ implies that $$(f_i(R),p_i(R))~P'_i~(\emptyset,0)~I'_i~(f_i(R'_i,R_{-i}),p_i(R'_i,R_{-i})),$$ contradicting DSIC. Hence, $f_i(R'_i,R_{-i})=f^{vcg}_i(R'_i,R_{-i})$
is an acceptable bundle at $R'_i$. This implies that $f^{vcg}_i(R'_i,R_{-i})$ is an acceptable bundle at $R'_i$. Since the generalized VCG
is DSIC, we get that $p^{vcg}_i(R)=p^{vcg}_i(R'_i,R_{-i})$. But $WP(f^{vcg}_i(R'_i,R_{-i}),0;R'_i) < p^{vcg}_i(R)=p^{vcg}_i(R'_i,R_{-i})$ is
a contradiction to IR of the generalized VCG. This completes the proof.
\end{proof}

\subsection{Proof of Theorem \ref{theo:robustu}}

\begin{proof}
Assume for contradiction that $(f,\mathbf{p})$ is a desirable mechanism on $\mathcal{T}^n$.
By heterogeneous demand, there exist objects $a$ and $b$ such that
$0 < WP(a,0;R_0) < WP(b,0;R_0)$.
Consider a preference profile $R \in \mathcal{T}^n$ as follows:

\begin{enumerate}

\item Agent $1$ has quasilinear dichotomous preference with $\mathcal{S}_i^{min}=\{\{a,b\}\}$ and
value $w_1(0)$ that satisfies
\begin{align}\label{eq:wep3}
WP(\{a,b\},0;R_0) < w_1(0) < WP(\{a\},0;R_0)+WP(\{b\},0;R_0).
\end{align}

\item $R_i=R_0$ for all $i \in \{2,3\}$.

\item If $m > 2$, agent $4$ has quasilinear dichotomous preference with acceptable bundle $M \setminus \{a,b\}$
and value {\em very high}. If $m = 2$, agent $4$ has quasilinear dichotomous preference with acceptable
bundle $M$ and value equals to $\epsilon$, which is very close to zero.

\item For all $i > 4$, let $R_i$ be a quasilinear dichotomous preference with $\mathcal{S}^{min}_i=\{M\}$
and value equals to $\epsilon$, which is very close to zero.

\end{enumerate}

We begin by a useful claim.
\begin{claim}
\label{cl:use1}
Pick $k \in \{2,3\}$ and $x \in \{a,b\}$.
Let $R'$ be a preference profile such that
$R'_i = R_i$ for all $i \ne k$. Suppose $R'_k$ is such that
\begin{align}
\label{eq:as1}
WP(\{x\},0;R'_k) + WP(\{a,b\} \setminus \{x\},0;R_0) > w_1(0) > WP(\{a,b\},0;R'_k).
\end{align}
Then, the following are true:
\begin{enumerate}
\item $f_1(R')=\emptyset$

\item $f_2(R') \cup f_3(R') = \{a,b\}$

\item $f_2(R') \ne \emptyset$ and $f_3(R') \ne \emptyset$.

\end{enumerate}
\end{claim}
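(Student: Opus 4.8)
The plan is to follow the template of the proof of Theorem~\ref{theo:impos}: first \emph{localize} the allocation at $R'$, showing that in $f(R')$ the two objects $a,b$ are shared only among agents $1,k$ and $\ell$, where $\{k,\ell\}=\{2,3\}$ (so $R'_\ell=R_0$), and then run a short case analysis on which of $1,k,\ell$ gets what. Write $y:=\{a,b\}\setminus\{x\}$ and abbreviate $v:=WP(\{x\},0;R'_k)$, $v'':=WP(y,0;R_0)$ and $v_k:=WP(\{a,b\},0;R'_k)$, so hypothesis~(\ref{eq:as1}) reads $v+v''>w_1(0)>v_k$; I will also use from~(\ref{eq:wep3}) and heterogeneous demand that $0<WP(\{a\},0;R_0)<WP(\{b\},0;R_0)$ and $WP(\{a,b\},0;R_0)<w_1(0)<WP(\{a\},0;R_0)+WP(\{b\},0;R_0)$. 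Throughout, Lemmas~\ref{lem:ir} and~\ref{lem:ir2} apply since $(f,\mathbf{p})$ is assumed desirable, and for the (dichotomous) agents $1,4$ and $i\ge5$ I may assume without loss, as in the proof of Theorem~\ref{theo:unique}, that such an agent receives her unique minimal acceptable bundle when she wins and $\emptyset$ otherwise (she pays $0$ and is indifferent to $(\emptyset,0)$ when she does not win).

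The first block is localization. (i) When $m>2$, agent $4$ receives all of $M\setminus\{a,b\}$. Agent $4$'s value $V$ on this bundle is chosen ``very high'', exceeding the total willingness to pay of all other agents for the grand bundle $M$. If $M\setminus\{a,b\}\not\subseteq f_4(R')$ then $f_4(R')$ is not acceptable, so $p_4(R')=0$ by Lemma~\ref{lem:ir}; the outcome profile that hands agent $4$ the pair $(M\setminus\{a,b\},V-\delta)$ for small $\delta>0$, removes the objects of $M\setminus\{a,b\}$ from whoever else held them, and lowers those agents' payments just enough to keep each indifferent --- the reduction for agent $i$ being at most $WP(f_i(R'),0;R_i)\le WP(M,0;R_i)$ by Lemma~\ref{lem:ir2} --- leaves everyone weakly better off while raising total transfer by roughly $V$, contradicting Pareto efficiency. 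So all objects outside $\{a,b\}$ sit with agent $4$, which after normalization holds exactly $M\setminus\{a,b\}$ when $m>2$ and $\emptyset$ when $m=2$ (the tiny value $\epsilon$ there makes her a filler agent, excluded from holding $M$ as below). (ii) No object of $\{a,b\}$ is held by an agent outside $\{1,k,\ell\}$: such an agent is either a loser (paying $0$, indifferent to $(\emptyset,0)$) whose object can be moved to a competing agent who values it strictly --- agent $\ell$ under $R_0$, or agent $k$ under $R'_k$, or agent $1$ once $\{a,b\}$ is completed --- raising total transfer; or, only when $m=2$, holds $M=\{a,b\}$, excluded by the Pareto improvement that instead sells $\{a,b\}$ to agent $1$ at price $w_1(0)-\delta$ (feasible since $w_1(0)>WP(\{a,b\},0;R_0)>0$ while the competitor's value is only $\epsilon$) and compensates her to indifference. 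Hence $\{a,b\}\subseteq f_1(R')\cup f_k(R')\cup f_\ell(R')$, agents $k,\ell$ hold only objects of $\{a,b\}$, and $f_1(R')$ is $\emptyset$ or (by its normalization) $\{a,b\}$.

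The second block is the case analysis. (a) If $f_1(R')=\{a,b\}$, then $f_k(R')=f_\ell(R')=\emptyset$, so $p_k(R')=p_\ell(R')=0$; replacing agent $1$'s outcome by $(\emptyset,p_1(R')-w_1(0))$ --- weakly better, since $R_1$ is quasilinear and $p_1(R')\le w_1(0)$ by Lemma~\ref{lem:ir2} --- while giving $x$ to $k$ at price $v$ and $y$ to $\ell$ at price $v''$ keeps every agent indifferent and changes total transfer by $v+v''-w_1(0)>0$, contradicting Pareto efficiency; hence $f_1(R')=\emptyset$, which is conclusion~(1). (b) With $f_1(R')=\emptyset$, localization forces $f_k(R')\cup f_\ell(R')=\{a,b\}$, that is, conclusion~(2). (c) For conclusion~(3), suppose one of $k,\ell$ gets all of $\{a,b\}$ and the other $\emptyset$; the loser pays $0$; reallocate $x$ to $k$ at price $v$ and leave $y$ with the agent who keeps it, dropping that agent's payment to the indifference level --- by Lemma~\ref{lem:ir2} and monotonicity this drop is bounded, and~(\ref{eq:as1}) together with $w_1(0)>WP(\{a,b\},0;R_0)$ makes the net change in total transfer strictly positive (when $\ell$ keeps $y$, this is where strict positive income effect / strict decreasing marginal WP of $R_0$ is invoked to bound the compensation). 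This contradicts Pareto efficiency, so $f_k(R'),f_\ell(R')\neq\emptyset$.

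I expect the main obstacle to be the bookkeeping in the localization step and in case~(c): one must show that the payment reductions needed to compensate agents who lose objects are uniformly bounded --- by their willingness to pay (Lemma~\ref{lem:ir2}), hence by $WP(M,0;R_i)$ --- so that the large value $V$ (respectively, the slack in~(\ref{eq:as1})) makes the reallocation strictly transfer-improving; the delicate point is that $R_0$ need not be dichotomous, so ``removing an object'' changes willingness to pay in a way that must be controlled via free disposal and the strict positive income effect / strict decreasing marginal WP hypotheses rather than the clean dichotomous formula. A minor additional care point is the degenerate case $v=WP(\{x\},0;R'_k)=0$, where giving $x$ to agent $k$ is useless and one instead keeps agent $k$ fixed and routes the object to agent $\ell$, using that $v''>0$ and that~(\ref{eq:as1}) then reads $v''>w_1(0)>v_k$.
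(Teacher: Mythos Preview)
Your localization step and cases (a)--(b) mirror the paper's proof and are essentially correct. The gap is in case (c), and it is a real one.

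In part (3), the paper does \emph{not} try to construct a direct Pareto improvement from an arbitrary outcome $(\{a,b\},p_2(R'))$. Instead it first uses DSIC to pin down $p_2(R')$ exactly: it lets agent~$2$ deviate to a quasilinear dichotomous preference $\tilde{R}_2$ with unique minimal acceptable bundle $\{a,b\}$ and value $WP(\{a,b\},0;R'_2)$; at $(\tilde{R}_2,R'_{-2})$ agent~$1$'s higher value $w_1(0)>WP(\{a,b\},0;R'_2)$ forces $f_2=\emptyset$ by Pareto efficiency, so $p_2=0$ there, and the incentive constraint back to $R'$ plus IR forces $p_2(R')=WP(\{a,b\},0;R'_2)$. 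With agent~$2$ sitting exactly at her IR boundary, $(\{a,b\},p_2(R'))\,I'_2\,(\emptyset,0)$, and the reallocation $k\mapsto(\{x\},v)$, $\ell\mapsto(\{y\},v'')$ keeps \emph{both} agents~$2,3$ indifferent to $(\emptyset,0)$; the net transfer change is then simply $v+v''-WP(\{a,b\},0;R'_2)>w_1(0)-WP(\{a,b\},0;R'_2)>0$ by~(\ref{eq:as1}) and~(\ref{eq:wep3}).

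Your route instead tries to drop the payment of the agent holding $\{a,b\}$ ``to the indifference level'' after removing one object, and to bound that drop using Lemma~\ref{lem:ir2}, free disposal, strict positive income effect, and strict decreasing marginal WP. But none of those hypotheses controls the required compensation. Lemma~\ref{lem:ir2} only gives $0\le p_2(R')\le WP(\{a,b\},0;R'_2)$, and if $p_2(R')$ is strictly below that upper bound, the amount $p_2(R')-p'_2$ needed for $(\{y\},p'_2)\,I'_2\,(\{a,b\},p_2(R'))$ depends on the indifference map of $R'_2$ at \emph{nonzero} transfer levels. The paper's strict decreasing marginal WP is stated only at transfer $0$, and its strict positive income effect is stated only for \emph{singleton} bundles; neither says anything about the difference $WP(\{a,b\},t;R'_2)-WP(\{y\},t;R'_2)$ for $t\ne 0$, which is exactly what governs your compensation. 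So the inequality you need (``net change strictly positive'') does not follow from the available assumptions. The missing idea is precisely the DSIC step that forces the holder of $\{a,b\}$ to her IR boundary before you attempt the Pareto improvement.
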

\begin{proof}
It is without loss of generality (due to Pareto efficiency) that $f_i(R') = \emptyset$ or $f_i(R') \in \mathcal{S}^{min}_i$ for all
$i$ who has dichotomous preference. Since $\epsilon$ is very close to zero, Pareto efficiency implies that (a) if $m=2$, $f_i(R')=\emptyset$ for all $i > 3$;
and (b) if $m > 2$, since agent $4$ has very high value for $M \setminus \{a,b\}$, $f_4(R')=M \setminus \{a,b\}$ and $f_i(R')=\emptyset$
for all $i > 4$. Hence, agents $1,2,$ and $3$ will be
allocated $\{a,b\}$ at $R'$. Denote $y \equiv \{a,b\} \setminus \{x\}$ and $\ell \equiv \{2,3\} \setminus \{k\}$. \\

\noindent {\sc Proof of (1) and (2).} Assume for contradiction $f_1(R') \ne \emptyset$. Pareto efficiency implies that $f_1(R')=\{a,b\}$
and $f_2(R')=f_3(R')=\emptyset$. Lemma \ref{lem:ir} implies that $p_2(R')=p_3(R')=0$. Then, consider
the following outcome:
$$z_1:= \Big(\emptyset,p_1(R')-w_1(0)\Big),~z_k:=\Big(\{x\},WP(\{x\},0;R'_k)\Big),~z_{\ell}:=\Big(\{y\},WP(\{y\},0;R'_{\ell})\Big),$$
$$z_i:=\Big(f_i(R'),p_i(R')\Big)~\forall~i > 3.$$
By definition of willingness to pay, $z_i~I_i~(\emptyset,0) \equiv \Big(f_i(R'),p_i(R')\Big)$ for all $i \in \{2,3\}$.
Since agent $1$ has quasilinear preferences, she is also indifferent between $z_1$ and $\Big(\{a,b\}, p_1(R')\Big) \equiv \Big(f_1(R'),p_1(R')\Big)$.
Thus, the difference in total
payment between the outcome $z$ and the payment in $(f,\mathbf{p})$ at $R'$ is
\begin{align*}
WP(\{x\},0;R'_k)+WP(\{y\},0;R'_{\ell}) - w_1(0) = WP(\{x\},0;R'_k)+WP(\{y\},0;R_0) - w_1(0) > 0,
\end{align*}
where the inequality follows from Inequality (\ref{eq:as1}). This is a contradiction to Pareto efficiency of $(f,\mathbf{p})$. Hence, $f_1(R)=\emptyset$.
By Pareto efficiency, $f_2(R') \cup f_3(R') = \{a,b\}$. \\

\noindent {\sc Proof of (3).}
Now, we show that $f_2(R') \ne \emptyset$ and $f_3(R') \ne \emptyset$.
Suppose $f_3(R')=\emptyset$. Then, $f_2(R')=\{a,b\}$ and Lemma \ref{lem:ir} implies that $p_3(R')=0$.
We first argue that $p_2(R')=WP(\{a,b\},0;R'_2)$. To see this, consider a quasilinear dichotomous preference $\tilde{R}_2$
with acceptable bundle $\{a,b\}$ and value equal to $WP(\{a,b\},0;R'_2)$. Notice that $w_1(0) > WP(\{a,b\},0;R'_2)$ - if $k=2$,
then this is true by Inequality (\ref{eq:as1}) and if $\ell=2$, then $R'_{\ell}=R_0$ satisfies $w_1(0) > WP(\{a,b\},0;R_0)$ by Inequality (\ref{eq:wep3}).
Since agents $1$ and $2$ have the same acceptable
bundle at $(\tilde{R}_2,R'_{-2})$ but $w_1(0) > WP(\{a,b\},0;R'_2)$, this implies that (due to Pareto efficiency), $f_2(\tilde{R}_2,R'_{-2})=\emptyset$
and $p_2(\tilde{R}_2,R'_{-2})=0$ (Lemma \ref{lem:ir}).
By DSIC, $(\emptyset,0)~\tilde{R}_2~(\{a,b\},p_2(R')).$ This implies that $WP(\{a,b\},0;R'_2) \le p_2(R')$. IR of agent $2$
at $R'$ implies $WP(\{a,b\},0;R'_2) = p_2(R')$.

Next, consider the following
outcome
$$z'_k:=(\{x\},WP(\{x\},0;R'_k), z'_{\ell}:=(\{y\},WP(\{y\},0;R'_{\ell}), z'_i:=(f_i(R'),p_i(R'))~\forall~i \notin \{2,3\}.$$
By definition, for every agent $i$, $z'_i~I'_i~(f_i(R'),p_i(R'))$.
The difference between the sum of payments of agents in $z'$ and $(f,\mathbf{p})$ at $R$ is:
\begin{align*}
WP(\{x\},0;R'_k)+WP(\{y\},0;R'_{\ell}) - p_2(R') &= WP(\{x\},0;R'_k)+WP(\{y\},0;R_0) - WP(\{a,b\},0;R'_2) \\
&> w_1(0) - WP(\{a,b\},0;R'_2) \\
&> 0,
\end{align*}
where the first inequality follows from Inequality (\ref{eq:as1}) and the second inequality follows from Inequality (\ref{eq:as1})
if $k=2$ and from Inequality (\ref{eq:wep3}) if $\ell=2$.
This contradicts Pareto efficiency of $(f,\mathbf{p})$. A similar proof shows that $f_2(R') \ne \emptyset$.
\end{proof}

Now, pick any $k \in \{2,3\}$ and set $R'_k=R_0$ in Claim \ref{cl:use1}. By Inequality (\ref{eq:wep3}), Inequality (\ref{eq:as1})
holds for $R_0$. As a result, we get that $f_2(R) \ne \emptyset, f_3(R) \ne \emptyset$, and $f_2(R) \cup f_3(R)= \{a,b\}$.
Hence, without loss of generality, assume that $f_2(R)=\{a\}$ and $f_3(R)=\{b\}$.\footnote{Since we have assumed $WP(\{b\},0;R_0) > WP(\{a\},0;R_0)$, this
may appear to be with loss of generality. However, if we have $f_2(R)=\{b\}$ and $f_3(R)=\{a\}$, then
we will swap $2$ and $3$ in the entire argument following this.}
We now complete the proof in two steps. \\

\noindent {\sc Step 1.} We argue that $p_2(R)=w_1(0) - WP(\{b\},0;R_0)$ and $p_3(R)=w_1(0) - WP(\{a\},0;R_0)$.
Suppose $p_2(R) > w_1(0) - WP(\{b\},0;R_0)$. Then, consider the quasilinear dichotomous preference $R^Q_2$ such that the minimum acceptable bundle
of agent $2$ is $\{a\}$ and his value $v$ satisfies
\begin{align}\label{eq:wpe2}
w_1(0) - WP(\{b\},0;R_0) < v < p_2(R).
\end{align}

Now, note that by IR of agent $2$ at $R$, we have $$p_2(R) \le WP(\{a\},0;R_0) \le WP(\{a,b\},0;R_0) < w_1(0),$$
where the strict inequality followed from Inequality (\ref{eq:wep3}).
Hence, $v < w_1(0)$ and $w_1(0) < v+WP(\{b\},0;R_0)$ by Inequality (\ref{eq:wpe2}). Hence, choosing
$k=2,$ $ x=a$ and $R'_k=R_2^Q$, we can apply Claim \ref{cl:use1} to conclude that $f_2(R^Q_2,R_{-2}) \cup f_3(R^Q_2,R_{-2}) = \{a,b\}$
and $f_2(R^Q_2,R_{-2}) \ne \emptyset$, $f_3(R^Q_2,R_{-2}) \ne \emptyset$. Since $R^Q_2$ is a dichotomous preference
with acceptable bundle $\{a\}$, Pareto efficiency implies that $f_2(R^Q_2)=\{a\}=f_2(R)$.
By DSIC, $p_2(R)=p_2(R^Q_2,R_{-2})$. But Inequality (\ref{eq:wpe2}) gives $v < p_2(R)=p_2(R_2^Q,R_{-2})$, and this contradicts individual rationality.

Next, suppose $p_2(R) < w_1(0) - WP(\{b\},0;R_0)$. Then, consider the quasilinear dichotomous preference $\hat{R}^Q_2$ such that
the minimal acceptable bundle of agent $2$ is $\{a\}$ and his value $\hat{v}$ satisfies
\begin{align}\label{eq:dd1}
p_2(R) < \hat{v} < w_1(0)-WP(\{b\},0;R_0).
\end{align}
Now, consider the preference profile $\hat{R}$ such that $\hat{R}_2=\hat{R}_2^Q$ and $\hat{R}_i=R_i$ for all $i \ne 2$.
We first argue that $f_2(\hat{R})=\emptyset$. Suppose not, then by Pareto efficiency, $f_2(\hat{R})=\{a\}$.
By Pareto efficiency, we have $f_3(\hat{R})=\{b\}$ and $f_1(\hat{R})= \emptyset$.
By Lemma \ref{lem:ir}, $p_1(\hat{R})=0$. We argue that $p_3(\hat{R})=WP(\{b\},0;R_0)$.
To see this, consider a profile $\hat{R}'$ where $\hat{R}'_i=\hat{R}_i$ for all $i \ne 3$
and $\hat{R}'_3$ is a quasilinear dichotomous preferences with minimum acceptable bundle $\{b\}$ and value
equal to $WP(\{b\},0;R_0)$ - notice that every agent in $\hat{R}'$ has quasilinear preference. As a result, Theorem \ref{theo:unique}
implies that the outcome of $(f,\mathbf{p})$ at $\hat{R}'$ must coincide with the GVCG mechanism. But $w_1(0) > \hat{v} +WP(\{b\},0;R_0)$
implies that $f_1(\hat{R}')=\{a,b\}$ and $f_2(\hat{R}')=f_3(\hat{R'})=\emptyset$. Then, DSIC implies that (incentive
constraint of agent $3$ from $\hat{R}'$ to $\hat{R}$) ~$0 \ge WP(\{b\},0;R_0) - p_3(\hat{R})$. By individual
rationality of agent $3$ at $\hat{R}$ we get, $p_3(\hat{R}) \le WP(\{b\},0;R_0)$, and combining these we get $p_3(\hat{R})=WP(\{b\},0;R_0)$.

Now, consider the following allocation vector $\hat{z}$:
$$\hat{z}_1:= \Big(\{a,b\},w_1(0)\Big), \hat{z}_2:= \Big(\emptyset, p_2(\hat{R}) - \hat{v}\Big), \hat{z}_3 := \Big(\emptyset, 0\Big),$$
$$\hat{z}_i := \Big(f_i(\hat{R}),p_i(\hat{R})\Big)~\forall~i > 3.$$
By definition of $w_1(0)$, we get that $\hat{z}_1~\hat{I}_1~(\emptyset,0)$. Also, since $\hat{R}_2$ is quasilinear with value $\hat{v}$, we get
$(\emptyset, p_2(\hat{R})-\hat{v})~\hat{I}_2~(\{a\},p_2(\hat{R})).$ For agent $3$, notice that $R_3=R_0$ and by the definition of willingness to pay,
we get
$(\emptyset,0)~\hat{I}_3~\Big(\{b\},WP(\{b\},0;R_0)\Big)$.
For $i > 3$, each agent $i$ gets the same outcome in $\hat{z}$ and $(f,\mathbf{p})$. Finally, the sum of payments
of agents $1, 2$, and $3$ (payments of other agents remain unchanged) in $\hat{z}$ is
$$w_1(0) + p_2(\hat{R}) - \hat{v} > p_2(\hat{R}) + p_3(\hat{R}),$$
where the strict inequality follows from Inequality (\ref{eq:dd1}) and the fact that $p_3(\hat{R})=WP(\{b\},0;R_0)$. This contradicts the fact that
$(f,\mathbf{p})$ is Pareto efficient.

Hence, we must have $f_2(\hat{R})=\emptyset$. By Lemma \ref{lem:ir}, we have $p_2(\hat{R})=0$. But since $v > p_2(R)$, we get
$(\{a\},p_2(R))~\hat{P}_2~(\emptyset,0)$. Hence, $(f_2(R),p_2(R))~\hat{P}_2~(f_2(\hat{R}),p_2(\hat{R}))$. This contradicts DSIC.

An identical argument establishes that $p_3(R)=w_1(0) - WP(\{a\},0;R_0)$. \\

\noindent {\sc Step 2.} In this step, we show that agent $2$ can manipulate at $R$, thus contradicting DSIC and completing the proof.
Consider a quasilinear dichotomous preference $\bar{R}^Q_2$ where the minimum acceptable bundle of agent $2$ is $\{b\}$ (note that
$f_2(R)=\{a\}$) and his value $\bar{v}$ is $WP(\{b\},0;R_0)$. Consider the preference profile $\bar{R}$ where $\bar{R}_2=\bar{R}^Q_2$ and $\bar{R}_i=R_i$ for all $i \ne 2$.
Notice that if we let $k=2,$  $x = b$, and $R'_k=\bar{R}^Q_2$, Inequality (\ref{eq:as1}) holds, and hence, Claim \ref{cl:use1} implies
that $f_2(\bar{R}) \ne \emptyset$ and $f_3(\bar{R}) \ne \emptyset$ but $f_2(\bar{R}) \cup f_3(\bar{R})= \{a,b\}$.
Hence, Pareto efficiency implies that $f_2(\bar{R})=\{b\}$ and $f_3(\bar{R})=\{a\}$.
Then, we can mimic the argument in Step 1 to conclude that $$p_2(\bar{R})=w_1(0) - WP(\{a\},0;R_0).$$

Now, by the definition of willingness to pay,
$$\Big(\{b\},WP(\{b\},0;R_0)\Big)~I_0~\Big(\{a\},WP(\{a\},0;R_0)\Big)$$ and by our assumption,
$WP(\{b\},0;R_0) > WP(\{a\},0;R_0)$. By subtracting $WP(\{a\},0;R_0)+WP(\{b\},0;R_0) - w_1(0)$ (which
is positive by Inequality (\ref{eq:wep3})) from payments on both sides, and using the fact that $R_0$ satisfies strict positive income effect, we get
$$\Big(\{b\},w_1(0) - WP(\{a\},0;R_0)\Big)~P_0~\Big(\{a\},w_1(0) - WP(\{b\},0;R_0)\Big).$$
Hence, $(f_2(\bar{R}),p_2(\bar{R}))~P_2~(f_2(R),p_2(R))$.
This contradicts DSIC.
\end{proof}

%\bibliographystyle{ecta}
%\bibliography{order}

\end{document}